\newcommand{\Desc}[2]{\State \makebox[2em][l]{#1}#2}
\def\BState{\State\hskip-\ALG@thistlm}
\theoremstyle{plain}
\newtheorem{assumption}{Assumption}
\newenvironment{assumption*}
{\ifnum\value{subassumption}=0 \stepcounter{assumption}\fi\subassumption}
{\endsubassumption}
\newenvironment{assumption+}[1]
{\subassumption}
{\endsubassumption}
\theoremstyle{definition}
\newtheorem{assump}{Assumption}
\newtheorem{definition}{Definition}
\newtheorem*{definition*}{Definition}
\newtheorem{example}{Example}
\newtheorem{lemma}{Lemma}
\newtheorem{proposition}{Proposition}
\newenvironment{continued}[1][continued]{\begin{trivlist}
\item[\hskip \labelsep {\bfseries #1}]}{\end{trivlist}}
\numberwithin{equation}{section}
\newcommand{\norm}[1]{\left\lVert#1\right\rVert}
\newcommand{\Lmap}{\mathcal{L}}
\begin{document}
\title[Vector Lorenz Curves]{Lorenz map, inequality ordering and curves based on multidimensional rearrangements}
\author{Yanqin Fan, Marc Henry, Brendan Pass and Jorge A. Rivero}
\address{University of Washington, Penn State, University of Alberta and
University of Washington}
\thanks{The first version is of March 11, 2022. This version is of \today. The authors are grateful to Isaiah Andrews, Marco Scarsini, Xiaoxia Shi, John Weymark and three anonymous referees for helpful comments.
Corresponding author: Marc Henry: \texttt{marc.henry@psu.edu}
Department of Economics, The Pennsylvania State University,
University Park, PA 16802.}

\begin{abstract}
We propose a multivariate extension of the Lorenz curve based on multivariate rearrangements of optimal transport theory. We define a vector Lorenz map as the integral of the vector quantile map associated with a multivariate resource allocation. Each component of the Lorenz map is the cumulative share of each resource, as in the traditional univariate case. The pointwise ordering of such Lorenz maps defines a new multivariate majorization order, which is equivalent to preference by any social planner with inequality averse multivariate rank dependent social evaluation functional. We define a family of multi-attribute Gini index and complete ordering based on the Lorenz map. 
We propose the level sets of an Inverse Lorenz Function as a practical tool to visualize and compare inequality in two dimensions, and apply it to income-wealth inequality in the United States between 1989 and 2022.
\vskip20pt

\noindent\textit{Keywords}: Multidimensional inequality, Lorenz curve, Gini index, vector quantiles, optimal transport, majorization

\vskip10pt

\noindent\textit{JEL codes}: D63

\end{abstract}

\maketitle


\section*{Introduction}

The Lorenz curve, first proposed in \cite{lorenz:1905}, is a compelling visual and simple quantification tool for the analysis of dispersion in univariate distributions. It allows easy visualization of dispersion from the curvature of a convex curve and its distance from the diagonal. The diagonal itself is the Lorenz curve of a degenerate distribution-- an egalitarian allocation where all individuals have the same amount of resource. It also enables quick computations, reading off the curve, as it were, of the share of a resource held by the top or bottom of the allocation distribution for that resource. These features of the Lorenz curve account for much of its enduring appeal among practitioners, policy analysts and policy makers. This appeal is further enhanced by the relation between majorization and the pointwise ordering of Lorenz curves, which provides a way to visualize inequality comparisons between populations and within a given population between time periods. Comprehensive accounts are given in \cite{marshall:1979} and \cite{arnold:2018:book}.

The appealing properties of the Lorenz curve are well captured by the formulation given in \cite{gastwirth:1971}. In that formulation, the Lorenz curve is the graph of the Lorenz map, and the latter is the cumulative share of individuals below a given rank in the distribution, i.e., the normalized integral of the quantile function. The relation to majorization and the convex order follows immediately, as shown in section~C of \cite{marshall:1979}. As pointed out by \cite{arnold:110}, this makes the Lorenz ordering an uncontroversial partial inequality ordering of univariate distributions, and most open questions concern the higher dimensional case.

Dispersion in multivariate distributions is not adequately described by the Lorenz curve of each marginal, and a genuinely multidimensional approach is needed. Even for utilitarian welfare inequality, \cite{AB:1982} motivate the need for the multidimensional  approach initiated by \cite{fisher:1956}. More generally, the literature on multidimensional inequality of outcomes and its measurement is vast, as evidenced by many recent surveys, see for instance \cite{DL:2012}, \cite{AB:2014}, \cite{AZ:2020}. We only discuss it insofar as it relates to the Lorenz curve.

Multivariate extensions have been proposed for the Lorenz curve, most notably \citeauthor{Taguchi:72a} (\citeyear{Taguchi:72a}, \citeyear{Taguchi:72b}), \cite{arnold:1983}, and \citeauthor{koshevoy:1996} (\citeyear{koshevoy:1996}, \citeyear{koshevoy:1999})\footnote{More recently, subsequent to our work, \cite{HM:2022} also adopt a multivariate rearrangement approach to the definition of multi-attribute Lorenz curves. They adopt a center-outward approach (see \cite{HBCM:2021}), which is better suited to define notions of middle class.}. They are reviewed in \cite{marshall:1979} and \cite{SJ:2014} and discussed in more details in section~\ref{sec:comparison}, where we compare them to our proposal. We contribute to this literature with a vector version of the \cite{gastwirth:1971} formulation of the Lorenz curve. We provide an implementable criterion to measure and compare inequality in multivariate distributions, which emulates the features of the Lorenz curve that most contributed to its success.

The traditional \cite{gastwirth:1971} formulation of the Lorenz curve is an integrated quantile over the lowest ranked individuals. To simplify the argument in the univariate case, model the population as a continuum on~$[0,1]$ and suppose the distribution of incomes in the population is continuous. Then the \cite{gastwirth:1971} formulation can be thought of involving two stages. Take an income allocation~$Y$, which is a random variable on~$\mathbb R_+$ with cumulative distribution function~$F_Y$. First, reorder individuals in the population so that they are ranked in increasing incomes. Then compute the cumulative share of lowest ranked individuals by integrating~$F_Y^{-1}$ from~$0$ to~$r$ and dividing by the mean. The first step involves the probability integral transform~$F_Y(Y)$, which should be thought of in this context as a cardinal to ordinal transformation, since~$F_Y(Y)$ is uniform on~$[0,1]$, so cardinal information is purged, but~$F_Y$ is increasing, so ordinal information is preserved. Our proposal is based on a multivariate version of the cardinal to ordinal transformation involved in the first stage. The latter is the unique map that transforms a~$d$ dimensional allocation into a uniform one on~$[0,1]^d$, and is cyclically monotone\footnote{Existence and uniqueness are shown in \cite{mccann:1995}. See section~\ref{sec:lmaps} for details and definitions.}
and hence preserves ordinal information. This motivates our definition of the vector Lorenz map as the cumulative integral of the multivariate quantile of \cite{CGHH:2017}.

The vector Lorenz map we propose, therefore, is the vector of shares of each resource held by individuals below a given rank. The associated Lorenz inequality dominance criterion deems a multivariate allocation more equal if this share of resources is larger for each rank. Hence, our proposal shares the interpretation of the traditional Lorenz curve and Lorenz dominance. It also shares the desirable properties of the Lorenz curve and dominance ordering. Like the Lorenz zonoid of \citeauthor{koshevoy:1996} (\citeyear{koshevoy:1996}, \citeyear{koshevoy:1999}), it characterizes the distribution of an allocation (see section~\ref{sec:comparison} for a definition and discussion). Unlike the Lorenz zonoid, the vector Lorenz map we propose can be efficiently computed as an unconstrained convex optimization problem and connected to recent developments in computational optimal transport theory\footnote{An account of recent advances is given in \cite{peyre:2018}.}. Hence, the Lorenz dominance order we propose is an implementable inequality dominance criterion. Using recent advances on the asymptotic properties of multivariate quantiles, surveyed in \cite{Hallin:2022}, our Lorenz dominance criterion can be the basis for inequality dominance testing that accounts for sampling uncertainty. This contrasts our proposal with the growing literature on multivariate inequality dominance criteria proposed for finite populations. See for instance \cite{GM:2012}, \cite{banerjee:2016}, \cite{FG:2021} and references within.  

Other implementable inequality dominance criteria are proposed in the literature, in \cite{koshevoy:1995}, \cite{koshevoy:1996}, \cite{koshevoy:2007} and \cite{banerjee:2016} and other references surveyed in \cite{arnold:2018:book}. However, they do not provide an equivalence between the Lorenz dominance criterion and a class of compatible social evaluation functionals. An exception is \cite{GM:2012} and \cite{FG:2021} who give a comprehensive treatment of the special case of a finite population with a single cardinal transferable attribute combined with an ordinal non transferable one. We characterize the class of social evaluation functionals that are inequality averse in the sense that they are increasing in the Lorenz dominance order. We build on the multivariate extension of the \cite{Quiggin:92}-\cite{Yaari:87} rank dependent decision theory in \cite{GH:2012} to show that, as in \cite{weymark:1981} for the univariate case, social evaluation functionals are inequality averse if and only if they are rank dependent social evaluation functionals with attribute specific weights decreasing in ranks. We also characterize the class of transfers that increase inequality according to the Lorenz dominance criterion as rank preserving transfers of any attribute from a lower to a higher ranked individual. A special case of such transfers, which we call {\em monotone regressive transfers} weakly increase marginal inequality and dependence between attributes.

To visualize Lorenz dominance, we define an {\em Inverse Lorenz Function} at a given vector of resource shares as the fraction of the population that cumulatively holds those shares. It is characterized by the cumulative distribution function of the image of a uniform random vector by the Lorenz map. Hence, it is a cumulative distribution function by construction, like the univariate inverse Lorenz curve. In two dimensions, the~$\alpha$-level sets of this cumulative distribution function, which we call~$\alpha$-Lorenz curves, are non crossing downward sloping curves that shift to the south-west when inequality increases, as defined by the Lorenz ordering. For the cases, where allocations are not ranked in the Lorenz inequality dominance ordering, we propose a family of multivariate S-Gini coefficients based on our vector Lorenz map, with the flexibility to entertain different tastes for inequality in different dimensions. Finally, we propose an illustration to the analysis of income-wealth inequality in the United States between 1989 and 2022.


\subsubsection*{Plan of the paper}

In the first section, we define the Lorenz map, explain its computation, detail its properties and how it compares with alternative proposals. In section~\ref{sec:ineqorder}, we introduce the Lorenz dominance ordering, its characterization in terms of classes of social evaluation functionals and in terms of transfers compatible with it. Section~\ref{sec:appli} illustrates the implementation of our proposed tools, and the final section concludes.


\section{Vector Lorenz Map}

\subsection{Definition of the Lorenz map}
\label{sec:lmaps}

The Lorenz curve was originally proposed in \cite{lorenz:1905} to provide a graphical representation of inequality of distribution of a single resource. Let~$Y$ be a random variable on~$\mathbb R_+$ with cumulative distribution function~$F_Y$, which represents the allocation of a resource in a population. The population is modeled as the continuum~$[0,1]$. 

The Lorenz curve is traditionally defined as the set of points in~$[0,1]^2$, parameterized by~$y$, with coordinates
\begin{eqnarray}
\label{eq:trad}
    \left( F_Y(y),\frac{1}{\mu_Y}\int_0^yvdF_Y(v) \right),
\end{eqnarray}where~$\mu_Y$ is the expectation of~$Y$. See for instance page~149 of \cite{arnold:2018:book}. 
\cite{gastwirth:1971} points out that the Lorenz curve is given by the graph of the map on~$[0,1]$
\begin{eqnarray}
\label{eq:scalar}
\begin{array}{ccccc}
q &\mapsto & L_Y(q) & = & \frac{1}{\mu_Y} \int_0^qF_Y^{-1}(v) \, dv,
\end{array}
\end{eqnarray}
where~$F^{-1}(v):=\inf\{y: v\leq F_Y(y)\}$ is the traditional quantile function.
Formulation~(\ref{eq:scalar}) provides a closed form expression and simple interpretation: For each proportion~$q\in[0,1]$, the Lorenz map gives the cumulative share of the resource held by the poorest proportion~$q$ of the population. This relies on the well known fact that the quantile function~$F_Y^{-1}$ is the only increasing function such that for any uniformly distributed random variable~$V$ on~$[0,1]$, $F_Y^{-1}(V)$ is distributed identically to~$Y$. Hence, integrating~$F_Y^{-1}$ from~$0$ to~$q$ and normalizing produces the cumulative share held by the individuals ranked below~$q$.

Conversely, when~$F_Y$ admits a density, the probability integral transform~$V:=F_Y(Y)$ produces a uniformly distributed random variable~$V$ on~$[0,1]$, which preserves the ranks of individuals in the population. This holds because~$F_Y$ is an increasing map. Hence, the probability integral transform removes cardinal information (by producing a uniformly distributed outcome), while preserving ordinal information (by keeping the rank order of individuals in the population). The probability integral transform~$V:=F_Y(Y)$ is the rank associated with allocation~$Y$.

If~$F_Y$ is not continuous, then~$F_Y(Y)$ is no longer uniformly distributed (positive masses of individuals have identical ranks). However, it is still the case that for any uniformly distributed random variable~$V$ on~$[0,1]$, $F_Y^{-1}(V)=\inf\{y: V\leq F_Y(y)\}$ is distributed identically to~$Y$. Hence, the closed form solution for the Lorenz map~(\ref{eq:scalar}) still holds with the same interpretation: Integrating~$F_Y^{-1}$ from~$0$ to~$q$ and normalizing still produces the cumulative share held by the individuals ranked below~$q$.

Consider now an allocation~$X:=(X_1,\ldots,X_d)$ of~$d$ resources in the population. To analyze inequality in allocation~$X$, we can first look at inequality in each marginal allocation~$X_1,\ldots,X_d$, using the univariate Lorenz curves~$L_1:=L_{X_1},\ldots,L_d:=L_{X_d}$. However, this strategy disregards the effect of dependence. The latter is relevant to inequality, as can be trivially illustrated by the fact that for given wealth and income marginal allocations, the comonotonic allocation (the wealthier individuals have higher income) is more unequal than the admittedly unrealistic counter-monotonic allocation (the wealthier individuals have lower income).

To take dependence into account, we propose to emulate the 
\cite{gastwirth:1971} formulation by measuring cumulative shares of each resource for all individuals, {\em below a certain rank}. Conceptually, this is achieved in two steps. First, we find a transformation that removes cardinal information while preserving individual's ranking in the population, i.e., a cardinal to ordinal transformation. Then we integrate the shares of individuals with lowest rank. The difficulty here, of course, is the absence of a canonical order in~$\mathbb R^d$ to define the rank. 

As noted in \cite{FR:2017}, by Borel's Isomorphism Theorem\footnote{See for instance section~13.1 page~487 of \cite{Dudley:2002}.}, there exist measurable bijective maps~$T:[0,1]\rightarrow\mathbb R^d$ such that for any uniformly distributed random variable~$V$ on~$[0,1]$, $T(V)$ is distributed identically to~$X$. However, such maps are unsuitable cardinal to ordinal transformations for two main reasons. First, there is no known explicit construction, hence no way to compute them. Second, even if we could compute such a map, its choice would imply an implicit ad hoc aggregation of the different resources in allocation~$X$ in order to arrive at a scalar ranking of individuals in the population.

In order to avoid an implicit ad hoc aggregation of the different resources in~$X$, the cardinal to ordinal transformation must be between~$\mathbb R^d$ and~$[0,1]^d$. Hence, we model the population as a continuum on~$[0,1]^d$ and individual ranks are points in~$[0,1]^d$. The multivariate quantile transform, and its inverse (the cardinal to ordinal transform, or rank transform), must satisfy the same requirements as in the univariate case: It must map the uniform distribution (no cardinal information) to the distribution of the allocation, and it must be monotonic (so as to preserve ordinal information). The monotonicity of the quantile in the univariate case ensures that the cardinal to ordinal transformation does indeed preserve the rankings of individuals in the population. 

To construct an analogue of the \cite{gastwirth:1971} Lorenz curve formulation, we therefore need the cardinal to ordinal transformation to satisfy a form of multivariate monotonicity. The classical notion of monotonicity in~$\mathbb R^d$, also known as $2$-monotonicity of a map~$T:\mathbb R^d\rightarrow\mathbb R^d$, requires
\begin{eqnarray*}
    \left( T(x^\prime)-T(x) \right)^\top\left( x^\prime-x \right) & \geq & 0
\end{eqnarray*}
for any pair of vectors~$x,x^\prime\in\mathbb R^d$. It can be interpreted as monotonicity on average. For uniqueness of the cardinal to ordinal transformation, we need the stronger version of monotonicity, called {\em cyclical monotonicity}, which characterizes the gradients of convex functions and was introduced by \cite{Rockafellar:66}. Cyclical monotonicity requires
\begin{eqnarray*}
    \sum_{i=1}^K (T(x_i)-T(x_{i+1}))^\top\left( x_i-x_{i+1} \right) & \geq & 0
\end{eqnarray*}
for any~$K$, and any collection of vectors~$(x_1,\ldots,x_K)$, setting~$x_{K+1}=x_1$. Cyclical monotonicity also characterizes maps~$T$ that minimize distortion in the sense that in case the allocation~$X$ has continuous distribution with finite variance, $T$ minimizes~$\mathbb E\| X-T(X)\|^2$ among all the maps such that~$T(X)$ is uniformly distributed on~$[0,1]^d$.

The following definition summarizes the properties needed for a cardinal to ordinal transformation as a first step in the Lorenz map construction.
\begin{definition}[Vector quantile]
\label{def:VQ}
A vector quantile~$Q_X$ associated with random vector~$X$ on~$\mathbb R^d$ is a map~$Q_X:[0,1]^d\rightarrow\mathbb R^d$ with the following properties.
\begin{enumerate}
   \item For any uniformly distributed random variable~$U$ on~$[0,1]^d$, $Q_X(U)$ is distributed identically to~$X$.
    \item If the distribution of~$X$ is absolutely continuous, $Q_X$ is invertible and~$Q_X^{-1}(X)$ is uniformly distributed on~$[0,1]^d$.
    \item The map~$Q_X$ is cyclically monotone.
    \item When~$d=1$, $Q_X$ is the traditional quantile function (This is automatically satisfied when~(1) and~(3) hold).
    
\end{enumerate}
\end{definition}
As shown in \cite{mccann:1995}, there exists a transformation that conforms with definition~\ref{def:VQ} and it is unique in the sense that two such transformations are equal almost everywhere. It is proposed as a vector quantile notion in \cite{CGHH:2017}, and we will refer to it as the vector quantile associated with~$X$. 

Once we model the population as a continuum on~$[0,1]^d$, interpret each point on~$[0,1]^d$ as a rank, and define the vector quantile~$Q_X$ as a multidimensional rearrangement of the allocation~$X$ in rank order, we simply integrate the quantile over the lowest ranks to define a multivariate version of the \cite{gastwirth:1971} formulation of the Lorenz curve.

\begin{definition}[Lorenz map]
\label{def:Lorenz}
Let~$U$ be a uniformly distributed random vector on~$[0,1]^d$, and let~$X:=(X_1,\ldots,X_d)$ be an allocation, i.e., a random vector on~$\mathbb R_+^d$ with finite mean~$\mu=(\mu_1,\ldots,\mu_d)$. Call~$\tilde X$ the normalized version of~$X$, i.e.,
\begin{eqnarray*}
    \tilde X:=\left( \frac{X_1}{\mu_1},\ldots,\frac{X_d}{\mu_d}\right),
\end{eqnarray*}
and let~$Q_{\tilde X}$ be the vector quantile of~$\tilde X$.
The \emph{Lorenz map} of allocation $X$ is the
vector-valued function $\Lmap_X:[0,1]^d \to [0,1]^d$ defined for each~$r:=(r_1,\ldots,r_d)\in[0,1]^d$ by 
\begin{eqnarray}
\label{eq:Lorenz}
\Lmap_X(r_1,\ldots,r_d) & = & \int_0^{r_1}\!\!\!\cdots\!\int_0^{r_d} Q_{\tilde X}(u_1,\ldots,u_d)du_1\ldots du_d.
\end{eqnarray}
\end{definition}

The transformation of~$X$ into its normalized version~$\tilde X$ prior to integrating the vector quantile is required to remove dependence of the Lorenz map of definition~\ref{def:Lorenz} on units of measurements. Different resources, such as earnings and health, may not be measured with the same units of measurement. The transformation into~$\tilde X$ makes the allocation unit free. Hence the Lorenz map satisfies ratio-scale invariance (i.e., invariance to rescaling of the different attributes, or change of units of measurement). Section~\ref{sec:norm} discusses an alternative unnormalized version of the definition in the spirit of \cite{Shorrocks:83}. 

When~$X$ has absolutely continuous distribution~$P_X$, its quantile function is~$P_X$-almost everywhere invertible (see for instance theorem~2.1 in \cite{CGHH:2017}). In that case, the transformation~$U=Q_X^{-1}(X)$ is the vector analogue of the probability integral transform~$V=F_Y(Y)$ discussed above. The random vector~$U=Q_X^{-1}(X)$ is uniformly distributed on~$[0,1]^d$, and is the vector rank of the individual with endowment~$X$, in the terminology of \cite{CGHH:2017}. The Lorenz map of definition~\ref{def:Lorenz} can then be rewritten as:
\begin{eqnarray}
\label{eq:KM}
\Lmap_X(r) & = & \mathbb{E}\left[ \tilde X\;\mathds{1}\left\{Q_{\tilde X}^{-1}(\tilde X)\leq r\right\}\right].
\end{eqnarray}
This clarifies the interpretation of~$\Lmap_X(r)$ as the cumulative share of all individuals with vector rank below~$r$ in the partial order of~$\mathbb R^d$. 

In the scalar case discussed above, inverting the Lorenz curve~$L_Y$ defined in~(\ref{eq:scalar})
yields the inverse Lorenz curve 
\begin{eqnarray}
\label{eq:scalar-ilf}
\begin{array}{ccccccc}
L_Y^{-1}(y) & = & \int_0^{L_Y^{-1}(y)} dv & = & \int_0^1 \, \mathds{1}\{L_Y(v) \leq y\} \, dv & = & \mathbb P(L_Y(V)\leq y),
\end{array}
\end{eqnarray}
where the probability is taken with respect to a uniformly distributed random variable~$V$ on~$[0,1]$.
The scalar inverse Lorenz curve at~$y$ is therefore shown in~(\ref{eq:scalar-ilf}) to be equal to the maximum proportion of the population with cumulative share of the resource equal to~$y$. In the vector case, the analogue of the right-hand side of~(\ref{eq:scalar-ilf}) can still be used to define an Inverse Lorenz Function.
\begin{definition}[Inverse Lorenz Function]
\label{def:ILS} The Inverse Lorenz Function (ILF) of a random
vector~$X$ is the function $l_X:[0,1]^d \to [0,1]$ defined for each~$z=(z_1,\ldots,z_d)\in[0,1]^d$ by~$l_X(z):=\mathbb P(\Lmap_X(U)\leq z),$
where~$z=(z_1,\ldots,z_d)\in[0,1]^d$, inequality~$\leq$ is understood component-wise, and the probability is taken with respect to the uniform random vector~$U$ on~$[0,1]^d$.
\end{definition}
The expression above is no longer the mathematical inverse of the Lorenz map~$\Lmap_X$, but it can still be interpreted as the share of the population with cumulative shares of all resources equal to a predetermined proportion~$z=(z_1,\ldots,z_d)$.

\subsection{Computation and examples}
\label{sec:comp}

\subsubsection{Computation}

We now give a step-by-step method to compute the Lorenz map of a discrete distribution, which may be the allocation in a finite population, or the empirical distribution of a (possibly weighted) sample from an underlying (possibly mixed discrete-continuous) distribution. The full algorithm and a step-by-step guide to implementation in \emph{R} are given in appendix~\ref{sec:algo}. 

Let~$X$ be a random vector in~$\mathbb R_+^d$ with discrete distribution. The probability mass function of the distribution of~$X$ is given by~$\{(x^1,w_1),\ldots,(x^n,w_n)\}$, where~$x^1,\ldots,x^n$ are vectors in~$\mathbb R_+^d$ and~$w_1,\ldots,w_n$ are positive scalar weights summing to~$1$.

\begin{enumerate}

    \item First, normalize the allocation vector~$X$ and form~$\tilde X:=(X_1/\mu_1,\ldots,X_d/\mu_d)$, where~$X_j$ is the~$j$-th coordinate of~$X$ and~$\mu_j=\Sigma_{i=1}^nw_ix_j^i$ is the mean of~$X_j$, for each~$j=1,\ldots,d$. The issue of normalization is discussed in section~\ref{sec:norm}.

    \item Then compute the vector quantile~$Q_{\tilde X}$ of~$\tilde X$. According to definition~\ref{def:VQ} (requirements~(2) and~(4)), the vector quantile~$Q_{\tilde X}$ must satisfy the following requirements:
    \begin{enumerate}
        \item For any uniformly distributed random variable~$U$ on~$[0,1]^d$, $Q_{\tilde X}(U)$ is distributed identically to~$\tilde X$. Hence:
        \begin{enumerate}
            \item For all~$u\in[0,1]^d$, $Q_{\tilde X}(u)\in\{x^1,\ldots,x^n\}$;
            \item For all~$i=1,\ldots,n,$ 
        \begin{eqnarray}\label{eq:vqcon2}
        W_i & : = & Q_{\tilde X}^{-1}(x^i)=\{u\in[0,1]^d: Q_{\tilde X}(u)=x^i \}
        \end{eqnarray}
        has measure~$w_i$.
        \end{enumerate}
        
        \item The map~$Q_{\tilde X}$ is cyclically monotone. Hence, by \cite{Rockafellar:66}, there is a convex function~$\psi_{\tilde X}:[0,1]^d\rightarrow\mathbb R$ such that~$Q_{\tilde X}$ is almost everywhere equal to the gradient of~$\psi_{\tilde X}$. 
    \end{enumerate}
    Since~$Q_{\tilde X}$ takes a finite number of values and is constant and equal to~$x^i$ on each~$W_i$, the computation of~$Q_{\tilde X}$ is equivalent to the computation of the partition of~$[0,1]^d$ in regions~$W_1,\ldots,W_n$. As~$Q_{\tilde X}$ is the gradient of the convex function~$\psi_{\tilde X}$ and is constant on each of the~$W_i$, $\psi_{\tilde X}$ is affine on each of the~$W_i$, and each~$W_i$ is a convex polytope in~$[0,1]^d$. \cite{AHA:1998} show
    \begin{eqnarray*}
        \psi_{\tilde X}(u) & = & \max_{i=1,\ldots,n}\{u^\top x^i - h^i \},
    \end{eqnarray*}
    where~$h:=(h^1,\ldots,h^n)$ solves the convex optimization program
    \begin{eqnarray}
    \label{eq:Newton}
        \min \; \left\{\sum_{i=1}^n  w_ih^i + \int_{[0,1]^d} \max_{k=1,\ldots,n}\{u^\top x^k - h^k \} du \right\}.
    \end{eqnarray}
     

    The algorithm minimizes~(\ref{eq:Newton}) to find~$h$, from which the 
    regions~$W_i$ are obtained as
    \begin{eqnarray}\label{eq:cxcells}
        W_i^h & = & \{u\in[0,1]^d: u^\top x^i-h^i\geq u^\top x^j-h^j,1\leq j\leq n\}.
    \end{eqnarray}
    The first order conditions of~\eqref{eq:Newton} fulfill requirement~(2)(a)(ii), i.e., for all $i = 1,\dots,n$,   
    \begin{eqnarray}\label{eq:constraint}
         \lambda(W_i^h) & = & w_i,
    \end{eqnarray}
 where~$\lambda$ is the Lebesgue measure. Finally, the vector quantile~$Q_{\tilde X}$ is the piecewise constant map that takes values~$x^i$ on each~$W_i:=W_i^h$, ~$i=1,\ldots,n$.
    
\item Once we have computed the vector quantile map~$Q_{\tilde X}$, the Lorenz map at~$r:=(r_1,\ldots,r_n)\in[0,1]^d$ is obtained straightforwardly as the integral of the piece-wise constant map~$Q_{\tilde X}$ over~$[0,r]:=[0,r_1]\times\ldots\times[0,r_d]$:  
\begin{eqnarray}\label{eq:EmpLorenzMap}
\Lmap_X(r) & = & \sum_{i=1}^n \lambda\left( W_i\cap [0,r]\right)\;x^i,
\end{eqnarray}
where the term in $\lambda$ is the ordinary area of the convex polytope formed by the intersection of the cell $W_i$ and the rectangle $[0,r]$.

\item Finally, $\Lmap_X$ can be used to generate a pseudo sample~$\{\Lmap_X(U_1),\ldots,\Lmap_X(U_m)\}$, where~$\{U_1,\dots,U_m\}$ is a uniformly distributed random sample or any pseudo-random (a.k.a. minimum discrepancy) sequence that approximates the uniform distribution on~$[0,1]^d$. The Inverse Lorenz Function~$l_X$ can then be approximated with the empirical distribution of this pseudo-sample:
\begin{eqnarray}\label{eq:EmpILF}
l_m(z) & := & \frac{1}{m} \sum_{j = 1}^m \mathds{1}\{\Lmap_X(U_j) \leq z\},
\>\>\> z \in [0,1]^d.
\end{eqnarray}
\end{enumerate}

\subsubsection{Examples}

To illustrate the definition and the computation of the Lorenz map, we now explore examples of specific allocations and compute the corresponding Lorenz maps. First, we illustrate the computation of the Lorenz map for a discrete allocation. 

\begin{example}[Discrete allocations]
\label{ex:discrete}
Let~$X$ be the allocation with probability mass function~$\{(x^1,1/n),\ldots,(x^n,1/n)\}$. We select the support points~$(x^1,\ldots,x^n)$ as the realizations of~$n$ i.i.d. draws from the bivariate standard normal distribution. The vector quantile~$Q_{\tilde X}$ of the normalized allocation~$\tilde X$ is characterized by its value~$x^i$ on the convex polygon~$W_i$, such that~$(W_1,\ldots,W_n)$ form the partition of~$[0,1]^2$ shown on the left panel of figure~\ref{fig:discrete}. As shown on the right panel of figure~\ref{fig:discrete}, the Lorenz map at~$r=(r_1,r_2)\in[0,1]^2$ is equal to the sum of the~$x^i$'s times the area of~$W_i$ intersected with~$[0,r_1]\times[0,r_2]$.
\end{example}

Next, we consider the special case, where all individuals are endowed with the same quantity of resources. 

\begin{example}[Identical allocations]
\label{ex:egal}
Let~$X$ be the constant allocation~$X=(1,1,...,1)$. Then~$Q_X (u)=(1 , 1,...,1)$ for all~$u\in \lbrack
0,1]^{d}$, so that~$\Lmap_X(u)$ is a~$d$-vector with identical entries~$u_{1}u_{2}\cdots u_d$. 
The image of~$\Lmap_X$ is the diagonal in~$[0,1]^d$. 
The Inverse Lorenz Function~$l_X(z)$ of~$X$ is~$0$ when~$z_1z_2\cdots z_d=0$. For~$d\geq 1$ and~$(z_{1},z_{2},...z_d)\in (0,1]^{d}$, and letting~$\underline z:=\min \{z_{1},z_{2},...,z_d\}$, the Inverse Lorenz Function~$l_X(z)$ of~$X$ is  
\begin{eqnarray*}
l_X(z) & = & \mathbb{P}(U_{1}U_{2}\cdots U_d\leq z_{1},U_1U_2\cdots U_d\leq z_{2},...,U_1U_2\cdots U_d\leq z_{d}) \\
& = & \mathbb{P}(U_1U_2\cdots U_d\leq \underline z) \\
& = & \underline z\sum_{k=1}^d \frac{(-1)^{k-1}}{(k-1)!}[\log(\underline z)]^{k-1}.
\end{eqnarray*}
\end{example}

\begin{figure}[H]
\centering
\includegraphics[width = 0.5\textwidth]{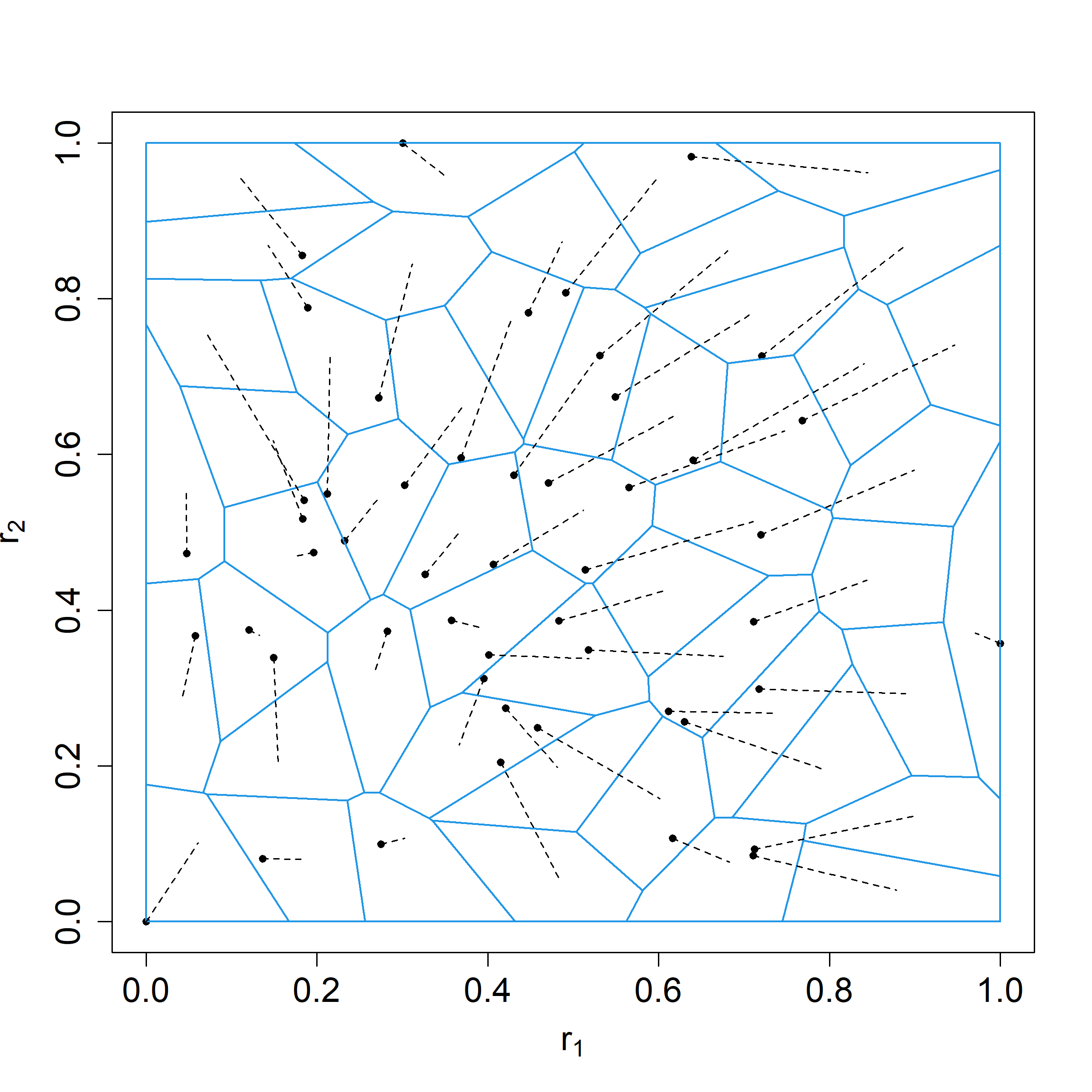}\includegraphics[width = 0.5\textwidth]{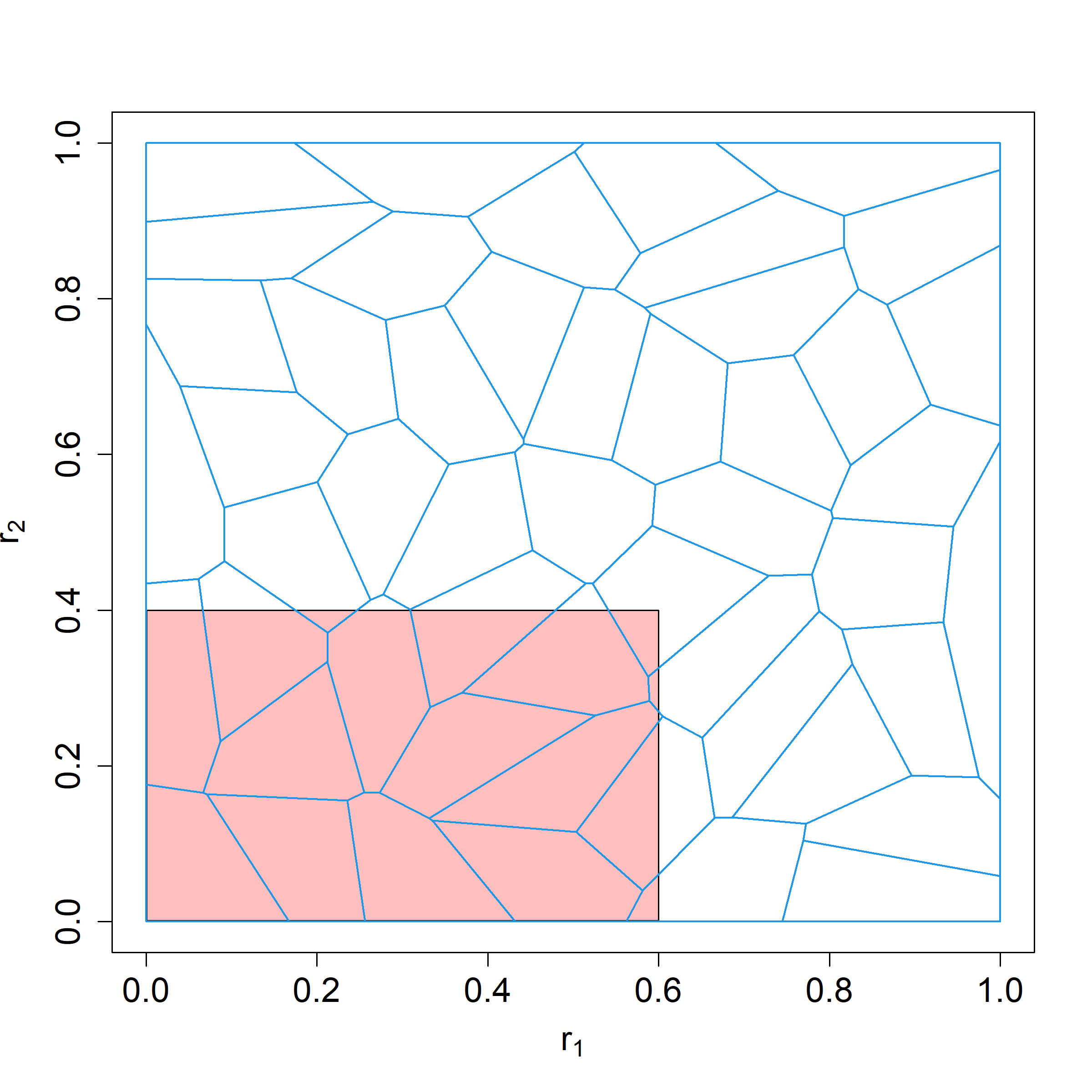} 
\caption{The left panel is a visualization of the quantile map. On each element of the blue partition of~$[0,1]^2$, the quantile map takes the value of the sample point. The sample point is indicated by a black dot and it is connected to the corresponding region by a dashed line. Note that sample points are drawn from the standard bivariate normal, then shifted and rescaled to fit into~$[0,1]^2$. The right panel shows a visualization of the computation of the Lorenz map at~$r:=(0.6,0.4)$. The latter is equal to the sum of areas of the intersection between the red shaded area and the elements of the blue partition weighted by the corresponding sample realization.}
\label{fig:discrete}
\end{figure}

We also check that our definition is compatible with scalar definitions when all resources are independently distributed.

\begin{example}[Independent Resources]
\label{ex:ind}
Let the components~$X_1,...,X_d$ of $X$ be independent with marginal
Lorenz curves~$L_{1},...,L_d$, respectively. Then, the $i$th component of the Lorenz map is $L_i(r_i) \Pi_{j=1,j\neq i}^d r_j$. %
This expression of the Lorenz map has the following interpretation. Consider the first component~$r_2\cdots r_dL_1(r_1)$. The share of resource~$1$ held by people with multivariate rank in~$[0,r_1]\times [0,1]^{d-1}$ is the marginal share, equal to the marginal Lorenz curve. Since the resources are independent, this share is uniformly distributed along the other dimensions, so that people with ranks in~$[0,r_1]\times[0,r_2] \times \cdots \times [0,r_d]$ command a share $r_2\cdots r_dL_1(r_1)$. The other components are interpreted analogously. When $d=2$ and ~$r_1=
1 $, the Lorenz map takes values~$\Lmap_X(1,r_2) = (r_2,L_2(r_2))$. That is,
the image of~$\{(1,r_2): 0 \leq r_2 \leq 1\}$ under~$\Lmap_X$ is the marginal Lorenz curve~$L_2$
of the second resource~$X_2$ (and symmetrically when~$r_2=1$).

The Inverse Lorenz Function~$l_X(z)$ of allocation~$X$ with independent components is 
\begin{eqnarray*}
l_X(z) = \mathbb{P}(\, U_{2}\cdots U_dL_1(U_{1})\leq z_1, U_{1}U_3\cdots U_dL_2(U_{2})\leq z_2,....,U_1\cdots U_{d-1} L_d(U_d)\leq z_d) \\
= \int_{[0,1]^{d-1}}\min\left\{ \frac{z_1}{L_1(u_1)\Pi_{k\neq 1,d}u_k}, \frac{z_2}{L_2(u_2)\Pi_{k\neq 2,d}u_k} ,...,l_d\left(\frac{z_d}{\Pi_{k\neq d} u_k}\right) \right\}du_1...du_{d-1},
\end{eqnarray*}
where~$l_d$ is the univariate inverse Lorenz curve of~$X_d$.
\end{example}

Next, we derive the Lorenz map in the case of allocations~$X=(X_1,X_2)$ with the same components, i.e.,~$X_1=X_2$ almost surely. We stick to~$d=2$ for notational simplicity.

\begin{example}[Comonotonic Resources]
\label{ex:comon}
Consider bivariate comonotonic allocations. Let the components~$X_1$ and~$X_2$ of the allocation~$X$ be almost surely equal. Then,~$X_1$ and~$X_2$ have identical distributions. 
Since the distribution of $X=(X_1,X_2)$ concentrates on the line $x_1=x_2$, the vector quantile depends on $u$ only through $u_1+u_2$ (it is an index cost in the terminology of \cite{CMP:2017}). More precisely, it is $(u_1,u_2)\mapsto (\psi'(u_1+u_2),\psi'(u_1+u_2))$, where~$z \mapsto \psi'(z)$ is the optimal transport map from~$\sigma$ to the distribution of~$X_1$, where~$\sigma
$ has density on $[0,2]$ given by $1-|1-z|$.
Each component of the Lorenz curve is then given by
\begin{eqnarray*}
\Lmap_1(r_1,r_2) = \Lmap_2(r_1,r_2) &=&\int_0^{r_2}\int_0^{r_1}\psi'(u_1+u_2)du_1du_2 =\int_0^{r_2}[\psi(u_2+r_1) -\psi(u_2)]du_2.
\end{eqnarray*}
In case~$X_1$ and~$X_2$ are uniformly distributed on~$[0,2]$, the optimal transport map~$\psi'$ for~$z <1 $ is given by $\psi'(z) = z^2$, so that $\psi(z) =z^3/3$.  We then have,
	\begin{equation*}
	\Lmap_1(r_1,r_2) = \frac{r_1^3r_2}{3}+\frac{r_1r_2^3}{3}+\frac{r_1^2r_2^2}{2},
	\end{equation*}
when~$r_1+r_2\leq 1$, and 
\begin{eqnarray*}
\Lmap_1(r_1,r_2) = \, \frac{2}{3}(r_1+r_2)^3-\frac{1}{12}(r_1+r_2)^4-(r_1+r_2)^2-\frac{r_1^4}{12}-\frac{r_2^4}{12}+\frac{2}{3}(r_1+r_2)-\frac{1}{6},
\end{eqnarray*}
when~$r_1+r_2>1$. The image of this Lorenz map is once again the diagonal in $[0,1]^2$.

Letting~$R~U[0,1]^2$, the Inverse Lorenz Function~$l_X(z)$ of allocation~$X=(X_1,X_2)$ with~$X_2=X_1$ almost surely, is
\begin{eqnarray*}
l_X\left( z\right) &=&\mathbb{P} \left( \Lmap_{1}\left( R\right) \leq z_{1},\Lmap_{2}\left(
R\right) \leq z_{2}\right) \\
&=&\mathbb{P} \left( \Lmap_{1}\left( R\right) \leq \min \left\{ z_{1},z_{2}\right\}
\right) \\
&=& h\left( \min \left\{ z_{1},z_{2}\right\} \right) ,
\end{eqnarray*}
where $\Lmap_j(R)$ is the~$j$-th component,~$j=1,2$, of~$\Lmap_X(R)$, and~$h$ is the distribution function of~$\Lmap_{1}\left( R\right) $. 
\end{example}

\subsection{Relative scale and normalization}
\label{sec:norm}

Let~$X$ be the original allocation. Normalizing~$X$ into~$\tilde X$ as in definition~\ref{def:Lorenz} by dividing each component by its mean, removes any sensitivity to (changes in) units of measurements. It is a standard approach to achieve ratio-scale invariance. See for instance \citeauthor{banerjee:2010} (\citeyear{banerjee:2010}, \citeyear{banerjee:2016}). However, by construction, it comes with the disadvantage of removing scale effects. In the univariate case, \cite{Shorrocks:83} proposes to eschew normalization in order to take scale effects into account in the measurement of inequality. 

When units of measurement are not a concern, an alternative definition without the feature described above is defined as
\begin{eqnarray*}
    \int_0^{r_1}\!\!\!\cdots\!\int_0^{r_d} Q_X(u_1,\ldots,u_d)du_1\ldots du_d,
\end{eqnarray*}
where~$Q_X$ is the vector quantile of the original allocation~$X$ (not the normalized one). 
This alternative version also allows the weighting of different resources according to a priori importance to overall inequality. 
Call~$X^\epsilon=(\lambda_1(\epsilon)X_1,\ldots,\lambda_d(\epsilon)X_d)$ the suitably rescaled version of the initial allocation~$X$. Define the sequence of weights~$(\lambda_1(\epsilon),\ldots,\lambda_d(\epsilon))$ in such a way that~$\lambda_{k+1}(\epsilon)/\lambda_k(\epsilon)\rightarrow0$ as~$\epsilon\rightarrow0$. In this way, resources are ordered in decreasing importance to inequality, and we can entertain the extreme lexicographic case, where~$\epsilon\rightarrow0$.

It follows from \cite{CGS:2010} that, when~$X$ has an absolutely continuous distribution, as~$\epsilon$ tends to~$0$, the alternative Lorenz map tends to the map
\begin{eqnarray*}
    \int_0^{r_1}\!\!\!\cdots\!\int_0^{r_d} Q^{KR}_X(u_1,\ldots,u_d)du_1\ldots du_d,
\end{eqnarray*}
where~$Q_X^{KR}$ is the inverse of the Knothe-Rosenblatt transform~$T^{KR}:\mathbb R^d\mapsto[0,1]^d$ of the original allocation~$X$ proposed by \cite{Rosenblatt:52} and \cite{Knothe:57}, and defined as follows
\begin{eqnarray*}
    T^{KR}(x_1,x_2,\ldots,x_d) & := & 
    \left[ 
    \begin{array}{c}
    F_{X_1}(x_1) \\
    F_{X_2\vert X_1}(x_2\vert x_1) \\
    \vdots \\
    F_{X_d\vert X_1,\ldots,X_{d-1}}(x_d\vert x_1,\ldots,x_{d-1})
    \end{array}
    \right].
\end{eqnarray*}
The Knothe-Rosenblatt quantile map is the only multivariate quantile map from the uniform on~$[0,1]^d$ to~$\mathbb R^d$ proposed in the literature other than optimal transport based vector quantiles as in definition~\ref{def:VQ}. The result above shows that the Knothe-Rosenblatt quantile is not a good alternative to vector quantiles of definition~\ref{def:VQ} to base an integrated quantile definition for the Lorenz map, since it relies on an a priori lexicographic ordering of the different resources in the allocation.

\subsection{Properties and comparisons with other multivariate Lorenz concepts}

In this section, we detail previous proposals for multivariate extensions of the Lorenz curve and list the properties that distinguish our proposal from the former. 

\subsubsection{Alternative multivariate Lorenz proposals}
\label{sec:comparison}

Until now, the development of multivariate extensions of the Lorenz curve was hampered by the lack of simple multivariate analogues of ranks and quantiles. Early proposals for bivariate extensions of the Lorenz curve in \citeauthor{Taguchi:72a} (\citeyear{Taguchi:72a},\citeyear{Taguchi:72b}) and \citeauthor{arnold:1983} (\citeyear{arnold:1983},\citeyear{arnold:1987}) are based on a direct ad-hoc extension of the traditional formula given in~(\ref{eq:trad}).
Let~$(x_1,x_2)\mapsto F(x_1,x_2)$ be the CDF of a bivariate allocation with density~$f$ and mean~$(\mu_1,\mu_2)$. \citeauthor{Taguchi:72a} (\citeyear{Taguchi:72a},\citeyear{Taguchi:72b}) proposes the bivariate Lorenz surface $L:[0,1]^2\rightarrow[0,1]$ defined implicitly by~$(s,t,L(s,t)):=$
\begin{eqnarray}
\label{eq:Taguchi}
    \left ( \, F(x_1,x_2) \, , \, \frac{1}{\mu_1}\int_0^{x_1}\!\!\!\int_0^{x_2}u_1f(u)du \, , \, \frac{1}{\mu_2}\int_0^{x_1}\!\!\!\int_0^{x_2}u_2f(u)du \, \right).
\end{eqnarray}
In order to treat both dimensions of the allocation symmetrically, \citeauthor{arnold:1983} (\citeyear{arnold:1983},\citeyear{arnold:1987}) proposes the alternative Lorenz surface parameterized by~$(x_1,x_2)$ as the set of points
\begin{eqnarray}
\label{eq:Arnold}
    \left( \, F_1(x_1) \, , \, F_2(x_2) \, , \, \frac{1}{\mu_{12}}\int_0^{x_1}\!\!\!\int_0^{x_2}u_1u_2f(u)du \, \right),
\end{eqnarray}
where~$F_1$ and~$F_2$ are the marginal CDFs associated with~$F$, and~$\mu_{12}$ is the expectation of the product~$X_1X_2$.
A closed form solution, given in \cite{SJ:2014}, makes the Lorenz surface~(\ref{eq:Arnold}) amenable to parameterization and statistical analysis. However, it does not share the interpretation or any of the properties of the univariate Lorenz curve. 

A more successful proposal in that respect, is the Lorenz zonoid of \cite{koshevoy:1996}. Again, take~(\ref{eq:trad}) in the univariate case as the point of departure. It associates a fraction~$p$ of the population to the share of the resource collectively held by the poorest fraction~$p$ of the population. \cite{koshevoy:1996} eschew the need to order the population by associating with a fraction~$p$ of the population the share of resources held by any group of individuals making up a fraction~$p$ of the population, poor, rich, or mixed. The lower bound is the share held by the poorest individuals (the traditional Lorenz curve), and the upper bound is the share held by the richest individuals (a reverse Lorenz curve). The Lorenz zonoid is defined in \cite{koshevoy:1996} as the collection of all such shares for each fraction of the population. It is a convex region in~$[0,1]^2$ bounded below by the Lorenz curve and above by the reverse Lorenz curve. More precisely, the Lorenz zonoid is defined as the set of points
\begin{eqnarray*}
    L(Y) & := & \left\{ \left( \, \int_0^\infty \phi(v)dF_Y(v) \, , \, \frac{1}{\mu_Y}\int_0^\infty v\phi(v)dF_Y(v) \, \right) \, : \; \phi\in\Phi \right\},
\end{eqnarray*}
where the function~$\phi$ ranges over the set~$\Phi$ of measurable functions from~$\mathbb R_+$ to~$[0,1]$. The lower (resp. upper) bound is obtained with the collection of functions~$\phi(v):=\mathds 1\{v\leq y\}$ (resp. $\mathds 1\{v> y\}$), $y\in\mathbb R_+$.

Since the definition of the Lorenz zonoid does not rely on ranks or quantiles, the extension to higher dimensions is straightforward. Let~$\Phi$ now be the set of measurable functions from~$\mathbb R_+^d$ to~$[0,1]$ and~$X=(X_1,\ldots,X_d)$ be a multivariate allocation with CDF~$F$ and mean~$(\mu_1,\ldots,\mu_d)$. The Lorenz zonoid of \cite{koshevoy:1996} is defined as the set of points~$L(X):=$
\begin{eqnarray*}
    \label{eq:zonoid}
    \left\{ \left( \, \int_0^\infty \phi(u)dF(u) \, , \, \frac{1}{\mu_1}\int_0^\infty u_1\phi(u)dF(u) \, , \, \cdots \, , \, \frac{1}{\mu_d}\int_0^\infty u_d\phi(u)dF(u) \, \right) \, : \; \phi\in\Phi \right\}.
\end{eqnarray*}
The Lorenz zonoid is an American football-shaped region in~$[0,1]^{d+1}$ with poles at points~$(0,\ldots,0)$ and~$(1,\ldots,1)$. The Lorenz surface of \citeauthor{Taguchi:72a} (\citeyear{Taguchi:72a},\citeyear{Taguchi:72b}) is a subset of the Lorenz zonoid, obtained when~$\Phi$ is restricted to the set of functions~$\phi_x(\cdot):=\mathds{1}\{\cdot\leq x\}$, all~$x\in\mathbb R^2$. A function~$\phi\in\Phi$ defines a point in the zonoid. The interpretation is simple in the case of indicator functions. The latter pick out specific groups of individuals in the population and the corresponding point in the zonoid has first coordinate equal to the fraction of the population involved. The other coordinates are the shares of each of the resources held by this group of individuals.

Definition~4.1 of \cite{banerjee:2016} proposes a multivariate inequality ordering in the finite population case. The analogue multivariate Lorenz map in the general case of a (possibly mixed discrete continuous) vector allocation~$X$ with normalized version~$\tilde X$ can be defined\footnote{We thank Xiaoxia Shi for bringing this to our attention.} for all~$r=(r_1,\ldots,r_d)\in[0,1]^d$ by
\begin{eqnarray*}
    L^B(r) & := &  \left( \, \int_0^{r_j} Q^{B,j}(u) \, du \, \right)_{j=1}^d,
\end{eqnarray*}
where, for each~$j=1,\ldots,d$, and~$Q^{B,j}$ is the quantile function associated with the random variable~$\Sigma_{k=1}^d (\tilde X_j+\tilde X_k)/2d$. If the latter were replaced by~$\tilde X_j$, the Lorenz map would be the vector of marginal Lorenz curves. Mixing with the average across allocation introduces sensitivity to dependence between the marginal allocations. Note however, that~$Q^B:=(Q^{B,j})_{j=1}^d$ is not a valid vector quantile for~$\tilde X$, since $Q^B(U)$ is not distributed like~$\tilde X$ when~$U$ is uniform on~$[0,1]^d$. As a result, $L^B(r)$ is not a vector of resource shares, as is the case for the univariate Lorenz curve.

\subsubsection{Properties of the Lorenz map and Inverse Lorenz Function}
\label{sec:props}
The proposed multivariate extensions of the Lorenz curve in both \citeauthor{Taguchi:72a} (\citeyear{Taguchi:72a},\citeyear{Taguchi:72b}) and  \cite{koshevoy:1996} relate population proportions to a vector of resource shares. Our proposal differs substantially from these in that it directly relates a specific subset of the population, namely individuals with multivariate rank below~$r$ to their share of both resources. Beyond this major conceptual difference, we now investigate properties of our multivariate extension of the Lorenz curve that make it a valuable contribution.

\begin{enumerate}

\item Interpretation. Unlike other multivariate proposals, the Lorenz map shares the interpretation of the traditional Lorenz curve as the cumulative share of resources held by the lowest ranked individuals.

\item Computation. As shown in section~\ref{sec:comp}, the Lorenz map can be efficiently computed via convex programming. As an integrated vector quantile, it relies on the growing literature on computational geometry and computational optimal transport, where algorithms and implementations abound and are tested in a variety of applied fields.
This is in sharp contrast with the Lorenz zonoid proposed by \cite{koshevoy:1996} which is notoriously difficult to compute.

\item Statistical inference. As an integrated quantile, the Lorenz map is amenable to statistical inference. The convergence of sample analogues of vector quantiles to their theoretical counterpart was shown in \cite{CGHH:2017} and \cite{Figalli:2018}. Vector ranks are distribution free and can be used in rank based statistical procedures that emulate scalar rank-based inference, as shown in \cite{DS:2023}, \cite{ghosal:2021} and \cite{SDH:2023}. See the survey in \cite{Hallin:2022} and references within.

\item Uniqueness. The Lorenz map characterizes the distribution of the allocation it is associated with. This property is shared with the Lorenz zonoid of \cite{koshevoy:1996}, but not the other alternative proposals in the literature.

\begin{proposition}
\label{prop:char} The Lorenz map~$\Lmap_X$ characterizes the distribution of~$X$ in the sense that~$X$ and~$\tilde{X}$ are identically distributed if and only if~$\Lmap_X=\Lmap_{\tilde{X}}$.
\end{proposition}

\item Lorenz curve as a CDF. The Lorenz map is a map from~$[0,1]^d$ to~$[0,1]^d$. Hence, unlike the traditional scalar Lorenz curve, it cannot be a CDF. However, the Inverse Lorenz Function is the cumulative distribution function of a random vector on~$[0,1]^d$ by construction. This property is not shared by the alternative proposals in the literature.

\item Decomposition under independent attributes. As shown in example~\ref{ex:ind}, the Lorenz map reduces to a simple function of the marginal Lorenz curves in case marginal attribute allocations are independent. This feature is shared with the multivariate Lorenz proposal in \citeauthor{arnold:1983} (\citeyear{arnold:1983},\citeyear{arnold:1987}) but not the alternative proposals.

\item Dominance of egalitarian allocations.
In the univariate case, the Lorenz curve of the identical allocation~$Y=1$ almost surely, is~$L_Y(q)=q$, which is sometimes called the egalitarian line. The Lorenz curve of any other allocation~$Y\geq0$
is below the egalitarian line, i.e.,~$L_Y(q)\leq q$, for all~$q\in[0,1]$. For~$d>1$, the identical allocation of example~\ref{ex:egal} is a direct extension of the univariate notion of egalitarian. We show here that the Lorenz map and Inverse Lorenz Function of the identical allocation provide similar bounds in the multi-attribute case.  For this, we require allocations with components that display a form of positive association defined in assumption~\ref{+regdep}. 

\begin{assumption}
\label{+regdep} The vector quantile~$Q_{\tilde X}:=(Q_1,\ldots,Q_d)$ of~$\tilde X$ is such that, for each~$j$,
$\mathbb{E}\left[ Q_j(U_{1},\ldots,U_{d}) \, \vert \, U_{k}=u_{k},\mbox{all }k\ne j \right]$
is monotonically increasing in each of the~$u_{k}$, $k\ne j$, where the vector~$(U_1,\ldots,U_d)$ is uniform on~$[0,1]^d$.
\end{assumption}
This assumption imposes a type of positive dependence between the components of~$X$ through their ranks. More precisely, assumption~\ref{+regdep} imposes a form of \emph{positive regression dependence}, as in \cite%
{lehmann:1966}, between one resource and the others' ranks.
For allocations satisfying assumption~\ref{+regdep}, we show that Lorenz map and Inverse Lorenz Function of the identical allocation serve as upper and lower bounds, respectively.
\begin{proposition}
\label{prop:+regdep}
The Lorenz map of any allocation~$X$ satisfying assumption~\ref{+regdep} is
component-wise dominated by the Lorenz map of the identical allocation in example~\ref{ex:egal}.
Moreover, the Inverse Lorenz Function of allocation~$X$ is bounded below by the Inverse Lorenz Function of the identical allocation.
\end{proposition}

We argue in appendix~\ref{sec:egal} that 
defining egalitarianism solely by identical allocations is too restrictive in the case of multiple resources. In case~$d=2$, we show that a much larger class of allocations have Lorenz maps dominated by an egalitarian allocation from definition~\ref{def:egal}, which includes the identical allocation.
\end{enumerate}


\section{Multi-attribute inequality comparisons}
\label{sec:ineqorder}

We can use the vector Lorenz map~$\mathcal L_X$ of an allocation~$X$ introduced in section~\ref{sec:lmaps} as a tool to compare inequality of different allocations. We base an inequality dominance criterion to compare different allocations on the dominance of Lorenz maps. We develop a visualization tool for inequality dominance, and an inequality index for the cases, where the allocations are not Lorenz ordered.

\subsection{Lorenz dominance}

Consider two allocations~$X$ and~$X^\prime$, with respective Lorenz maps~$\Lmap_X$ and~$\Lmap_{X^\prime}$. If~$\Lmap_X(r)\geq \Lmap_{X^\prime}(r)$ for some vector rank~$r$, the same proportion of the population with vector ranks below~$r$ commands a larger share of all resources in allocation~$X$ than in allocation~$X^\prime$. If this is true for any vector rank~$r$ in~$[0,1]^d$, then, we say that allocation~$X^\prime$ is more unequal than allocation~$X$.

\begin{definition}
\label{def:s-order}
An allocation~$X^\prime$ is said to be more unequal in the Lorenz order than an allocation~$X$ if~$\Lmap_X(r )\geq \Lmap_{X^\prime}(r)$ for all~$r\in[0,1]^d$. We denote this~$X\succcurlyeq _{\Lmap}X^\prime$.\footnote{As a partial ordering based on cumulative sums of vector quantiles, the relation~$X\succcurlyeq_{\Lmap} X^\prime$ is a multivariate extension of the concept of majorization of \cite{hardy:1934}. It is different from existing multivariate notions of majorization reviewed in~ \cite{marshall:1979} and \cite{arnold:2018:book}, in that it relies on a multivariate reordering of the random vector allocation.}
\end{definition}

The Lorenz partial order of definition~\ref{def:s-order} is an implementable dominance criterion: The Lorenz maps can be computed and compared. The relation~$X\succcurlyeq_{\Lmap} X^\prime$ is equivalent to stochastic dominance of the random vector~$\Lmap_X(U)$, with~$U\sim U[0,1]^d$, over~$\Lmap_{X^\prime}(U)$ (see Section~3.8 of \cite{MS:2002}). Hence, dominance tests can be derived on the basis of sample analogues of the Lorenz maps to emulate the large literature on inference techniques to compare inequality of distributions of a single attribute. See \cite{DD:2000} and references within. 

Following the literature on the measurement of inequality, we assess the value of this implementable dominance criterion for inequality comparisons in two ways. First, we analyze the class of social evaluation functionals that are compatible with the Lorenz order, and show that they are rank-dependent social evaluation functionals, with weights decreasing in rank. Second, we identify the class of transfers that increase inequality as defined by this Lorenz criterion.

\subsubsection{Rank-dependent social evaluation functionals}
\label{sec:eval}

The first way to gain insight into the relevance of our multivariate Lorenz dominance criterion is to characterize the set of social evaluation functionals that are compatible with it. A social evaluation functional is a map~$S$ from an allocation~$X$, i.e., a random vector in~$\mathbb R_+^d$, to~$\mathbb R$, which orders allocations in their social desirability. A social evaluation functional~$S$ is compatible with the dominance criterion if~$X\succcurlyeq _{\Lmap}X^\prime\Rightarrow S(X)\geq S(X^\prime)$. Compatibility with Lorenz dominance is a form of inequality aversion of the social evaluation functional, since more equal allocations are deemed socially more desirable.

By construction, a social evaluation functional that is compatible with the Lorenz dominance order must satisfy anonymity and ratio-scale invariance. Anonymity, also called law-invariance or symmetry in the literature, refers to the fact that~$S(X)=S(X^\prime)$ whenever~$X$ and~$X^\prime$ are identically distributed. The identity of individuals does not matter in the social evaluation, so that a permutation of individuals in the population leaves~$S$ unchanged. Ratio-scale invariance refers to the fact that~$S(\lambda^\top X)=S(X)$ for any positive vector~$\lambda$. Hence, the social evaluation is not affected by a change in units of measurement.

Next, and more substantively, all social evaluation functionals that are compatible with the Lorenz dominance criterion are rank-dependent social evaluation functionals. Individuals are weighted in the social evaluation according to their rank in the distribution. To define and formalize this statement, start with the case of a single attribute. \cite{weymark:1981} shows that social evaluations that satisfy the comonotonic independence property defined below take the form of weighted sums of quantiles.

\noindent{\bf Property CI} (Comonotonic Independence).
A social evaluation functional~$S$ is said to satisfy comonotonic independence if, whenever~$X$, $X^\prime$ and~$Z$ are comonotonic allocations, and~$S(X) \geq S(X^\prime)$, then, for all~$\mu\in(0,1)$,
$S(\mu X+(1-\mu)Z) \geq S(\mu X^\prime+(1-\mu)Z)$.

In the univariate case, two allocations are called comonotonic if one is a positive increasing function of the other. In other words, individuals are ranked identically in both allocations. Comonotonic independence means that the comparison of two allocations with a common component only depends on the comparison between the two variable components, as long as rankings stay unchanged. As an illustration, when assessing the effect on household income distributions of a policy that only affects women, under perfect assortative matching, one need only look at the change in the distribution of women's income.

The same property of comonotonic independence can be entertained in the multi-attribute case, with the same interpretation. Two allocations are comonotonic if individuals are ranked identically in both allocations. Now, in case of random vectors~$X$ and~$X^\prime$, comonotonicity is defined in the same way by the fact that~$X$ and~$X^\prime$ have the same vector ranks. Definition~\ref{def:CO} below follows \citet{GH:2012} and \citet{EGH:2012}, where
it is called~$\mu$-comonotonicity\footnote{See \cite{PS:2010} for a discussion of this and other multivariate comonotonicity concepts.} of~$\tilde X^1,\ldots,\tilde X^J$.

\begin{definition}[Vector comonotonicity]
\label{def:CO} Random vectors~$X^1,\ldots,X^J$ on~$\mathbb{R}^d$ are
said to be \emph{comonotonic} if there exists a uniform random vector~$U$ on~%
$[0,1]^d$ such that~$\tilde X^j=Q_{\tilde X^j}(U)$, $j=1,\ldots,J$, almost surely, where~$Q_X$ is the
vector quantile of definition~\ref{def:VQ} associated with the distribution of~$X$, and~$\tilde X$ is the normalized version of~$X$ as in definition~\ref{def:Lorenz}.
\end{definition}

With this definition of comonotonicity (which coincides with the usual definition in the single attribute case), comonotonic independence of a social evaluation functional is still defined as property CI. 
If individuals are ranked identically in allocations~$X$ and~$X^\prime$, and~$X^\prime$ is socially less desirable than~$X$, then, adding to both~$X$ and~$X^\prime$ a third common allocation~$Z$ cannot reverse the ordering, if~$Z$ ranks individuals as~$X$ and~$X^\prime$ do.

As in \cite{weymark:1981} for the single attribute case, \cite{GH:2012} show that comonotonic additive social evaluation functionals are rank dependent, i.e., of the form
\begin{eqnarray}
\label{eq:rank}
    S_\phi(X) & := & \int_{[0,1]^d} \phi(u)^\top Q_{\tilde X}(u)du,
\end{eqnarray}
for some function~$\phi:[0,1]^d\rightarrow\mathbb R_+^d$. To each vector rank~$u$, $\phi$ associates the attribute-specific weights of ranked~$u$ individual in the social evaluation. We show that social evaluation functionals are only compatible with the Lorenz dominance order if they satisfy comonotonic additivity, hence if they are rank-dependent social evaluation functionals. There remains to determine which functions~$\phi$ make social evaluation functional~$S_\phi$ compatible with the Lorenz dominance criterion. As we discuss below, they are characterized by inequality aversion.

Inequality aversion of a rank dependent social evaluation functional implies a weighting scheme that gives more weight to lower ranked individuals. In the scalar case analyzed in \cite{weymark:1981}, an inequality averse rank dependent social evaluation functional is characterized by decreasing weights as ranks increase. We show a similar result in the multivariate case. Social evaluation functionals that are compatible with the Lorenz order of definition~\ref{def:s-order} are rank dependent social evaluation functionals with rank-specific weights of the form
\begin{eqnarray}
\label{eq:weights}
\phi_m(u) := \left( \int_{[0,1]^d} \mathds 1\{u\leq r\} \, dm_1(r),\ldots,\int_{[0,1]^d}\mathds 1\{u\leq r\} \, dm_d(r) \right)^\top,
\end{eqnarray}
where~$m_j$ is a non negative measure on~$[0,1]^d$, all~$j\leq d$. 

\begin{proposition}
\label{prop:eval}
    A social evaluation functional is compatible with the Lorenz dominance order of definition~\ref{def:s-order} if and only if it is of the form
    \begin{eqnarray}
    \label{eq:eval}
       S(X):=\int_{[0,1]^d}\phi_m(u)^\top Q_{\tilde X}(u) \, du
    \end{eqnarray}
\end{proposition}

A special case of weighting scheme satisfying proposition~\ref{prop:eval} is the case~$m_j=\delta_r$ all~$j$, where all individuals below rank~$r$ receive weight~$1$ and all individuals above rank~$r$ receive weight~$0$. More generally, individuals can be given different weights for different resource dimensions, but as non negative mixtures of the indicators~$\mathds 1\{u\leq r\}$, the weights are always decreasing in ranks.

\subsubsection{Increasing marginal inequality and increasing correlation}
\label{sec:MRT}

The second way we evaluate our Lorenz dominance criterion is by identifying transfers of resources between individuals that increase inequality according to this criterion. Since inequality is a cardinal aspect of the distribution, we consider a class of transfers that preserves the multivariate ranks. The transfers we consider are functions~$T:[0,1]^d\rightarrow\mathbb R^d$. If the~$j$-th component of transfer~$T$ is positive (resp. negative), it is added to (subtracted from) the endowment in resource~$j$ of individual with rank~$u$. 

\begin{definition}
    A rank preserving transfer from allocation~$X$ to allocation~$X^\prime$ is a transfer such that pre-transfer and post-transfer allocations are comonotonic (individuals preserve the same rank). Equivalently, it is a function~$T:[0,1]^d\rightarrow\mathbb R^d$ such that~$Q_{\tilde X^\prime}(u)=Q_{\tilde X}(u)+T(u)$ for all~$u\in[0,1]^d$, where~$Q_X$ is the
vector quantile of definition~\ref{def:VQ} associated with the distribution of~$X$, and~$\tilde X$ is the component-wise demeaned version of~$X$.
\end{definition}

First we show that the transfers that increase inequality according to the Lorenz criterion are the arbitrary combinations of rank preserving transfers of a non negative quantity of one of the resources from an individual with rank~$u_1$ to an individual with rank~$u_2\geq u_1$. 

\begin{proposition}
\label{prop:trans}
    An allocation~$X^\prime$ is more unequal than~$X$, i.e.,~$X^\prime\preccurlyeq_{\mathcal L}X$, if and only if an allocation with the same distribution as~$X^\prime$ can be obtained from~$X$ via an arbitrary sequence of rank preserving transfers~$T$ such that for all $r\in [0,1]^d$,
    \begin{eqnarray}
    \label{eq:trans}
        \int_{[0,1]^d}T(u)\mathds 1\{u\leq r\}\,du\leq0.
    \end{eqnarray}
\end{proposition}

The inequality in~(\ref{eq:trans}) expresses the fact that mass is transferred from lower ranked to higher ranked individuals.

A desirable feature of the Lorenz inequality ordering of definition~\ref{def:s-order} is its ability to rank two allocations~$X$ and~$X^\prime$, when the latter is obtained from the former through a transfer that increases inequality of the marginals or that increases the degree of positive dependence between the marginals. We formalize this feature with a specific type of multivariate transfer we call {\em Monotone Regressive Transfers}. We specialize the discussion to bivariate allocations to avoid wading into concepts of increasing multivariate dependence when~$d>2$.

\begin{definition}[Monotone Regressive Transfer, MRT]
\label{def:MRT}
A transfer~$T:[0,1]^2\rightarrow\mathbb R^2$ is a {\em monotone regressive transfer} if~$T$ is rank preserving and has non-negative Jacobian (i.e., the Jacobian's entries are all non negative).
\end{definition}

In the univariate case, a monotone regressive transfer reduces to a monotone mean preserving spread (\cite{Quiggin:92}), also called Bickel-Lehmann increase in dispersion (\cite{BL:76}). In the multivariate case\footnote{A related extension in the theory of multivariate risks was proposed in \cite{CGH:2016}.}, monotone regressive transfers weakly increase both marginal inequality and positive dependence. The former happens because each component of the transfer has non negative own derivative, hence is increasing in each component of the rank. The latter happens because the transfer has non negative cross derivative, hence increases the degree of positive dependence between the two resources.

\begin{proposition}[Monotonicity in MRT]
\label{prop:MRT}
If an allocation~$X^\prime$ is obtained from an allocation~$X$ through a monotone regressive transfer, then~$X\succcurlyeq_{\mathcal L}X^\prime$, i.e.,~$X^\prime$ is more unequal than~$X$ as defined by the Lorenz dominance partial order of definition~\ref{def:s-order}.
\end{proposition}

Proposition~\ref{prop:MRT} shows that the multivariate Lorenz dominance order of definition~\ref{def:s-order} therefore ranks an allocation as more unequal if the marginal resource allocations are weakly more unequal and if the marginal resource allocations are weakly more positively dependent. This is in contrast with the Lorenz dominance order based on inclusion of Lorenz zonoids proposed in \cite{koshevoy:1996}. Indeed, by Proposition~8 in \cite{koshevoy:2007}, if two allocations have identical marginals, and~$X$ dominates~$X^\prime$ in the Lorenz dominance order based on Lorenz zonoid inclusion, then~$X$ and~$X^\prime$ are identically distributed.

The Lorenz dominance ordering of definition~\ref{def:s-order} does not satisfy the {\em uniform majorization principle} proposed by \cite{kolm:1977}. In case of discrete populations, the uniform majorization principle stipulates that inequality should be reduced through multiplication by a doubly stochastic matrix different from a permutation. However, as \cite{Dardanoni:93} points out, such transformations can increase correlation and therefore increase inequality in an egregious way. See the discussion in \cite{Savaglio:2006}. We show that a similar issue arises with the continuous version of the uniform majorization principle. The latter requires an inequality dominance order to be monotonic with respect to the concave order. From theorem~4 of \cite{GH:2012}, we deduce that a social evaluation functional that satisfies uniform majorization and comonotonic independence must be equal to
\begin{eqnarray}
\label{eq:UM-Gini}
    S_{UM}(X):=1-\int_{[0,1]^d}u^\top Q_{\tilde X}(u)du
\end{eqnarray} 
up to an affine transformation. We show in appendix~\ref{app:UM} that in the case of bivariate allocation~$X=(X_1,X_2)$, $S_{UM}$ is minimized when the two components~$X_1$ and~$X_2$ of allocation~$X$ are independent, which runs against the intuition that increased dependence can increase inequality\footnote{Note that we are considering inequality over outcomes, not welfare inequality. Hence, the point made in \cite{AB:1982}, that increased correlation may decrease utilitarian welfare inequality when resources are complements, doesn't apply here.}.


\subsection{Visualization of Lorenz dominance}

Failures of Lorenz dominance can be visualized with the Inverse Lorenz Function. Consider two allocations~$X$ and~$X^\prime$, with respective Inverse Lorenz Functions~$l_X$ and~$l_{X^\prime}$.
If~$l_X(z)\leq l_{X^\prime}(z)$ for some vector of shares~$z$, a larger proportion of the population commands the same share of resources in allocation~$X^\prime$ than in allocation~$X$. Lorenz dominance of allocation~$X$ over allocation~$X^\prime$ (in the sense of definition~\ref{def:s-order}) implies that the relation ~$l_X(z)\leq l_{X^\prime}(z)$ holds for each resource share vector~$z$ (see proposition~\ref{prop:order} in the appendix).

In case of bivariate allocations, the latter can be easily visualized on~$[0,1]^2$ through the relative positions of the level sets of the Inverse Lorenz Function, which we call~$\alpha$-Lorenz curves, denoted
\begin{eqnarray*}
    \mathcal C_X^\alpha:=\{z\in[0,1]^2: l_X(z)=\alpha\}, \mbox{ for each }\alpha\in(0,1).
\end{eqnarray*}
The $\alpha$-Lorenz curves provide a visualization of Lorenz dominance. We can compare the inequality of different allocations based on the shape and relative positions of their respective~$\alpha$-Lorenz curves. 
Suppose~$X$ is less unequal in the Lorenz dominance order than~$X^\prime$. Then, by proposition~\ref{prop:order} in appendix~\ref{sec:proofs}, for any~$z\in\mathcal C^\alpha_{X^\prime}$, $l_{X}(z)\leq l_{X^\prime}(z)=\alpha$. So~$z\in\mathcal C^{\tilde \alpha}_{X}$ with~$\tilde \alpha\leq \alpha$. This can be visualized as a shift to the north-east of the~$\alpha$-Lorenz curves of the more unequal 
allocation~$X^\prime$ to the~$\alpha$-Lorenz curves of the less unequal allocation~$X$.

Figure~\ref{fig:ex} and~\ref{fig:lognormal_ordering} display $\alpha$-Lorenz curves of the multivariate lognormal allocation~$X$ defined by 
\begin{eqnarray}\label{ex:mlnorm}
    \ln X \sim
    \mathcal{N}\left(
    \begin{bmatrix}
    0 \\ 0
    \end{bmatrix}
    ,
    \begin{bmatrix}
        \sigma_1^2 & \rho \\
        \rho & \sigma_2^2
    \end{bmatrix}
    \right)
\end{eqnarray}
where $\mathcal{N}$ is the normal distribution, $\sigma_1,\sigma_2 > 0$ control the dispersion of the respective marginals and $\rho$ controls the degree of dependence. 

In figure~\ref{fig:ex}, we revisit the three special cases of identical allocations, independent attributes and comonotonic attributes. We also add the counter-monotonic case, where individuals are ranked in opposite order for the two resources, as well as intermediate dependence cases. Figure~\ref{fig:ex} shows the~$\alpha$-Lorenz curve (with~$\alpha=0.9$) of the identical allocation compared to multivariate lognormal allocations with variances~$\sigma_1=\sigma_2=1$ and correlation coefficients~$\rho=-0.99$ (counter-monotonic), $\rho=-0.6,-0.3$, $\rho=0$ (independent), $\rho=0.3,0.6$, and~$\rho=0.99$ (comonotonic).

\begin{figure}[H]
\begin{center}
\includegraphics[scale=0.6]{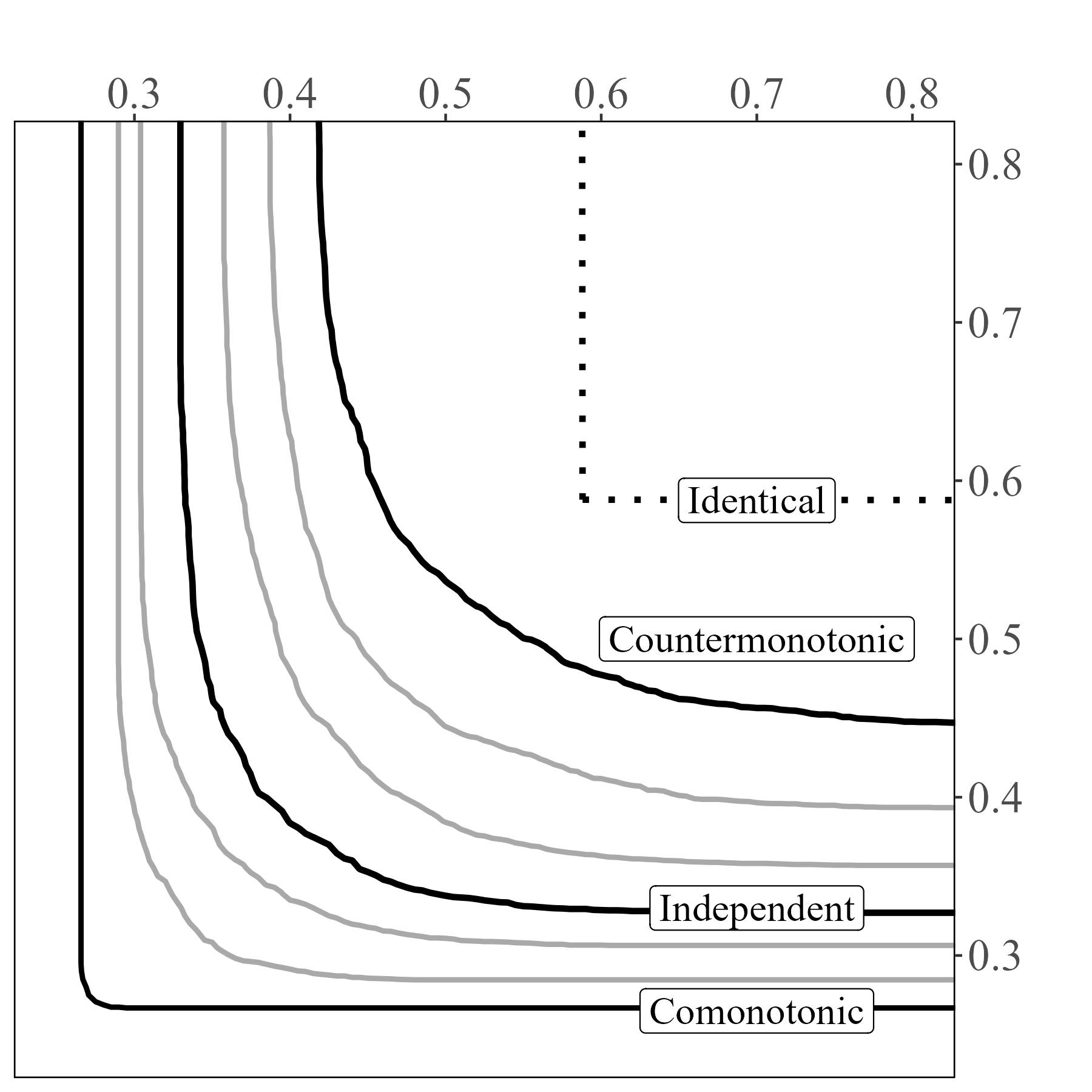}
\end{center}
\caption{$0.9$-Lorenz curves of multivariate lognormal random vectors that vary by correlation between marginals defined in \eqref{ex:mlnorm}. The scale parameters are $\sigma_1 = \sigma_2 = 1$. The curves correspond to different correlation coefficient $\rho$ (from bottom left to top right: $-0.99, 0.6, -0.3, 0, 0.3, 0.6, 0.99$). The nested curves show that increasing dependence between fixed marginals increases inequality.}
\label{fig:ex}
\end{figure}

Visually, inequality can be assessed by the departure of~$\alpha$-Lorenz curves from those of the identical allocation. This visual comparison is facilitated by the fact that they are shaped like indifference curves. 
In addition, correlation information is preserved through the curvature of the~$\alpha$-Lorenz curves, which decreases when positive dependence increases.

\begin{proposition}
\label{prop:curves}
(1) The~$\alpha$-Lorenz curves~$\mathcal C^\alpha_X$ are the level curves of a bivariate cdf, hence they are downward sloping, non decreasing in~$\alpha$ and they do not cross. In addition, (2) The~$\alpha$-Lorenz curves~$\mathcal C^\alpha_X$ are convex if
\begin{eqnarray*}
\frac{\partial l_X}{\partial z_2} \frac{\partial l_X^2}{\partial z_1\partial z_2} - \frac{\partial l_X}{\partial z_1} \frac{\partial l_X^2}{\partial z_2^2} \geq0.
\end{eqnarray*}
\end{proposition}

\begin{figure}[htbp]
\centering
\includegraphics[scale = 0.6]{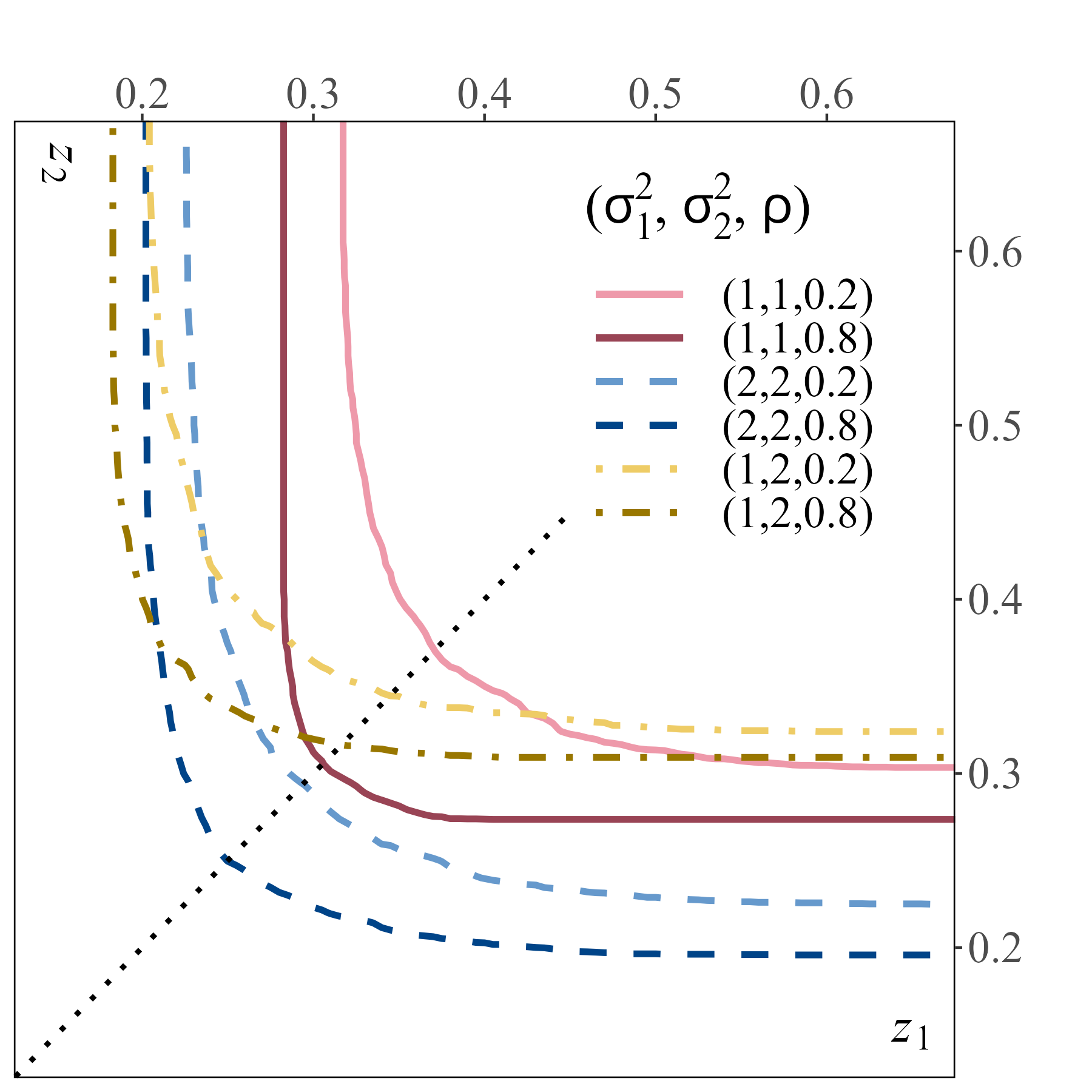} 
\caption{$0.9$-Lorenz curves of multivariate lognormal random vectors that vary by correlation and scale parameters defined in \eqref{ex:mlnorm}. Notably the dotted-dashed lines display when $\sigma_1\neq\sigma_2$ showing a skewness towards the axis $z_2$ of the marginal with larger value. The diagonal black dotted line mark where the identical allocation lies and helps to show skewness towards an axis.}
\label{fig:lognormal_ordering}
\end{figure}

Figure~\ref{fig:lognormal_ordering} compares~$\alpha$-Lorenz curves of~$6$ different allocations that are multivariate lognormally distributed as in \eqref{ex:mlnorm}, for~$\alpha=0.9$. The parameters~$\sigma_1^2$ and~$\sigma_2^2$ take values~$1$ or~$2$, whereas ~$\rho$ takes values~$0.2$ or~$0.8$. In case of marginals with different~$\sigma$, the asymmetry is reflected in the ~$\alpha$-Lorenz curves. Moreover, other things equal, inequality increases with~$\sigma$, which measures inequality in the marginals, and with~$\rho$, which measures correlation.

Finally, figure~\ref{fig:trade} shows an example of two multivariate lognormally distributed allocations $X$ and $X^\prime$ such that the marginals of $X$ are more unequal than those of $X^\prime$, but $X^\prime$ is more unequal overall due to positive dependence of its marginals. Specifically, the marginals of $X$ are independent and have the same scale parameter $\sigma^2 = 1.2$, while $X^\prime$ has marginals with correlation parameter $\rho = 0.9$ and scale $\sigma^2 = 1$ each. This shows that the Lorenz inequality dominance ordering does not imply dominance of the marginals, and how it can instead incorporate trade-offs between marginal inequality and positive dependence.

\begin{figure}[htbp]
\centering
\includegraphics[scale = 0.6]{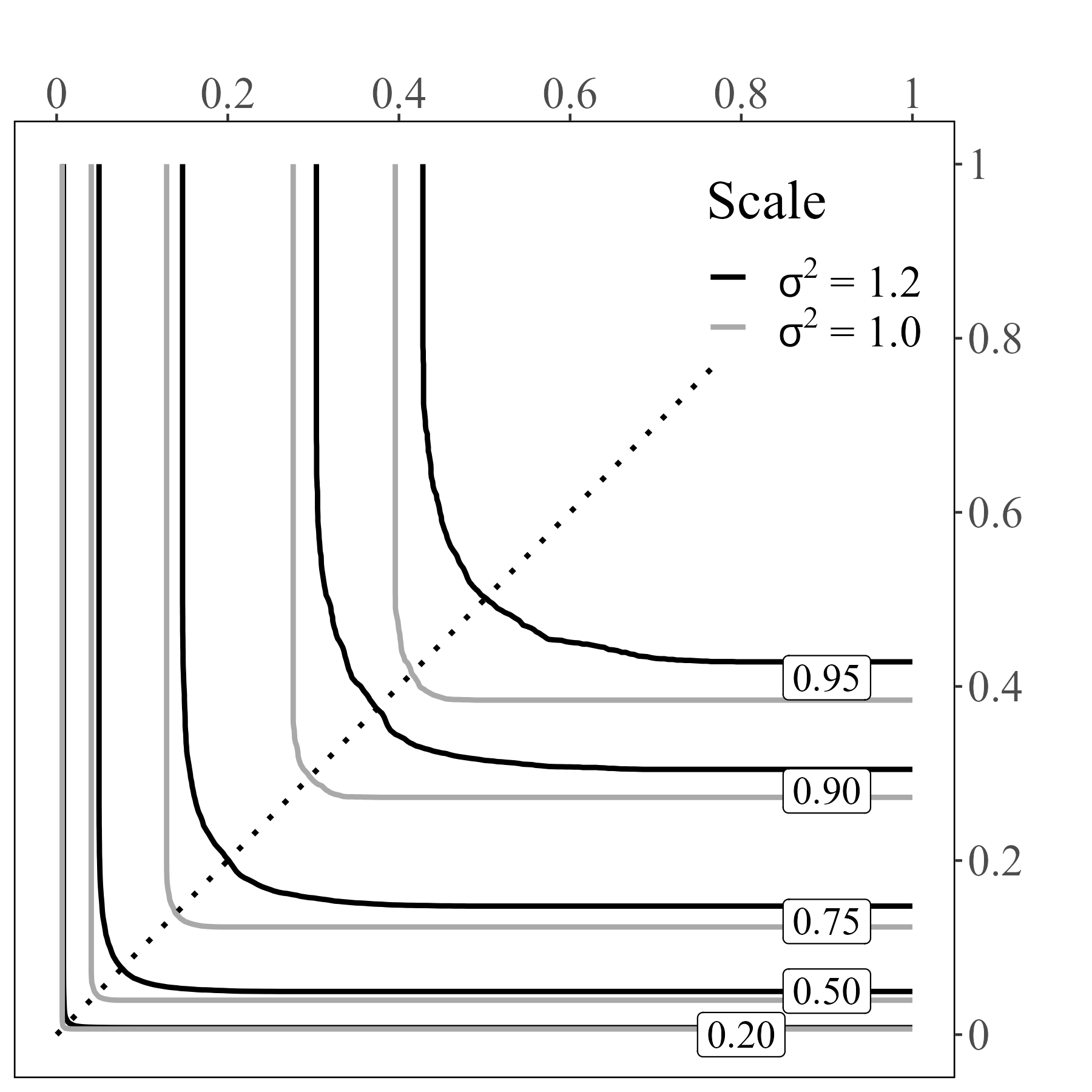}
\caption{$\alpha$-Lorenz curves of~$X$ with independent lognormal marginals with~$\sigma^2=1.2$ in black and~$X^\prime$ with dependent lognormal marginals with~$\sigma^2=1$ and correlation~$\rho=0.9$ in gray.}
\label{fig:trade}
\end{figure}


\subsection{Multivariate Gini inequality index}
\label{sec:gini}

The Lorenz dominance ordering is a partial ordering of multivariate distributions. In many cases, $\alpha$-Lorenz curves may cross. For a complete inequality ordering, we also propose an extension of the classical Gini index to compare inequality in multi-attribute allocations.
\cite{gajdos:2005} propose a multivariate Gini coefficient based on aggregation across individuals first, then across dimensions, which removes the effect of dependence across attributes. \cite{DL:2012} propose to aggregate across dimensions first, then across individuals, in order to keep track of correlation.
From the volume of the Lorenz zonoid, a multidimensional Gini coefficient can be derived naturally as in \cite{koshevoy:1997}. An alternative strategy is followed by \cite{arnold:1983}, \cite{koshevoy:1997}, who extend the definition based on the sum of all
distances between pairs of individuals.\footnote{Other multivariate Gini indices based on multivariate Lorenz curve proposals include \cite{banerjee:2010}, \cite{grothe:2021}, and \cite{SJ:2020}.}

The univariate Gini index can be interpreted as the average deviation from the egalitarian allocation, the univariate version of our identical allocation. We emulate this interpretation and define a multivariate Gini based on an average deviation from the Lorenz map~$r\mapsto(r_1r_2\cdots r_d,\ldots,r_1r_2\cdots r_d)$. The deviation measure we propose is
\begin{eqnarray}
\label{eq:dev}
\displaystyle \sum_{j=1}^d \; \int_{[0,1]^d} \; \left[ \; \prod_{k=1}^dr_k-\Lmap_j(r) \right] dr,
\end{eqnarray}
where~$\Lmap_j$ is the~$j$-th component of the Lorenz map~$\Lmap_X$, with~$j=1,\ldots,d$. 
After normalization, (\ref{eq:dev}) becomes
\begin{eqnarray}\label{eq:GiniLmap}
G(X) & = &  1 - \frac{2^d}{d}\left( \; \int_{[0,1]^d} \;\sum_{j=1}^d\mathcal{L}_j(r)  \,dr \; \right),
\label{eq:Gini}
\end{eqnarray}
which yields the following definition.

\begin{definition}[Gini Index]
\label{def:Gini}
(\ref{eq:Gini}) defines the multivariate Gini index of allocation~$X$.
\end{definition}

The traditional Gini index of a univariate allocation can also be characterized as a weighted sum of outcomes, where the weights are increasing linearly in the rank of the individual in the population. Hence, the negative of the Gini, seen as a social evaluation functional displays inequality aversion by giving more weight to the outcomes of lower ranked individuals than to those of higher ranked ones. 

The same interpretation is valid for our multivariate Gini. Specifically, 
\begin{eqnarray}
    S(X) & := & \frac{d}{2^d} \, (1-G(X)) \notag\\
     & = & \int_{[0,1]^d} \left( \int_{[0,1]^d} \, \left( \mathds 1\{u\leq r\}, \ldots, \mathds 1\{u\leq r\} \right)^\top Q_{\tilde X}(u) \, du \right) \, dr \notag\\
    & = & \int_{[0,1]^d} \, \left( \prod_{j=1}^d (1-u_j), \ldots, \prod_{j=1}^d (1-u_j) \right)^\top Q_{\tilde X}(u) \, du \label{eq:giniVQ}
\end{eqnarray}
is an inequality averse social evaluation functional of the form~(\ref{eq:eval}) with uniform measures on~$[0,1]^d$ in~(\ref{eq:weights}). So~$S(X)$ evaluates the social desirability of an allocation with a weighted sum of individual endowments, where the weights~$\Pi_{j=1}^d(1-u_j)$ are component-wise decreasing in the individual's rank~$u=(u_1,\ldots,u_d)$. 

The Gini index correspondingly takes the form~$G(X)=1-(2^d/d)S(X),$ with~$S$ as in~(\ref{eq:giniVQ}).
For instance, in the case of bivariate allocations, (\ref{eq:Gini}) takes the form
\begin{eqnarray}
\label{eq:Gini-Social}
G(X)  & = & 2 \int_0^1\!\!\!\int_0^1 (u_1+u_2-u_1u_2) \left( Q_1(u)+Q_2(u)\right) du_1du_2 \, - \, 3.
\end{eqnarray}
In expression~(\ref{eq:Gini-Social}), $Q_1(u)$ and~$Q_2(u)$ are the components of~$Q_{\tilde X}$, so that~$Q_1(u)+Q_2(u)$ is the sum of the two normalized resource allocations of the individual in the population with vector rank~$(u_1,u_2)$. Hence, the Gini index is indeed a weighted sum of outcomes, with weights~$(u_1+u_2-u_1u_2)$ increasing with the vector ranks~$(u_1,u_2)$. It is a genuinely multivariate extension in that the weighting scheme, hence the social evaluation of inequality, depends on multivariate ranks of individuals.

\begin{continued}[Examples~\ref{ex:discrete} continued]
The Gini coefficient for discrete distributions can be computed by plugging in the Lorenz map from section~\ref{sec:comp} into~(\ref{eq:Gini}).
\end{continued}

\begin{continued}[Examples~\ref{ex:ind} and~\ref{ex:comon} continued]
We compare Gini indices in the independent case with the perfect comonotonicity case, where~$X_1$ and $X_2$ have the same marginal distributions (uniform on $[0,2]$). We verify (analytically for~$r_1+r_2<1$ and numerically using Wolfram for~$r_1+r_2\geq 1$) that~$Q_1(u)+Q_2(u)$ is smaller in the comonotonic case, than in the independent case. Hence the Gini index (and the measure of inequality) is larger in the comonotonic case.
\end{continued}

\begin{example}[Countermonotone Resources]
If we have $X_1+X_2=2$ a.s., then~$Q_1(u)+Q_2(u)=2$ for almost all~$u$,
and we obtain~$\Lmap_1(r_1,r_2) +\Lmap_2(r_1,r_2)=2r_1r_2\geq r_2L_1(r_1) +r_1L_2(r_2)$, so that, in particular, the Gini index in the countermonotone case is the same as in the case of the identical allocation, i.e., equal to~$0$, and both are smaller than the Gini of the allocation with independent resources. This is consistent with the fact that these allocations~$X$ are considered egalitarian according to definition~\ref{def:egal} in appendix~\ref{sec:egal}.
\end{example}

The Gini index of definition~\ref{def:Gini} is in~$[0,1]$ under assumption~\ref{+regdep}. It equals~$0$ for the identical allocation. It tends to~$1$, when the Lorenz map tends to~$0$ (extreme inequality). The Gini index of an allocation with independent components reduces to the average of classical scalar Ginis of both components. Like the classical scalar Gini index, it preserves the Lorenz inequality ordering, in the sense that higher inequality according to~$\preccurlyeq_{\Lmap}$ implies a larger value of the Gini index. In other words, $X\succcurlyeq_{\Lmap} X^\prime$ implies~$G(X)\leq G(X^\prime)$, so that the negative of the Gini is a compatible social evaluation functional. Hence it inherits the properties of anonymity, scale invariance and comonotonic independence.

\subsubsection*{Multivariate S-Gini}\label{subsect: S-Gini}

The multivariate Gini in expression~(\ref{eq:Gini}) is the suitably normalized negative of an inequality averse social evaluation functional of the form~(\ref{eq:eval}) with uniform measures on~$[0,1]^d$ in~(\ref{eq:weights}). It can be extended to reflect varying concern for inequality in different attributes. To achieve this, a multivariate Gini coefficient can be defined as~$1-cS(X)$, where~$c$ is a normalizing constant and~$S$ is a social evaluation functional that reflects different degrees of inequality aversion in different attributes.

In the univariate case, to reflect varying degrees of inequality aversion, \cite{DW:80} propose a single parameter family of Gini coefficients, called S-Gini, defined by
\begin{eqnarray*}
G_\delta(X) & := & 1-\delta(\delta-1) \int_{[0,1]}(1-r)^{\delta-2}\mathcal{L}(r)\, dr,
\end{eqnarray*}
where~$\mathcal L$ is the traditional Lorenz curve, and~$\delta$ ranges from~$1$, corresponding to indifference to inequality, to the Rawlesian extreme at the limit $\delta\rightarrow\infty$, where only the poorest individual matters\footnote{We were unable to locate a precise statement of this in the literature, so we include it in proposition~\ref{prop:S-Gini} in the appendix with a proof for completeness.}. 

The S-Gini family of \cite{DW:80} can be extended to the assessment of multivariate inequality within our framework. Let~$\delta=(\delta_1,\ldots,\delta_d)$ be a $d$-dimensional parameter, where~$\delta_j\in[1,\infty)$, $j=1,\ldots,d,$ reflects the concern for inequality in attribute~$j$. We define the family of multivariate S-Gini coefficients of inequality of an allocation~$X$ with Lorenz map~$\mathcal L_X=(\mathcal L_1,\ldots,\mathcal L_d)$ as 
\begin{eqnarray*}
\label{eq:S-Gini}
G_\delta(X)
& := & 1-c_\delta S_\delta(X),
\end{eqnarray*}
where~$c_\delta:=2^{d-1}/\sum_{j=1}^d\delta_j^{-1}$ is a normalizing constant, and~$S_\delta$ is the social evaluation functional
\begin{eqnarray*}
\label{eq:S-Gini Social}
S_\delta(X)
& := & \sum_{j=1}^d (\delta_j-1) \int_{[0,1]^d}(1-r_j)^{\delta_j-2}\mathcal{L}_j(r)\, dr.
\end{eqnarray*}
The normalizing constant~$c_\delta$ is chosen such that the multivariate S-Gini~$G_\delta$ lies in~$[0,1]$ and is zero in case of the identical allocation. There remains to verify that the social evaluation functional~$S_\delta$ is indeed of the form~(\ref{eq:eval}), and hence compatible with the Lorenz order. Indeed, we have
\begin{equation*}
    S_\delta(X)=\sum_{j=1}^d \int_{[0,1]^d}\mathcal{L}_j(r)\, dm^{(\delta)}_j(r),
\end{equation*} 
with~$m^{(\delta)}_j(r)=\left(\prod_{l=1,l\neq j}^d r_l\right)[1-(1-r_j)^{\delta_j-1}]$, for each~$j=1,\ldots,d$. The multivariate S-Gini thereby incorporates varying degrees of inequality aversion for different attributes. We recover the S-Gini of \cite{DW:80} when~$d=1$, and the multivariate Gini of section~\ref{sec:gini} when~$\delta_j=2$, for all~$j=1,\ldots,d$, as desired. We also recover Rawelsian limits as~$\delta_j$ tends to zero as formalized in proposition~\ref{prop: multivariate s-gini} in the appendix.


\section{Empirical Illustration}
\label{sec:appli}

In this section, we apply our methodology to the analysis of income-wealth inequality in the United States between~1989 and~2022, based on the public version of the triennial Survey of Consumer Finances (SCF). Wealth refers to all assets, financial and otherwise. Details of the sampling technique and a discussion of specific features and issues with the data set are given in appendix~\ref{app:data}. A guide for practical implementation of the computational procedure outlined in section~\ref{sec:comp} is given in appendix~\ref{sec:algo}. We refer to inequality displayed by our measure as overall inequality, while specific marginal inequality is described as wealth or income inequality.


\subsection{Income-wealth~$\boldsymbol{\alpha}$-Lorenz curves}

Figure~\ref{fig:alorenz_decades} shows the $\alpha$-Lorenz curves for $\alpha = 0.6,0.8,0.95$ for the years 1989, 2007, 2010, and 2022. There is a general worsening of overall inequality over~3 decades since the curves shift away from the north-east corner. The tight curvature also reflects the positive correlation of income and wealth as in figure~\ref{fig:ex}. Using figure~\ref{fig:lognormal_ordering} as reference, the skew towards the wealth axis indicates inequality from the wealth marginal is dominant at these $\alpha$-levels, as expected. 


\begin{figure}[htbp]
\centering 
    \phantom{aaaa}\includegraphics[scale = 0.6]{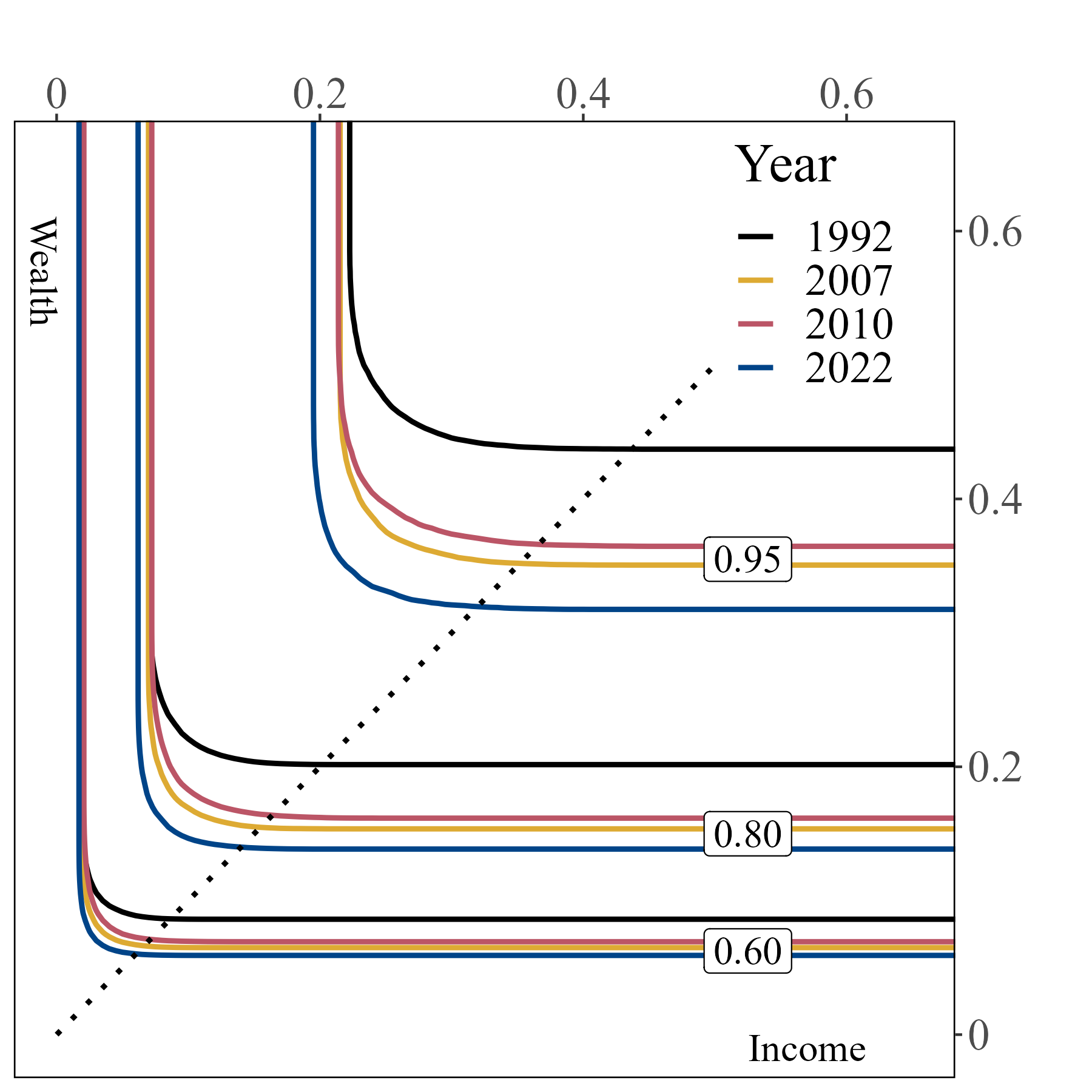}
\caption{$\alpha$-Lorenz curves ($\alpha=0.6,0.8,0.95$) for US Income-Wealth for years 1989, 2007, 2010, and 2022.}
\label{fig:alorenz_decades}
\end{figure}

\subsection{Resource shares}

Figure~\ref{fig:nonagg} shows resource shares of the~$25\%$ fraction of the the population below rank~$r=(0.5,0.5)$ as well as the~$90\%$ fraction\footnote{The fraction of the population is exactly~$0.95^2=0.9025$.} of the population below rank~$r=(0.95,0.95)$. The shares in both resources of the bottom $90\%$ have been steadily declining and the shares of wealth are lower than the shares of income. Between 2007 and 2010, we see income shares increasing relatively more than the decrease in wealth shares. Wealth and income shares both fell from 2010 to 2022, explaining the shift in the curves from figure ~\ref{fig:alorenz_decades}. As for the bottom $25\%$, changes over time are minor compared to those of the bottom $90\%$.

\begin{figure}[htbp]
\centering
\includegraphics[scale = 0.5]{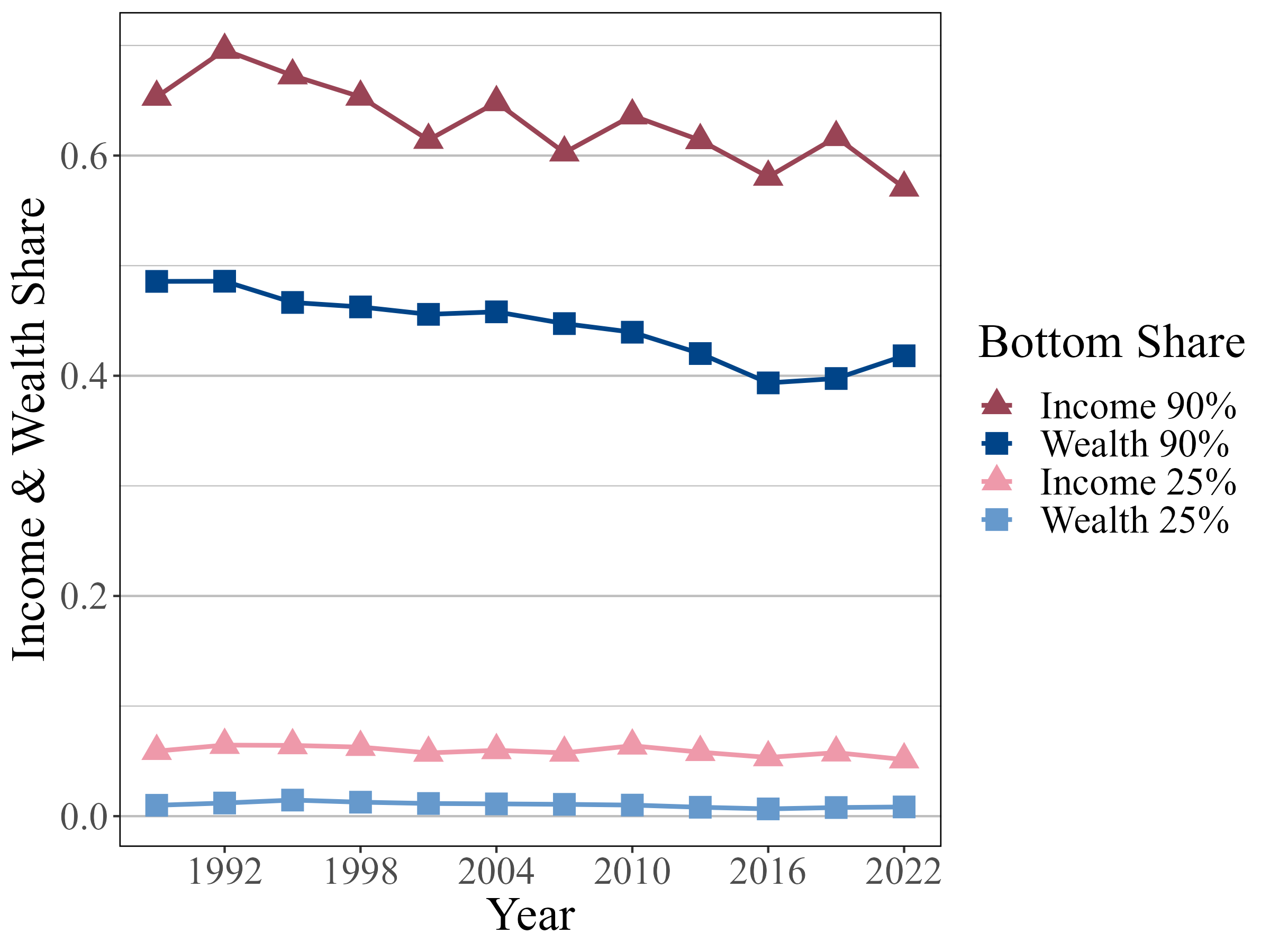}
\caption{The income and wealth shares of the~$90\%$ (resp. $25\%$) fraction of the population ranked below~$r = (0.95,0.95)$ (resp.~$(0.5,0.5)$).}
\label{fig:nonagg}
\end{figure}

\subsection{Gini indices}\label{sec:empGini}

Figure~\ref{fig:tau} displays the marginal Gini indices for income and for wealth, the multivariate Gini index based on \eqref{eq:GiniLmap}, as well as Kendall's $\tau$ for the dependence between income and wealth over time. The multivariate Gini shows a steady increase in overall inequality. If the resources were independent, the multivariate Gini would be the average of the marginal Ginis. In the present case, the multivariate Gini reveals a positive association between resources since it is higher than the average of the marginal Gini indices. The multivariate Gini shows reduced overall inequality between $2007$ and $2010$. The decreased correlation and income inequality may have been sufficient to offset the rise in wealth inequality.

\begin{figure}[htbp]
\centering
\includegraphics[scale = 0.5]{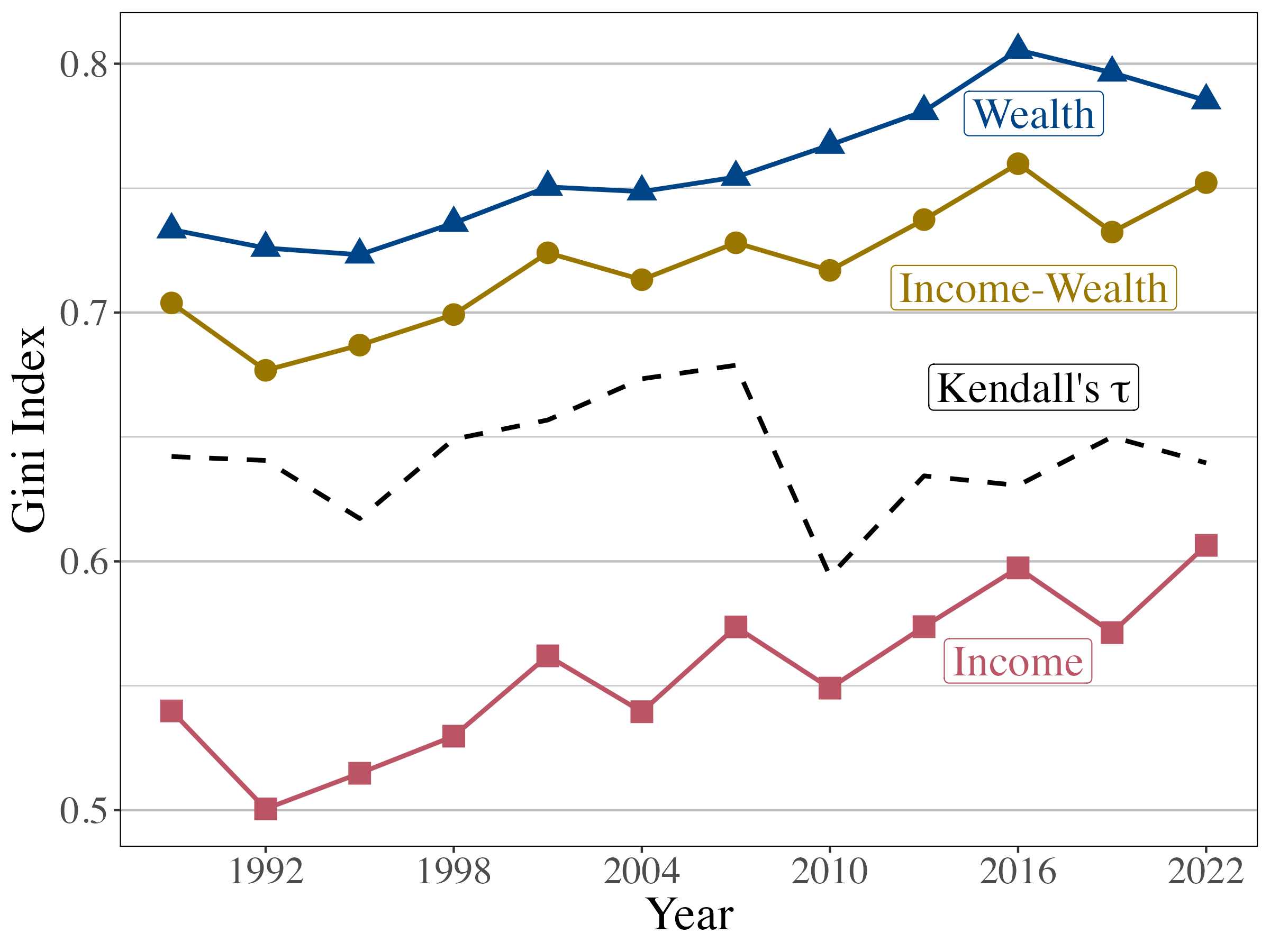}
\caption{Top: Gini indices for income and for wealth, multivariate Gini index, and Kendall's $\tau$ (dashed) for US Income-Wealth across 1989-2022.}
\label{fig:tau}
\end{figure}

Inequality analysis across groups can reveal further insights. Figure~\ref{fig:gini-race-age} shows Gini indices among White and Black respondents as well as among the working age (64 years and below) and retiring age populations (65 years and above). While overall inequality has worsened among White respondents, the inequality among Black respondents has remained steady. When comparing inequality across age groups, they both exhibit a steady increase in overall inequality, however the multivariate Gini among the retiring age group inherits the variability in the income marginal.

\begin{figure}[htbp]
\centering
\includegraphics[width = 0.45\textwidth]{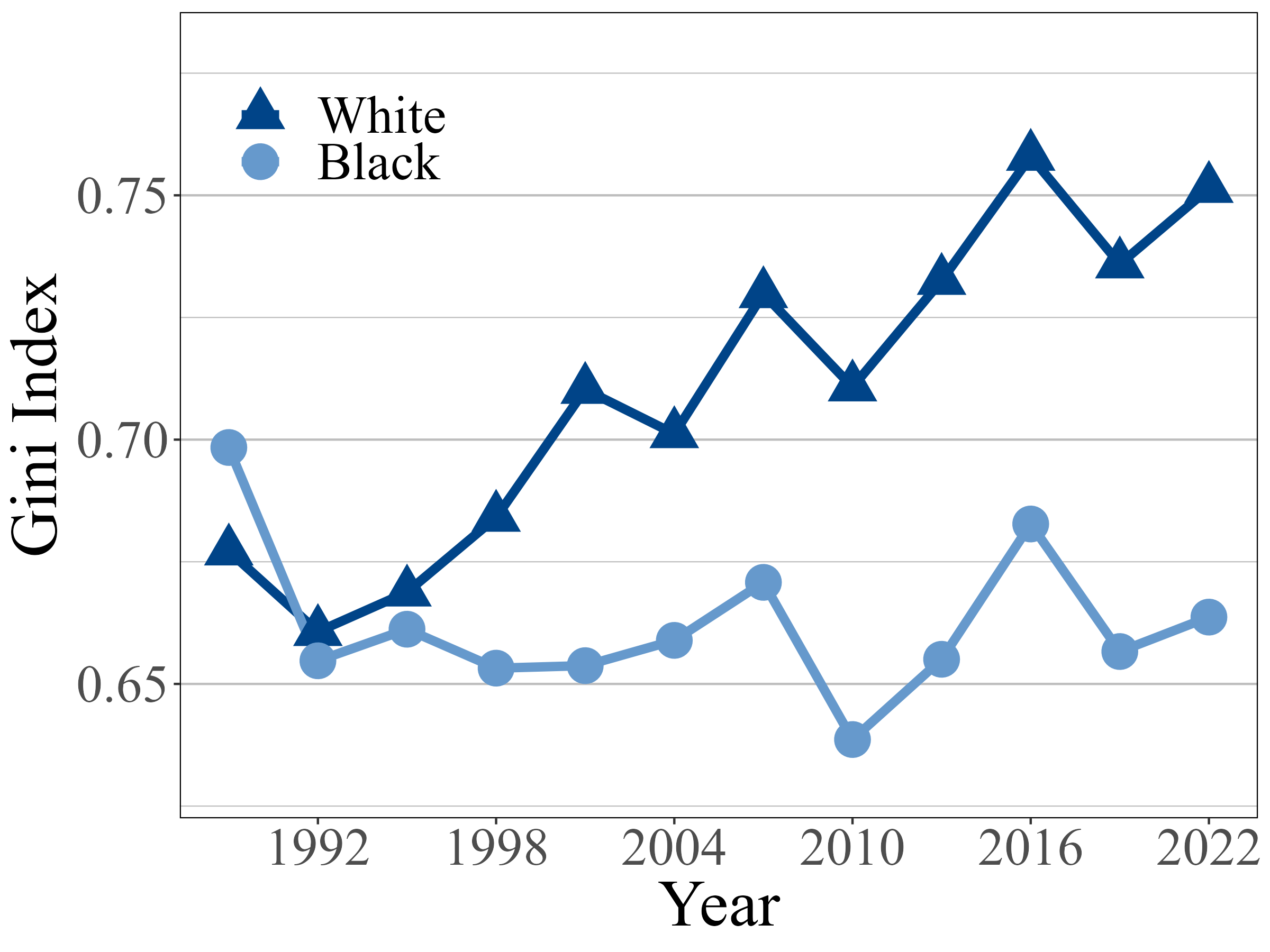}\includegraphics[width = 0.45\textwidth]{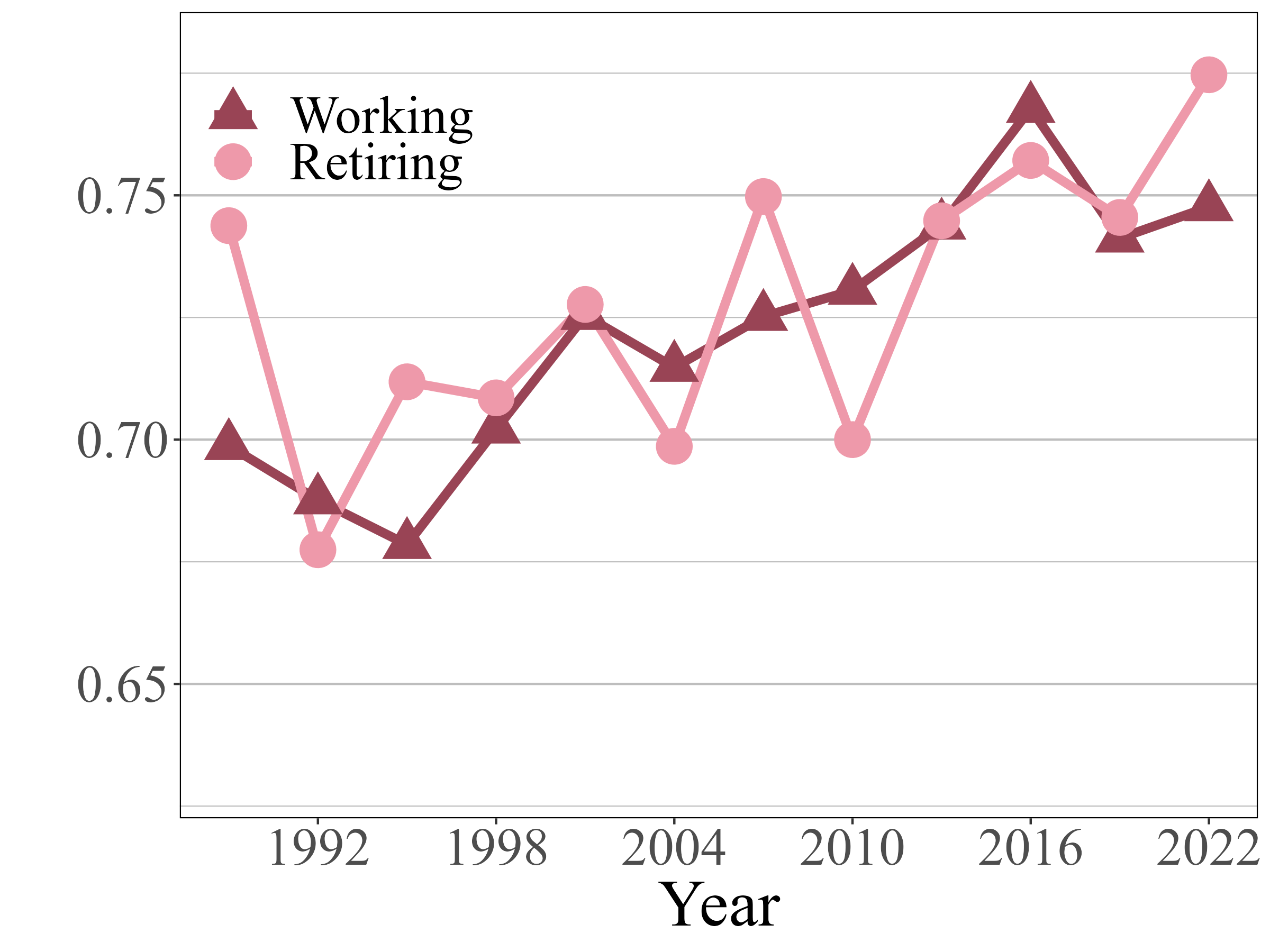}
\caption{Left: Multivariate Gini indices among White and Black respondents. A significant overall inequality gap has formed over time between the two segments. Right: Multivariate Gini indices among working age ($64$ years and below) and retiring age ($65$ years and above).}
\label{fig:gini-race-age}
\end{figure}


\section*{Concluding remarks}

In this paper, we propose a new multivariate extension of the Lorenz curve. We propose to emulate the \cite{gastwirth:1971} formulation of the Lorenz curve and define a Lorenz map by integrating vector quantiles of \cite{CGHH:2017}. The value of the Lorenz map is a vector of shares of each resource held by the poorer section of the population, as in the scalar case. Dominance of Lorenz maps defines a multi-attribute inequality dominance partial ordering. This Lorenz ordering is, like its scalar counterpart, an implementable criterion to compare inequality in allocations. It is, also like its scalar counterpart, equivalent to preference by any inequality averse rank dependent social evaluation functional. We propose an Inverse Lorenz Function and its level sets as a multivariate inequality visual comparison tool, and apply it to income and wealth in the United States between~$1989$ and~$2022$.

Multi-attribute inequality can vary substantially across population groups, as shown in \cite{MR:2016} within the information theoretic framework of \cite{Maasoumi:86}. 
There is a tension between heterogeneity across covariates and the anonymity axiom, according to which inequality measurement should not depend on individual's identities, but only on the distribution of resource allocations. As \cite{kolm:1977} pointed out, this tension is alleviated in part by including more variables in the allocation. This reinforces the motivation for a multidimensional approach to inequality measurement. 
As for the other potential sources of individual heterogeneity that matters to the social planner, anonymity can be restored by measuring inequality in each subgroup. We illustrate this in figure~\ref{fig:gini-race-age}. Beyond this, the conditional approach of \cite{MR:2016} could also be extended to our framework with the use of conditional vector quantiles in \cite{CCG:2016}.

Finally, we argue that a formal test of multi-attribute inequality dominance can be based on our Lorenz map, in analogy to dominance testing based on the traditional Lorenz curve in \cite{DD:2000} and references within. The statistical theory for such a test relies on multivariate stochastic dominance testing and the regularity of optimal transport maps, and is left for future research.

%


\appendix

{\small
\section{User's implementation guide}\label{sec:algo}

In this section, we will point to specific computational routines the reader may use to accomplish each step in section~\ref{sec:comp}. All of the figures in this paper were generated via implementations in the \texttt{R} language, however these implementations are standard and can be found in other languages and packages. Algorithm~\ref{algo:full} is therefore intended to guide the reader across the key steps of the implementation. We specialize to the case of $d=2$ and leave $d>2$ for the online supplement.

For the vector quantile, the \texttt{transport} package for \texttt{R} provides various implementations to solve \eqref{eq:Newton} that are found in the function \texttt{semidiscrete}. Both the standard descent approach of \cite{AHA:1998} and, our preferred, multiscale initialization and L-BGFS approach of \cite{merigot:2011} are supported methods. Alternatively and for all of our calculations, we used the \texttt{Rgeogram} package that is a wrapper of the C++ Geogram library implementation of \cite{merigot:2011}. Both packages provide the optimal weight vector $h$ required for the next steps.

The Lorenz map requires solving for the convex cells defined in \eqref{eq:cxcells}. The optimal~$h$ from the vector quantile calculation can be used as an input for the \texttt{power\_diagram} function in the \texttt{transport} package. It will provide as output the vertices of the convex cells. With a desired rank $r = (r_1,r_2)$, the next step is to find the area of the intersection of the convex cells with the rectangle $[0,r]$. The \texttt{sf} package provides these tools: first, the function \texttt{st\_polygon} transforms the vertices of the cells and rectangle into a ``polygon'' object that can be then read into \texttt{st\_intersection} and \texttt{st\_area}, which calculate the intersection and the area, respectively. These two functions can be used to calculate $\lambda(W_i^h \cap [0,r])$ in \eqref{eq:EmpLorenzMap} for all $i = 1, \dots, n$.

The Inverse Lorenz Function is calculated as an empirical distribution function. To facilitate drawing $\alpha$-Lorenz curves, it is recommended to form a uniform grid of values and calculate the ecdf at each value, e.g., $\{0.01,\dots,0.99,1\}^2$. Each pair should correspond to a row and column in a matrix of ecdf values. Then, pass this matrix as input into any function that plots contours of three-dimensional surfaces such as the base \texttt{R} function \texttt{contour} or \texttt{geom\_contour} as part of the \texttt{ggplot2} package. Finally, with a sample of Lorenz map values, one can compute the Gini index \eqref{eq:Gini} by plug-in.

\begin{algorithm}
\caption{Vector Quantiles, Lorenz Maps, and Inverse Lorenz Functions}\label{algo:full}
\begin{algorithmic}[1]
\Input
	\Desc{$(X,w)$}{\hspace{5mm} Weighted sample with normalized points $X_i$ and weights $w_i$}\newline
\EndInput

\Procedure{Vector-Quantile}{${X},{w}$}
\State Set convergence tolerance $\delta$, step size $\eta$.
\State  Set $s \gets 0$, initialize weight vector $h^0$. \Comment{e.g. multiscale approach}
\Repeat\Comment{Begin gradient descent}
    \State $s \gets s + 1$
    \State $h^{(s)} \gets h^{(s-1)} - \eta [w_i - \lambda(W_i^{h^{(s-1)}})]$ \Comment{Modifiable, e.g., L-BFGS}
\Until{$\norm{h^{(s)} - h^{(s-1)}} < \delta$}
\State $h := h^{(s)}$ solution to \eqref{eq:Newton} with $(X_i,w_i)$
\State $\{W_i^h\}_{i=1}^n \gets$ \eqref{eq:constraint} \Comment{e.g., computing power diagram using $h$}
\State  \textbf{return} $\{W_i^h\}$ defined by their vertices   \newline
\EndProcedure

\Procedure{Lorenz-Map}{$r,{X}, \{W_i^h\}_{i=1}^n$}
\Input
\Desc{$r$}{Vector of ranks of interest from $[0,1]^2$}
\Desc{$W_i^h$}{Vertices of cells that define the vector quantile of $(X,w)$}
\EndInput
\Output
\Desc{$\mathcal{L}_X$}{Lorenz map evaluated at $r$, a vector}
\EndOutput
\State $\mathcal{L}_X(r) \gets 0$
\For{$i = 1$ to $n$}
    \State Find vertices of $A_i := W_i^h \cap [0,r_1]\times[0,r_2]$ 
    \Comment{$A_i$ is convex}
    \State $\lambda_i:= \lambda(A_i)$: ordinary area of $A_i$ \Comment{Many equate to 0 or $w_i$}
    \State $\mathcal{L}_X(r) \gets \mathcal{L}_X(r) + X_i\lambda_i$
\EndFor
\State \textbf{return} $\mathcal{L}_X(r)$\newline
\EndProcedure

\Procedure{Inverse-Lorenz-Function}{${X}, \{W_i^h\}_{i=1}^n, m$}
\Input
    \Desc{$W_i^h$}{Vertices of cells that define the vector quantile of $(X,w)$}
    \Desc{$m$}{Size of pseudo sample from $U([0,1]^2)$}
\EndInput
\Output
\Desc{$l_X$}{Matrix of cumulative probabilities: rows and columns are coordinates}
\EndOutput
    \State Generate $Z:=$ evenly-spaced lattice in $[0,1]^2$ \Comment{e.g., $Z=\{0.01,\dots, 0.99,1\}^2$}
    \For{$j = 1$ to $m$}
        \State Draw a single $R_j \sim U[0,1]^2$
        \State $\mathcal{L}_j:=$ \textsc{Lorenz-Map}($R_j,X,\{W_i^h\}_{i=1}^n$)
    \EndFor
    \For{$z_{ij} \in Z$}
        \State ${\textstyle (l_X)_{ij} := m^{-1}\sum_{j=1}^m \mathds{1}\{\mathcal{L}_j \leq z_{ij}\}}$\Comment{empirical cdf of the $\mathcal{L}_j$}
    \EndFor
    \State\textbf{return} $l_X$ \Comment{Matrix of function values is usual input for contour plots}
\EndProcedure

\end{algorithmic}
\end{algorithm}


\section{Specific features and issues with the data source}
\label{app:data}

We review some known issues with the data set that impact our analysis. See \cite{HKL:2018} for a more in-depth account.

\subsection*{Sampling strategy}

The over sampling of high income and wealthy households is achieved by applying two distinct sampling techniques. The first sample is the core representative sample selected by a standard multi-stage area-probability design. The second is the high income supplement from statistical records derived from tax data by the Statistics of Income (SOI) division of the U.S. Internal Revenue Service. The stages sample disproportionately-- usually one-third of the final sample is from the high income supplement. Sampling in this way retains characteristic information of the population while also addressing the known selection biases of the wealthy not responding to surveys. In order to represent the population with this sample, weights must be constructed for each unit of observation. For more details on the construction of weights and their implications on the distribution of wealth, see \cite{KW:1999}.

\subsection*{Unit of observation and timing of interviews} 

The observations in this data set are not households, but rather a subset called the \emph{primary economic unit} (PEU) that may be individuals or couples and their financial dependents. For example in the 2016 data set 13\% of PEUs were in a household that contained one or more members not in their PEU. Additionally, the respondent is not necessarily the head of the household, so special care must be taken if analyzing attitudes in relation to some demographic characteristics such as age. 
The interviews start in May of the survey year, after most income taxes are filed and usually finish by the end of the calendar year, see \cite{kennickell:2017b} for challenges at the end of the interview period. Questions also may change over time so it is important to review the codebook each year when making comparisons across time. 

\subsection*{Multiple Imputation} 

During interviews, respondents may omit answers or provide a range of values for which their response belongs. This missing data impacts analysis and so the SCF contains 5 imputed values for each PEU, creating a sample 5 times larger than the actual number of respondents and forms 5 data sets called implicates. Imputation is done by the Federal Reserve Imputation Technique Zeta model (FRITZ), details can be found in \cite{kennickell:2017} based upon the ideas of \cite{rubin:2019}. 
Multiple imputation for missing data provide multiple probable values. Each of these form a data set from which sample statistics can be found. The technique of Repeated Imputation Inference (RII) is applied in our analysis. For each implicate $\ell = 1,2,3,4,5$, the empirical Inverse Lorenz Function $\widehat{l}_{\ell *}$ is calculated using the appropriate quantile map estimator taking into account sample weights. Then the repeated-imputation estimate of $l$ is
\[
	\widehat{l}(z) = \frac{1}{5}\sum_{\ell = 1}^5 \widehat{l}_{\ell *}(z).
\]
Calculation of the Gini index follows a similar procedure. Accounting for the multiple imputation in the calculation of standard errors is an important issue, but is not revelant to our visualization technique. For more information on multiple imputation and inference with imputed values, see \cite{rubin:1996}.

\subsection*{Definition of Wealth} 

In the literature, there is no consensus on what factors should be included in wealth measurement. \cite{wolff:2021} defines wealth as marketable weath, which is the sum of marketable or fungible assets less the current value of all debts. \cite{bricker:2017} define wealth as net worth including those assets which are not readily transformed into consumption: properties, vehicles, etc. In our analysis we consider all assets, including financial, as our wealth variable. 


\section{Additional details and results}


\subsection{Vector ranks and quantiles}
\label{app:VQ}

Proposition~\ref{prop:polar} below, a seminal result in the theory of
measure transportation (see \citeauthor{villani:2003} (\citeyear{villani:2003}, \citeyear{villani:2009})), states essential uniqueness of the gradient of a
convex function (hence cyclically monotone map) that pushes the uniform distribution on~$[0,1]^d$ into the distribution of an allocation~$X$.

Following %
\citet{villani:2003}, we let~$g_\#\nu $ denote the \emph{image measure} (or 
\emph{push-forward}) of a measure~$\nu$ by a measurable map~$g:\mathbb{R}%
^{d}\rightarrow \mathbb{R}^{d}$. Explicitly, for any Borel set~$A$, $g_\#\nu
(A):=\nu (g^{-1}(A))$. 
The symbol~$\nabla$ denotes the gradient, and~$D$ the Jacobian.
The convex conjugate of a convex lower semicontinuous function~$\psi$ is
denoted~$\psi^\ast$. 

\begin{proposition}[\citealt{mccann:1995}]
\label{prop:polar} Let $P$ and $\lambda $ be two distributions on $\mathbb{R}%
^{d} $. ($i$) If $\lambda $ is absolutely continuous with respect to the Lebesgue
measure on~$\mathbb{R}^{d}$, with support contained in a convex set $%
\mathcal{U}$, the following statements hold: there exists a convex function $\psi :%
\mathcal{U}\rightarrow \mathbb{R}\cup \{+\infty \}$ such that $\nabla \psi_\#\lambda =P$. The function $\nabla \psi $ exists and is unique, $\lambda $-almost
everywhere. ($ii$) If, in addition, $P$ is absolutely continuous on $\mathbb{R}%
^{d}$ with support contained in a convex set $\mathcal{X}$, the following
holds: there exists a convex function~$\psi ^{\ast }:\mathcal{X}\rightarrow 
\mathbb{R}\cup \{+\infty \}$ such that $\nabla \psi^{\ast }_\#P=\lambda $. The
function $\nabla \psi ^{\ast }$ exists, is unique and equal to $\left(
\nabla \psi \right) ^{-1}$, $P$-almost everywhere.
\end{proposition}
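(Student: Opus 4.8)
The statement is Theorem~1 of \cite{mccann:1995} (itself an extension of \cite{Brenier:91} past the finite-second-moment case), so one option is simply to cite it; to sketch a proof, the plan is to realize $\nabla\psi$ as the essentially unique optimal transport map for the quadratic cost from $\lambda$ to $P$, and then to deduce part~(ii) from convex conjugacy. I would organize the argument in three steps: existence in~(i), uniqueness in~(i), and then~(ii).

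First, assuming for the moment that $\lambda$ and $P$ have finite second moments, I would solve the Monge--Kantorovich problem $\min_{\gamma\in\Pi(\lambda,P)}\int\tfrac12|x-y|^2\,d\gamma$; a minimizer $\gamma^\star$ exists by tightness and lower semicontinuity. A standard variational argument (perturbing $\gamma^\star$ along finite cycles) shows that $\operatorname{supp}\gamma^\star$ is cyclically monotone, so Rockafellar's theorem places it inside the graph of the subdifferential $\partial\psi$ of a proper lower semicontinuous convex function $\psi:\mathcal U\to\mathbb R\cup\{+\infty\}$ (see \cite{villani:2009}). Since a convex function is differentiable off a Lebesgue-null set and $\lambda$ is absolutely continuous, the gradient $\nabla\psi$ exists $\lambda$-almost everywhere, where $\partial\psi(x)=\{\nabla\psi(x)\}$; hence $\gamma^\star=(\operatorname{id},\nabla\psi)_\#\lambda$ and $\nabla\psi_\#\lambda=P$. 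The general case without moment assumptions --- which is exactly McCann's contribution --- I would recover by approximation: truncate $\lambda$ and $P$ to large balls and renormalize, apply the bounded case to obtain maps $\nabla\psi_n$, and pass to a limit, using that cyclical monotonicity is stable under limits and that the limiting map still pushes $\lambda$ to $P$. This approximation step is where I expect the real work to be, since without moments the transport functional may be identically $+\infty$ and Kantorovich duality is unavailable.

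For uniqueness in~(i), suppose $\nabla\psi_1$ and $\nabla\psi_2$ are gradients of convex functions both pushing $\lambda$ to $P$. I would exploit the rigidity of cyclically monotone maps: from the Fenchel equality $\psi_1(x)+\psi_1^\ast(\nabla\psi_1(x))=\langle x,\nabla\psi_1(x)\rangle$ and the Fenchel--Young inequality $\psi_1(x)+\psi_1^\ast(\nabla\psi_2(x))\ge\langle x,\nabla\psi_2(x)\rangle$, integrating both against $\lambda$ and using that $\nabla\psi_1$ and $\nabla\psi_2$ have the same push-forward $P$, one finds that the Young inequality must hold with equality $\lambda$-almost everywhere, i.e. $\nabla\psi_2(x)\in\partial\psi_1(x)=\{\nabla\psi_1(x)\}$; hence $\nabla\psi_1=\nabla\psi_2$ $\lambda$-a.e. (Equivalently: each such map induces an optimal plan, and the optimal plan is unique when the source measure is absolutely continuous.)

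Finally, for~(ii) I would apply part~(i) with the roles of $\lambda$ and $P$ exchanged --- legitimate precisely because $P$ is now assumed absolutely continuous --- to obtain a convex $\phi:\mathcal X\to\mathbb R\cup\{+\infty\}$ with $\nabla\phi_\#P=\lambda$, unique $P$-a.e. It then suffices to show $\nabla\psi^\ast$ also pushes $P$ forward to $\lambda$. For a convex $\psi$, $y=\nabla\psi(x)$ implies $x\in\partial\psi^\ast(y)$, so at any point $x$ where $\nabla\psi(x)$ exists and $\psi^\ast$ is differentiable at $\nabla\psi(x)$ one has $\nabla\psi^\ast(\nabla\psi(x))=x$. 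The first condition holds $\lambda$-a.e.; the second holds on a set of $\lambda$-measure equal to $P(\{y:\psi^\ast\text{ differentiable at }y\})=1$, because $P=\nabla\psi_\#\lambda$ and $\psi^\ast$, being convex, is differentiable Lebesgue-a.e.\ and hence $P$-a.e. Therefore $\nabla\psi^\ast_\#P=(\nabla\psi^\ast\circ\nabla\psi)_\#\lambda=\operatorname{id}_\#\lambda=\lambda$, and the uniqueness from~(i) yields $\nabla\phi=\nabla\psi^\ast$ and $\nabla\psi^\ast=(\nabla\psi)^{-1}$, both $P$-a.e. Apart from the no-moment existence step, the subtlety to watch in~(ii) is precisely this measure-theoretic bookkeeping: one needs $\psi^\ast$ differentiable not merely Lebesgue-a.e.\ but on a set whose $\nabla\psi$-preimage carries full $\lambda$-mass, which is where the absolute continuity of $P$ does its work.
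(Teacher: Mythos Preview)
The paper does not supply its own proof of this proposition: it is stated in appendix~\ref{app:VQ} as Theorem~1 of \cite{mccann:1995} (extending \cite{Brenier:91} and \cite{RR:90}) and simply cited. Your opening sentence already identifies this, and as far as matching the paper goes, the citation alone is the ``same approach.''

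Since you went further and sketched a proof, a remark on the one genuine soft spot: your uniqueness argument integrates the Fenchel--Young inequality $\psi_1(x)+\psi_1^\ast(\nabla\psi_2(x))\ge\langle x,\nabla\psi_2(x)\rangle$ against $\lambda$ and compares it with the equality case. This is the standard route when $\lambda$ and $P$ have finite second moments, but in McCann's setting---which is precisely the no-moment extension you flagged for existence---none of $\psi_1$, $\psi_1^\ast$, or $\langle x,\nabla\psi_i(x)\rangle$ need be $\lambda$- or $P$-integrable, so the subtraction-and-integration step is not justified as written. McCann's own uniqueness proof (his Lemma~9) avoids this by a localized, measure-theoretic argument rather than a global duality computation. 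Your existence sketch correctly isolates the approximation step as the hard part, and your treatment of part~(ii) via convex conjugacy and the differentiability bookkeeping is sound; it is only the uniqueness step that needs the same no-moment care you gave existence.
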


Proposition~\ref{prop:polar} is an extension of \cite{Brenier:91} (see also 
\cite{RR:90}). It removes the finite variance requirement, which is
undesirable in our context. Proposition~\ref{prop:polar} is the basis for
the definition of vector quantiles and ranks in \cite{CGHH:2017}. In our
context, it is applied with uniform reference measure.\footnote{%
This vector quantile notion was introduced in \cite{GH:2012} and \cite%
{EGH:2012} and called $\mu$-quantile.}

In case~$d=1$, gradients of convex functions are nondecreasing functions,
hence vector quantiles and ranks reduce to
classical quantile and cumulative distribution functions. As the notation
indicates, the function~$\psi ^{\ast }$ of proposition~\ref{prop:polar} 
is the convex conjugate of~$\psi $. In case of absolutely continuous distributions~$P$ on~$%
\mathbb{R}^{d}$ with finite variance, the vector rank function solves a
quadratic optimal transport problem, i.e., vector rank~$R$ minimizes, among
all functions~$T$ such that~$T(X)$ is uniform on~$[0,1]^{d}$, the quantity~$%
\mathbb{E}\Vert X-T(X)\Vert ^{2}$, where~$X\sim P$.

Proposition~\ref{prop:polar} is the basis for definition~\ref{def:VQ}. In the proofs, we shall use the notation~$Q_X=\nabla\psi_X$ for the vector quantile of a random vector~$X$ and call convex function~$\psi_X$ 
the transport potential associated with the distribution of~$X$.


\subsection{Egalitarian multi-attribute allocations}
\label{sec:egal}

\subsubsection{Identical allocations: additional details}
In this section, we consider bivariate allocations only. A sufficient condition for assumption~\ref{+regdep} is supermodularity of
the potential function~$\psi_X$ of allocation~$X$, as shown in lemma~\ref{lemma:PQD} below. We also show in lemma~\ref{lemma:PQD}, that
supermodularity of the potential function~$\psi_X$ also implies positive quadrant
dependence of the two components~$X_{1}$ and~$X_{2}$ of~$X$, i.e.,~$\mathbb P(X_1\leq x_1,X_2\leq x_2)\geq \mathbb P(X_1\leq x_1)\mathbb P(X_2\leq x_2)$, for all~$x_1,x_2\in\mathcal X$, see \cite{lehmann:1966}.

\begin{lemma}[Supermodular potential]
\label{lemma:PQD}
Suppose $X$ has a supermodular potential function, i.e.,
\begin{eqnarray*}
    \mathbb P(\partial^2\psi_X(U)/\partial u_1\partial u_2\geq0)=1,
\mbox{ with } U\sim U[0,1]^2.
\end{eqnarray*}
Then, assumption~\ref{+regdep} holds, and~$X_{1}$ and $X_{2}$ are positive quadrant dependent.
\end{lemma}

For allocations satisfying assumption~\ref{+regdep}, we show that Lorenz map and Inverse Lorenz Function of the identical allocation serve as upper and lower bounds, respectively.
Without assumption \ref{+regdep}, some allocations may have a Lorenz map that is
component-wise larger than the Lorenz map of the identical allocation for some ranks.
To illustrate the point, consider the potential~$\psi_X(u) = (u_1-u_2)^2/2 +u_1+u_2$. It corresponds to an allocation~$X$, whose distribution is
supported on the line~$X_1+X_2=2$. Calculating the Lorenz map, we obtain
\begin{eqnarray*}
\Lmap(r) & = & 
\left[
\begin{array}{c}
r_1r_2(r_1-r_2)/2+r_1r_2 \\ \\
r_1r_2(r_2-r_1)/2+r_1r_2
\end{array}
\right].
\end{eqnarray*}
Notice, in particular, that~$\Lmap_1(r) >r_1r_2$ in the region where $r_1>r_2$.
If the implicit relative price of resource~$2$ is~$1$, allocation~$X$ is an egalitarian allocation, since all individuals have equal budgets. However, this allocation does not satisfy assumption~\ref{+regdep} and its Lorenz map is not dominated by~$(r_1r_2,r_1r_2)$ as we have shown. This apparent departure from properties of the scalar Lorenz curve is due to the fact that an allocation with~$X_1+X_2=2$ a.s. can also be considered egalitarian, as we discuss in the following section. 

\subsubsection{Egalitarian allocations}

The identical allocation with Lorenz map~$(r_1r_2,r_1r_2)$ is a very special instance of egalitarian allocation. We extend this narrow notion of egalitarian allocation to include income egalitarianism, in the terminology of \cite{kolm:1977}.
In the special case where the two resources are transferable with relative price~$p$ of the second resource, an allocation is deemed egalitarian if all agents have the same budget endowment, i.e., if~$X_1+pX_2=1+p$ (where the constant value~$1+p$ is derived from the normalization~$\mathbb EX_1=\mathbb EX_2=1$). In the general case of non (or imperfectly) transferable resources, we call egalitarian the allocations with equalized shadow budgets.

\begin{definition}[Egalitarian allocation]
\label{def:egal}
An allocation~$X$ such that~$X_1+pX_2=1+p$ a.s., for some~$p>0$, is called egalitarian.
\end{definition}

Another way to interpret egalitarianism of such an allocation, beyond shadow budget equality, is through the perfect compensation of inequality in the marginal resource allocations by perfect negative correlation between resource allocations. The vector quantile and Lorenz map of egalitarian allocations can be characterized in the following way.

\begin{proposition}
\label{prop:egal}
Let~$(U_1,U_2)$ be a random vector with distribution~$U[0,1]^2$. $(i)$ An egalitarian allocation~$X$ such that~$X_1+pX_2=1+p$, admits potential~$\psi_X(u_1,u_2)=u_1+u_2+v(pu_1-u_2)$ for some convex function~$v$ such that~$\int_0^1v(p-z)dz=\int_0^1v(z)dz$ and allocation~$X$ is equal in distribution to~$(1+pv^\prime(pU_1-U_2),1 - v^\prime(pU_1-U_2))$; $(ii)$ The Lorenz map is given by
\begin{eqnarray*}
\mathcal L_X(r) & = & 
\begin{bmatrix}
r_1r_2 + \int_0^{r_2} [v(pr_1-u_2)-v(-u_2)]\;du_2 \\
r_1r_2 -\frac{1}{p} \int_0^{r_2} [v(pr_1-u_2)-v(-u_2)]\;du_2
\end{bmatrix};
\end{eqnarray*}
$(iii)$ If, in addition, $F_1^{-1}$ denotes the quantile function of~$X_1$, then
\begin{eqnarray*}
v(z) & = & \frac{1}{p} \int_0^z \left( F_1^{-1}(H_p(y))-1 \right) dy,
\end{eqnarray*}
where~$H_p$ is the cdf of the random variable~$pU_1-U_2$; see lemma~\ref{H_p} below for an explicit expression for~$H_p(z)$.
\end{proposition}

\begin{lemma}[Explicit formula for~$H_p(z)$]
\label{H_p}
The cumulative distribution function of~$Z=pU_1-U_2$ with~$(U_1,U_2)\sim U[0,1]^2$ is given by the following.
\begin{eqnarray*}
H_p(z) & = & \left\{ \begin{array}{llc} 
1 & \mbox{if} & p<z, \\ \\
1-\frac{p}{2}+z-\frac{z^2}{2p} & \mbox{if} & \max\{p-1,0\}<z\leq p, \\ \\
 \frac{1+2z}{2p} & \mbox{if} & 0<z\leq \max\{p-1,0\}, \\ \\
1 - \frac{p}{2} + z & \mbox{if} & \min\{p-1,0\}<z\leq 0, \\ \\
\frac{1}{p}\left(\frac{1}{2}+z+\frac{z^2}{2}\right) & \mbox{if} & -1< z\leq \min\{p-1,0\},\\ \\
0 & \mbox{if} & z\leq -1.
\end{array}
\right.
\end{eqnarray*}
\end{lemma}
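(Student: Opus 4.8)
The plan is to compute $H_p(z)=\mathbb P(pU_1-U_2\le z)$ directly as an area in the unit square. Rewriting the event as $\{(u_1,u_2)\in[0,1]^2:u_2\ge pu_1-z\}$, we get $H_p(z)=1-\lambda(S_z)$, where $S_z:=\{(u_1,u_2)\in[0,1]^2:0\le u_2<pu_1-z\}$ is the part of the square lying below the line $\ell_z:u_2=pu_1-z$. Slicing at fixed $u_1$, the vertical extent of $S_z$ is $\min\{1,(pu_1-z)_+\}$, so
\[
\lambda(S_z)=\int_0^1\min\{1,(pu_1-z)_+\}\,du_1 .
\]
First I would dispose of the two extreme ranges: since $Z:=pU_1-U_2$ takes values in $[-1,p]$, we have $H_p(z)=0$ for $z\le-1$ and $H_p(z)=1$ for $z>p$, and the interior formulas below will be seen to extend continuously to $z=-1$ (value $0$) and $z=p$ (value $1$).

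For $-1<z\le p$ the integrand changes form at the abscissae $u_1=z/p$ (where $\ell_z$ meets the bottom edge $u_2=0$) and $u_1=(z+1)/p$ (where $\ell_z$ meets the top edge $u_2=1$); which of these actually lie in $[0,1]$ is governed by the position of $z$ relative to the thresholds $0$, $p-1$ and $p$, and the ordering of $0$ and $p-1$ depends on whether $p\ge1$ or $p<1$. This is exactly why the statement uses $\max\{p-1,0\}$ and $\min\{p-1,0\}$: when $p\ge1$ the interval $(\min\{p-1,0\},0\,]=(0,0]$ is empty, and when $p<1$ the interval $(0,\max\{p-1,0\}\,]=(0,0]$ is empty, so in each regime the five branches collapse to the four sub-cases that genuinely occur. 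In each sub-case $\lambda(S_z)$ is the area of a triangle or trapezoid under $\ell_z$, possibly topped off by a rectangle of height $1$ once $\ell_z$ leaves through the right edge; evaluating $\int_a^b(pu_1-z)\,du_1=\tfrac p2(b^2-a^2)-z(b-a)$ at the appropriate endpoints $a,b\in\{0,\,z/p,\,(z+1)/p,\,1\}$ and simplifying yields the five stated expressions for $H_p(z)=1-\lambda(S_z)$.

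I expect the only real work to be bookkeeping: correctly enumerating the sub-cases so that the half-open intervals partition $(-1,p]$ in both the $p\ge1$ and $p<1$ regimes, and matching each geometric configuration (line entering through the left versus bottom edge, exiting through the top versus right edge) to the correct antiderivative evaluation and rectangular correction term. As a sanity check I would verify that consecutive branches agree at the breakpoints $z\in\{-1,\min\{p-1,0\},0,\max\{p-1,0\},p\}$; this both catches algebra slips and confirms that $H_p$ is continuous, as it must be since $Z=pU_1-U_2$ has a bounded density.
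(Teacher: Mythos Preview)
Your approach is correct and is the natural one: compute $H_p(z)=1-\lambda(S_z)$ by integrating the vertical slice height $\min\{1,(pu_1-z)_+\}$ over $u_1\in[0,1]$, with the case split driven by whether the entry/exit abscissae $z/p$ and $(z+1)/p$ of the line $\ell_z$ fall inside $[0,1]$. The paper in fact states this lemma without proof, so there is nothing to compare against; your plan is exactly the standard elementary computation one would supply, and the continuity checks at the breakpoints are a sensible safeguard.
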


We see in proposition~\ref{prop:egal} that the distribution of the egalitarian allocation~$X$ is entirely determined by the convex function~$v$, which is itself determined by the distribution of one of the marginals of~$X$. This follows from the deterministic linear relationship between the two resource allocations. The perfect negative correlation compensates any inequality in the marginal allocations. 

With this definition of egalitarian allocations, we show that a large class of allocations are dominated in the Lorenz order by egalitarian allocations, and that egalitarian allocations are maximal in the Lorenz order of definition \ref{def:s-order}.

\begin{assumption}
\label{ass:egal}
For some~$p>0$, the potential~$\psi_X$ of allocation~$X$ satisfies for all~$z\in[-1,p]$:
\begin{eqnarray*}
\begin{array}{lcl}
\sup_{pu_1-u_2=z}\left\{ -\frac{1}{p}\frac{\partial^2\psi_X(u_1,u_2)}{\partial u_1\partial u_2}\right\} & \leq & \inf_{pu_1-u_2=z}\min\left\{\frac{1}{p^2}
\frac{\partial^2\psi_X(u_1,u_2)}{\partial u_1^2},\frac{\partial^2\psi_X(u_1,u_2)}{\partial u_2^2}\right\}.
\end{array}
\end{eqnarray*} 
\end{assumption}

Before stating the main result of this section, which is an extension of property~(7) in section~\ref{sec:props}, we discuss sufficient conditions for assumption~\ref{ass:egal} and examples of classes of allocations that satisfy assumption~\ref{ass:egal}. The following lemma provides sets of sufficient conditions based on a suitable choice of~$p$.
\begin{lemma}[Sufficient condition for assumption~\ref{ass:egal}]
\label{lemma:egal-suff}
An allocation with potential~$\psi_X$ satisfies assumption~\ref{ass:egal} if any of the following conditions hold.
\begin{enumerate}
\item[$(i)$] The potential~$\psi_X$ is supermodular.
\item[$(ii)$] The potential~$\psi_X$ satisfies:
\begin{eqnarray}
\label{eq:quad}
 \sqrt{ \inf_{u_1,u_2} \frac{\partial^2\psi_X(u_1,u_2)}{\partial u_1^2} \times \inf_{u_1,u_2}\frac{\partial^2\psi_X(u_1,u_2)}{\partial u_2^2} }
 + \inf_{u_1,u_2} \frac{\partial^2\psi_X(u_1,u_2)}{\partial u_1\partial u_2} \geq 0.
\end{eqnarray}
\item[$(iii)$] The function
\begin{eqnarray*}
p(u_1,u_2) & := & \sqrt{ \frac{\partial^2\psi_X(u_1,u_2)}{\partial u_1^2} \bigg/ \frac{\partial^2\psi_X(u_1,u_2)}{\partial u_2^2} }
\end{eqnarray*}
is positive and constant equal to~$p$ over~$[0,1]^2$ and, for all~$z\in[-1,p]$, the Hessian of~$\psi_X$ is constant over~$pu_1-u_2=z$.
\end{enumerate}
\end{lemma}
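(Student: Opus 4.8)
The plan is to handle the three conditions separately; in each case I exhibit (or characterize) a value $p>0$ for which the displayed inequality of assumption~\ref{ass:egal} holds, reducing everything to pointwise algebraic facts about the Hessian of $\psi_X$. Throughout write $a:=\partial^2\psi_X/\partial u_1^2$, $b:=\partial^2\psi_X/\partial u_1\partial u_2$ and $c:=\partial^2\psi_X/\partial u_2^2$, evaluated at $(u_1,u_2)$. Because $\psi_X$ is convex, its Hessian is positive semidefinite wherever it exists (by Alexandrov's theorem, Lebesgue-almost everywhere, which is all that is ever needed when assumption~\ref{ass:egal} is invoked downstream), so $a\ge 0$, $c\ge 0$ and $b^2\le ac$. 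In this notation assumption~\ref{ass:egal} requires a $p>0$ with, for every $z\in[-1,p]$,
\[
\sup_{pu_1-u_2=z}\Bigl(-\tfrac1p\,b\Bigr)\ \le\ \inf_{pu_1-u_2=z}\min\Bigl\{\tfrac1{p^2}a,\;c\Bigr\}.
\]
For $(i)$, supermodularity of $\psi_X$ means $b\ge 0$ everywhere, so the left-hand side above is $\le 0$ while the right-hand side is $\ge 0$ since $a,c\ge 0$; the inequality holds for any $p>0$, e.g.\ $p=1$.

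For $(ii)$, put $\underline a:=\inf_{[0,1]^2}a$, $\underline c:=\inf_{[0,1]^2}c$, $\underline b:=\inf_{[0,1]^2}b$. If $\underline b\ge 0$ then $b\ge 0$ everywhere and we are back in case $(i)$; otherwise the hypothesis $\sqrt{\underline a\,\underline c}+\underline b\ge 0$ forces $\underline a,\underline c>0$. It suffices to establish the stronger $z$-free bound $-\underline b/p\le\min\{\underline a/p^2,\underline c\}$, since $-b/p\le-\underline b/p$ and $\min\{a/p^2,c\}\ge\min\{\underline a/p^2,\underline c\}$ pointwise, hence the same holds after taking the line-wise sup and inf. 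That bound is the conjunction of $p\le\underline a/|\underline b|$ and $p\ge|\underline b|/\underline c$, and such a $p$ exists exactly when $|\underline b|/\underline c\le\underline a/|\underline b|$, i.e.\ $|\underline b|^2\le\underline a\,\underline c$, which is precisely~(\ref{eq:quad}); concretely $p=\sqrt{\underline a/\underline c}$ lies in the required interval.

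For $(iii)$, the identity $a=p^2c$ on all of $[0,1]^2$ gives $\min\{a/p^2,c\}=c$, so the right-hand side of assumption~\ref{ass:egal} equals $\inf_{pu_1-u_2=z}c$; note that $(u_1,u_2)\in[0,1]^2$ forces $pu_1-u_2\in[-1,p]$, so these lines are exactly the relevant ones. By hypothesis the Hessian (hence $b$ and $c$) is constant along each such line, so both the sup and the inf are attained at the common value and the inequality reduces to the pointwise statement $-b/p\le c$, which follows from positive semidefiniteness: $b^2\le ac=p^2c^2$, hence $-b\le|b|\le pc$.

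The only genuinely delicate step is the choice of $p$ in $(ii)$: one must recognize that the scalar inequality~(\ref{eq:quad}) is exactly the solvability condition for the two linear constraints on $p$ that force the ``$a$-branch'' and the ``$c$-branch'' of the minimum to simultaneously dominate $-\underline b/p$. The remainder is bookkeeping with the positive-semidefiniteness relation $b^2\le ac$, together with the standing convention that the Hessian identities hold almost everywhere (Alexandrov), which is all assumption~\ref{ass:egal} is ever used for.
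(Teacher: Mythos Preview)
Your proof is correct and, for part~$(ii)$, follows exactly the paper's argument: pass to the global infima, choose $p=\sqrt{\underline a/\underline c}$, and read off the sufficient inequality~(\ref{eq:quad}). The paper's written proof in fact only treats~$(ii)$ explicitly; your arguments for~$(i)$ (trivial from $b\ge0$ and $a,c\ge0$) and~$(iii)$ (reduce to the pointwise bound $-b\le pc$, which follows from $b^2\le ac=p^2c^2$) simply fill in what the paper leaves to the reader, using the same ideas.
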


The first sufficient condition in lemma~\ref{lemma:egal-suff}, i.e., supermodularity of the potential~$\psi_X$, imposes a form of positive dependence between the two resources, which implies assumption~\ref{ass:egal} (and \ref{+regdep}). However, assumption~\ref{ass:egal} also accommodates allocations that do not exhibit positive dependence. For instance, the mixture of an egalitarian allocation with a positively dependent one satisfies assumption~\ref{ass:egal}.
\begin{example}
\label{ex:egal2}
An allocation with potential~$\psi(u_1,u_2)=u_1+u_2+v(pu_1-u_2)+\tilde\psi(u_1,u_2)$, with~$v$ convex,~$p>0$ and~$\tilde{\psi}$ ultramodular, satisfies assumption~\ref{ass:egal}. It mixes a perfectly negatively correlated allocation with a positively dependent one.
\end{example}
Aspecial case of condition~(2) in lemma~\ref{lemma:egal-suff} is the case, where~$\psi_X$ is a quadratic function, hence has a constant Hessian. Indeed, in that case, convexity of~$\psi_X$ immediately yields~(\ref{eq:quad}). 
\begin{example}
All allocations with quadratic potential~$\psi_X(u_1,u_2)=a_1u_1+a_2u_2+a_{11}u_1^2+a_{12}u_1u_2+a_{22}u_2^2$ with $a_{1},a_{11}, a_2, a_{22}, a_{12} \in \mathbb{R}$, i.e., allocations of the form~$X=(a_1+2a_{11}U_1+a_{12}U_2,a_2+2a_{22}U_2+a_{12}U_1)$, with~$(U_1,U_2)\sim U[0,1]^2$, satisfy assumption~\ref{ass:egal}.
\end{example}
Sufficient condition~(2) in lemma~\ref{lemma:egal-suff}  can also be used to show that allocations where the two marginal resource allocations are independent also satisfy assumption~\ref{ass:egal}. More generally, a large class of allocations defined as deviations from independence satisfy assumption~\ref{ass:egal} as formalized in the following example.
\begin{example}
An allocation~$X$ with potential~$\psi_X(u_1,u_2)=\phi_1(u_1)+\phi_2(u_2) +\rho \phi(u_1,u_2)$ satisfies assumption~\ref{ass:egal} if~$\phi_1'' \geq B_1, \phi_2'' \geq B_2, \frac{\partial^2\phi }{\partial u_1^2} \geq B_{11}, \frac{\partial^2\phi }{\partial u_2^2} \geq B_{22}$, $\frac{\partial^2\phi }{\partial u_1u_2} \geq B_{12}$, and~$-\rho B_{12} \leq \sqrt{(B_1+\rho B_{11})(B_2+\rho B_{22}) } $ with $B_{1},B_{11}, B_2, B_{22}, B_{12} \in  \mathbb{R}$. The case~$\rho=0$ is the case of independent marginal allocations.
\end{example}

Assumption~\ref{ass:egal} is not satisfied, however, in case~$X_1$ and~$X_2$ are perfectly negatively dependent, i.e., $X_2=-\phi(X_1)$ with increasing~$\phi$, when~$\phi$ is nonlinear.

Under assumption~\ref{ass:egal}, we can complement property~(7) in section~\ref{sec:props} and emulate the traditional property of Lorenz curves, which are maximal at perfect equality. Here we show that egalitarian allocations dominate all allocations that satisfy assumption~\ref{ass:egal}, and are themselves undominated thereby forming a class of distributions that are maximal under the Lorenz order.

\begin{continued}[Property~(7) continued][Lorenz map maximal at egalitarian allocations]
\label{P3}
For any allocation~$X$ satisfying assumption~\ref{ass:egal}, there is an egalitarian allocation~$\tilde X$ such that~$X\preccurlyeq_{\mathcal L}\tilde X$, i.e.,~$\mathcal L_X(r)\leq\mathcal L_{\tilde X}(r)$ for all~$r\in[0,1]^2$.
In addition, if two egalitarian allocations are ranked in the Lorenz order, then they are equal. \end{continued}


\subsubsection{Proofs for section \ref{sec:egal}}


\begin{proof}[Proof of lemma~\ref{lemma:PQD}]
Let~$U\sim U[0,1]^2$. Let~$\tilde X_j:=\frac{\partial\psi}{\partial u_j}(U_1,U_2)$, $j=1,2$. Then~$(\tilde X_1,\tilde X_2)$ is distributed identically to~$(X_1,X_2)$. Since $\tilde{X}_{2}=\frac{\partial \psi }{\partial u_{2}}(U_{1},U_{2})$
is monotonically increasing in $U_{2}$, we have $U_{2}=\left( \frac{\partial
\psi }{\partial u_{2}}\right) ^{-1}(U_{1},\tilde{X}_{2})$. Hence 
\begin{eqnarray*}
\tilde{X}_{1} & = & \frac{\partial \psi }{\partial u_{1}}\left( U_{1},\left( 
\frac{\partial \psi }{\partial u_{2}}\right) ^{-1}(U_{1},\tilde{X}%
_{2})\right) .
\end{eqnarray*}%
Under the stated assumption, $\tilde{X}_{1}$ is increasing in $U_{1}$
and $\tilde{X}_{2}$. Since $\left( \tilde{X}_{1},\tilde{X}%
_{2}\right) \overset{d}{=}\left( X_{1},X_{2}\right) $, we have 
\begin{eqnarray*}
F_X\left( x_{1},x_{2}\right) &=&\mathbb{P} \left( \tilde{X}_{1}\leq x_{1},%
\tilde{X}_{2}\leq x_{2}\right) \\
&=&\mathbb{E}\bigg[ \mathbb{P} \left( \frac{\partial \psi }{\partial u_{1}}\left(
U_{1},\left( \frac{\partial \psi }{\partial u_{2}}\right) ^{-1}(U_{1},%
\tilde{X}_{2})\right) \leq x_{1},\tilde{X}_{2}\leq x_{2}\right)\bigg\vert \; U_{1}\bigg] \\
&=&\mathbb{E}\left[ \min \left\{ F_{1}\left( x_{1}|U_{1}\right) ,F_{2}\left(
x_{2}|U_{1}\right) \right\} \right] \\
&\geq &\mathbb{E}\left[ F_{1}\left( x_{1}|U_{1}\right) F_{2}\left( x_{2}|U_{1}\right) %
\right] ,
\end{eqnarray*}%
where~$F_i(\cdot\vert U_j)$ denotes the cumulative distribution function of~$X_i$ conditional on~$U_j$.
Now $F_{1}\left( x_{1}|U_{1}\right) $ is increasing in $U_{1}$, since 
\begin{eqnarray*}
F_{1}\left( x_{1}|U_{1}\right) &=&\mathbb{P} \left( \frac{\partial \psi }{\partial
u_{1}}(U_{1},U_{2})\leq x_{1}\bigg\vert \; U_{1}\right) \\
&=&\mathbb{P} \left( U_{2}\leq \left( \frac{\partial \psi }{\partial u_{1}}\right)
^{-1}(U_{1},x_{1})\bigg\vert \; U_{1}\right) \\
&=&\left( \frac{\partial \psi }{\partial u_{1}}\right) ^{-1}(U_{1},x_{1}).
\end{eqnarray*}%
Similarly $F_{2}\left( x_{2}|U_{1}\right) $ is increasing in $U_{1}$. We
conclude that $F_X\left( x_{1},x_{2}\right) \geq F_{1}\left( x_{1}\right)
F_{2}\left( x_{2}\right) $, see e.g. \cite{joe:97}. 
\end{proof}

\begin{proof}[Proof of proposition~\ref{prop:egal}]

The potential~$\psi$ of an egalitarian allocation satisfies~$\partial\psi/\partial u_1+p\partial\psi/\partial u_2=1+p$. Solutions are of the form
\begin{eqnarray*}
\psi_{(v,p)}(u_1,u_2)=u_1+u_2+v(pu_1-u_2).
\end{eqnarray*}
Convexity of~$\psi$ implies convexity of~$v$. The normalization
\begin{eqnarray*}
\int_0^1\!\!\!\int_0^1\nabla\psi_{(v,p)}(u_1,u_2)du_1du_2=(1,1)
\end{eqnarray*} 
implies
\begin{eqnarray*}
\int_0^1\!\!\!\int_0^1v^\prime(pu_1-u_2)du_1du_2=0.
\end{eqnarray*} 
The latter, in turn, implies
\begin{eqnarray*}
\int_0^1v(p-x)dx=\int_0^1v(x)dx.
\end{eqnarray*}
Call~$H_p$ the cdf of~$Z=pU_1-U_2$, where~$(U_1,U_2)\sim U[0,1]^2$. Call $F_1$ the cdf of~$\nabla_1\psi_{(p,v)}:=1+pv^\prime(Z)$, which is the first marginal of allocation~$X$.
Then
\begin{eqnarray*}
F_1(x) & = & \mathbb P \left( v^\prime(Z) \leq \frac{x-1}{p} \right) \\
&=& \mathbb P \left( Z \leq (v^\prime)^{-1}\left( \frac{x-1}{p} \right) \right) \\
& = & H_p \left( (v^\prime)^{-1}\left( \frac{x-1}{p} \right) \right).
\end{eqnarray*}
Now
\begin{eqnarray*}
F_1(x)  =  H_p \left( (v^\prime)^{-1}\left( \frac{x-1}{p} \right) \right) & \Rightarrow & (v^\prime)^{-1}\left( \frac{x-1}{p} \right) = H_p^{-1}\left(F_1(x)\right) \\
& \Rightarrow & \frac{x-1}{p} = v^\prime\left( H_p^{-1}\left(F_1(x)\right) \right) \\
& \Rightarrow & v^\prime(z) = \frac{F_1^{-1}(H_p(z))-1}{p}.
\end{eqnarray*}
Hence
\begin{eqnarray*}
v(z) & = & \int_0^x  \frac{F_1^{-1}(H_p(y))-1}{p}dy,
\end{eqnarray*}
as desired.
\end{proof}

\begin{proof}[Proof of lemma~\ref{lemma:egal-suff}]
A sufficient condition for assumption~\ref{ass:egal} is
\begin{eqnarray*}
 -\inf_{u_1,u_2} \frac{1}{p}\frac{\partial^2\psi_X(u_1,u_2)}{\partial u_1\partial u_2} \leq \min \left\{ \inf_{u_1,u_2} \frac{1}{p^2}\frac{\partial^2\psi_X(u_1,u_2)}{\partial u_1^2},\inf_{u_1,u_2}\frac{\partial^2\psi_X(u_1,u_2)}{\partial u_2^2}\right\}.
 \end{eqnarray*}
If we choose the optimal value of~$p$, i.e., 
\begin{eqnarray*}
p^2=\frac{ \inf_{u_1,u_2} \frac{\partial^2\psi_X(u_1,u_2)}{\partial u_1^2}}{\inf_{u_1,u_2}\frac{\partial^2\psi_X(u_1,u_2)}{\partial u_2^2} },
\end{eqnarray*}
we get the sufficient inequality
\begin{eqnarray*}
-\inf_{u_1,u_2} \frac{\partial^2\psi_X(u_1,u_2)}{\partial u_1\partial u_2} \leq \sqrt{ \inf_{u_1,u_2} \frac{\partial^2\psi_X(u_1,u_2)}{\partial u_1^2} \times \inf_{u_1,u_2}\frac{\partial^2\psi_X(u_1,u_2)}{\partial u_2^2} }
\end{eqnarray*}
as desired.
\end{proof}


\begin{proof}[Proof of example~\ref{ex:egal}]
Let~$\psi(u_1,u_2)=u_1+u_2+v(pu_1-u_2)+\tilde\psi(u_1,u_2)$, with~$v$ convex and twice continuously differentiable, and~$\psi$ ultramodular. We have, for~$j=1,2,$
\begin{eqnarray*}
\begin{array}{lll}
\frac{\partial^2\psi(u_1,u_2)}{\partial u_1^2} 
& = &
p^2v''(pu_1-u_2)+ 
\frac{\partial^2\tilde\psi(u_1,u_2)}{\partial u_2^2},\\ \\
\frac{\partial^2\psi(u_1,u_2)}{\partial u_1^2} 
& = &
v''(pu_1-u_2)+ 
\frac{\partial^2\tilde\psi(u_1,u_2)}{\partial u_2^2}.
\end{array}
\end{eqnarray*}
Also,
\begin{eqnarray*}
\begin{array}{lll}
\frac{\partial^2\psi(u_1,u_2)}{\partial u_1\partial u_2} 
& = &
-pv''(pu_1-u_2)+\frac{\partial^2\tilde\psi(u_1,u_2)}{\partial u_1\partial u_2}.
\end{array}
\end{eqnarray*}
Therefore
\begin{eqnarray*}
\begin{array}{rll}
\sup_{pu_1-u_2=z}\left\{-\frac{1}{p}\frac{\partial^2\psi(u_1,u_2)}{\partial u_1\partial u_2}\right\} 
& = &
\sup_{pu_1-u_2=z}\left\{v''(pu_1-u_2)-\frac{1}{p}\frac{\partial^2\tilde\psi(u_1,u_2)}{\partial u_1\partial u_2}\right\} \\ \\
& = & 
v''(z)-\inf_{pu_1-u_2=z}\left\{\frac{1}{p}\frac{\partial^2\tilde\psi(u_1,u_2)}{\partial u_1\partial u_2}\right\}  \\ \\
 \leq \;\;
v''(z) & + & \min\left\{ \inf_{pu_1-u_2=z}\frac{1}{p^2}\frac{\partial^2\tilde{\psi}(u_1,u_2)}{\partial u_1^2}, \inf_{pu_1-u_2=z}\frac{\partial^2\tilde{\psi}(u_1,u_2)}{\partial u_2^2} \right\} \\ \\
& = & 
\inf_{pu_1-u_2=z}\min\left\{ \frac{1}{p^2}\frac{\partial^2\psi(u_1,u_2)}{\partial u_1^2} ,\frac{\partial^2\psi(u_1,u_2)}{\partial u_2^2} \right\}.
\end{array}
\end{eqnarray*}

\end{proof}


\begin{proof}[Proof of ``property (7) continued'' in~\ref{P3}]
Define
\begin{eqnarray*}
v''(z) & := & \inf_{pu_1-u_2=z}\min\left\{
\frac{1}{p^2}\frac{\partial^2\psi_X(u_1,u_2)}{\partial u_1^2},\frac{\partial^2\psi_X(u_1,u_2)}{\partial u_2^2}\right\}.
\end{eqnarray*}
Under assumption~\ref{ass:egal},
\begin{eqnarray*}
v''(z) &\geq & \sup_{pu_1-u_2=z}\left\{-\frac{1}{p}
\frac{\partial^2\psi_X(u_1,u_2)}{\partial u_1\partial u_2}\right\}.
\end{eqnarray*}
Hence 
$\psi := \psi_X(u_1,u_2)-v(pu_1-u_2)$
is an ultramodular function. 
Applying the proof of property~(7) in section~\ref{sec:props}, we find that for all~$(r_1,r_2)\in[0,1]^2$,
\begin{eqnarray*}
\int_0^{r_1}\!\!\!\int_0^{r_2}\nabla\psi(u_1,u_2)du_1du_2\leq (r_1r_2,r_1r_2).
\end{eqnarray*}
Hence
\begin{eqnarray*}
\int_0^{r_1}\!\!\!\int_0^{r_2}\nabla\psi_X(u_1,u_2)du_1du_2\leq \int_0^{r_1}\!\!\!\int_0^{r_2}\nabla\psi_{(v,p)}(u_1,u_2)du_1du_2,
\end{eqnarray*}
where~$\psi_{(v,p)}(u_1,u_2):=v(pu_1-u_2)+u_1+u_2$ as desired.


We now show that egalitarian allocations do not dominate each other.
Suppose an egalitarian allocation~$X_{(v,p)}$ with potential~$v(pu_1-u_2)+u_1+u_2$ dominates an allocation~$X_{(\tilde v,\tilde p)}$ with potential~$\tilde v(\tilde pu_1-u_2)+u_1+u_2$.
Then
\begin{eqnarray*}
&&\left[
\begin{array}{cc} 
r_1r_2 +\int_0^{r_1}\!\!\! \int_0^{r_2} [1+p\tilde v'(\tilde pr_1-u_2)]\;du_1du_2 \\
r_1r_2 +\int_0^{r_1}\!\!\!\int_0^{r_2} [1-\tilde v'(\tilde pr_1-u_2)]\;du_1du_2
\end{array}
\right]
\\
&& \hskip100pt \leq  
\left[
\begin{array}{cc} 
r_1r_2 + \int_0^{r_1}\!\!\!\int_0^{r_2} [1+pv'(pr_1-u_2)]\;du_1du_2 \\
r_1r_2 +\int_0^{r_1}\!\!\!\int_0^{r_2} [1-v'(pr_1-u_2)]\;du_1du_2
\end{array}
\right].
\end{eqnarray*}
Hence, for all~$(r_1,r_2)\in[0,1]^2$,
\begin{eqnarray*}
\int_0^{r_1}\!\!\!\int_0^{r_2}\tilde v'(\tilde pu_1-u_2) & = & \int_0^{r_1}\!\!\!\int_0^{r_2}v'(pu_1-u_2),
\end{eqnarray*}
so that both allocations have the same Lorenz map, hence are equally distributed.
\end{proof}

\subsection{Uniform Majorization}
\label{app:UM}

In this section, we show the undesirable feature of uniform majorization detailed in section~\ref{sec:MRT}. \cite{GH:2012} show that the only rank dependent social evaluation functional (up to an affine transformation) that satisfies the uniform majorization principle of \cite{kolm:1977} is given in~(\ref{eq:UM-Gini}). We now show that it is unsuitable as a tool to measure multivariate inequality with the following two observations about bivariate allocations~$X$.
First, the following expression shows that~$S_{UM}$ only depends on ~$\Lmap_1(p,1)+\Lmap_2(1,p)$, hence, on very specific features of the dependence between the two components~$X_1$ and~$X_2$ of allocation~$X$.
\begin{eqnarray}
\label{eq:Gin1}
S_{UM}(X) & = & \int_{0}^{1} \Bigl( \left[ \Lmap_{1}\left( p,1\right) - p \right] + \left[ \Lmap_{2}\left( 1,p\right) -p \right] \Bigr) \,dp.
\end{eqnarray}

Second, and more troubling still, for any given fixed marginals for~$X_1$ and~$X_2$,~$S_{UM}$ is minimized when~$X_1$ and~$X_2$ are independent. Indeed, we show below that~$S_{UM}$ is always larger than minus the average of univariate Gini coefficients, which is its value when the marginals are independent.
\begin{eqnarray}
\label{eq:Gin2}
S_{UM}(X) & \geq & -\frac{1}{2}\left[ \, G\left( X_{1}\right)
+G\left( X_{2}\right) \, \right],
\end{eqnarray}
where~$G(X_1)$ and~$G(X_2)$ denote the classical scalar Gini index of components~$X_1$ and~$X_2$ respectively.

\begin{proof}[Proof of~(\ref{eq:Gin1})]
Let~$U$ be uniformly distributed on~$[0,1]^2$. Note that
\begin{eqnarray*}
S_{UM}(X) & = & 1-\mathbb E\left[ U_1\frac{\partial\psi}{\partial u_1}(U_1,U_2) \right]
- \mathbb E\left[ U_1\frac{\partial\psi}{\partial u_1}(U_1,U_2) \right].
\end{eqnarray*}
Now,
\begin{eqnarray*}
\mathbb{E}\left[ U_1 \frac{\partial \psi}{\partial u_1}(U_1,U_2) \right] 
& = & \int_0^1 \left[ \int_0^1 u_1 \frac{\partial \psi}{\partial u_1}(u_1,u_2)
\, du_1 \right] du_2 \\
& = & \int_0^1 \left[ \int_0^1 u_1 \,d\left(\frac{\partial \Lmap_1}{%
\partial u_2}(u_1,u_2)\right) \, \right] du_2,
\end{eqnarray*}
where the last equality follows from interchangeability of the order of
integration and~$\Lmap_1$ is the first
component of the Lorenz map.
Note that 
\begin{eqnarray*}
\frac{\partial \Lmap_1}{\partial r_2}(r_1,r_2) & = & \int_0^{r_1} \frac{\partial
\psi}{\partial u_1}(u_1,u_2) \, du_1
\end{eqnarray*}
and 
\begin{eqnarray*}
\frac{\partial}{\partial r_1}\left(\frac{\partial \Lmap_1(r_1,r_2)}{\partial r_2}%
\right) & = & \frac{\partial \psi}{\partial u_1}(r_1,r_2).
\end{eqnarray*}
Therefore 
\begin{eqnarray*}
\mathbb{E}\left[ U_1 \frac{\partial \psi}{\partial u_1}(U_1,U_2) \right] 
& = & \int_{0}^{1}\left( u_{1}\frac{\partial \Lmap_{1}}{\partial
u_{2}}(u_{1},u_{2}) \bigg\vert_{0}^{1}-\int_{0}^{1}\frac{\partial \Lmap_{1}}{%
\partial u_{2}}(u_{1},u_{2})\,du_{1} \right)\,du_{2} \\
& = & \int_{0}^{1}\left(\frac{\partial \Lmap_{1}}{\partial u_{2}}(1,u_{2})%
-\int_{0}^{1}\frac{\partial \Lmap_{1}}{\partial u_{2}}(u_{1},u_{2})\,du_{1}\right)%
\,du_{2} \\
& = & \int_{0}^{1}\frac{\partial \Lmap_{1}}{\partial u_{2}}(1,u_{2})%
\,du_{2}-\int_{0}^{1}\int_{0}^{1}\frac{\partial \Lmap_{1}}{\partial
u_{2}}(u_{1},u_{2})\,du_{2}\,du_{1} \\
& = & \Lmap(1,1)-\Lmap(1,0)-\int_{0}^{1}\Lmap_{1}(u_{1},1)-\Lmap_{1}(u_{1},0)\,du_{1} \\
& = & 1-\int_{0}^{1}\Lmap_{1}(u_{1},1)\,du_{1}.
\end{eqnarray*}
Similarly, we have $\mathbb{E}\left[ U_2 \frac{\partial \psi}{\partial u_2}(U_1,U_2) \right] = 1-\int_{0}^{1}\Lmap_{2}(1,u_{2})\,du_{1}$, as desired.
\end{proof}

\begin{proof}[Proof of~(\ref{eq:Gin2})]
The inequality follows from~$\Lmap_{1}\left( p,1\right) \geq L_{1}\left( p\right) $ and~$\Lmap_{2}\left( 1,p\right) \geq L_{2}\left( p\right)$. We now prove the latter.
Letting $\nabla \psi$ be the vector quantile function of $(X_1,X_2)$, note that since  $\frac{\partial \psi}{\partial u_1}$ pushes uniform measure on $[0,1]^2$ forward to law$(X_1)$, we can write 
$$
L_1(r_1) =\int_{\{u:\frac{\partial \psi}{\partial u_1}(u) \leq z_1(r_1)\}}\frac{\partial \psi}{\partial u_1}(u)du
$$
where $z_1(r_1)$ is the quantile of the random variable $X_1$.  Note that the area of the domain  $\{u:\frac{\partial \psi}{\partial u_1}(u) \leq z_1(r_1)\}$ of integration must be $r_1$. 
On the other hand, 
$$
\Lmap_1(r_1,1) =\int_0^{r_1}\int_0^1\frac{\partial \psi}{\partial u_1}(u)du_1du_2
$$
is an integral of the same function over a region with the same area.   Writing $A:=\left\{u:\frac{\partial \psi}{\partial u_1}(u) \leq z_1(r_1)\right\}$, we have $A=B\cup C$, where $B=A \cap ([0,r_1] \times [0,1])$ and $C=A \cap ((r_1,1] \times [0,1])$ and the union is disjoint. Similarly, $[0,r_1] \times [0,1] = B \cup D$ where $D =([0,r_1] \times [0,1]) \cap A^c$.   Note that the areas of $C$ and $D$ must be the same, $|C|=|D|$, and $\frac{\partial \psi}{\partial u_1}(u) \leq z_1(r_1)$ throughout $C$ while 
$\frac{\partial \psi}{\partial u_1}(u) > z_1(r_1)$ throughout $D$.  We have
\begin{eqnarray*}
L_1(r_1)&=&\int_{B}\frac{\partial \psi}{\partial u_1}(u)du +\int_{C}\frac{\partial \psi}{\partial u_1}(u)du\\
&\leq &\int_{B}\frac{\partial \psi}{\partial u_1}(u)du +z_1(r_1)|C|\\
&=&\int_{B}\frac{\partial \psi}{\partial u_1}(u)du +z_1(r_1)|D|\\
&\leq&\int_{B}\frac{\partial \psi}{\partial u_1}(u)du +\int_{D}\frac{\partial \psi}{\partial u_1}(u)du\\
&=&\Lmap_1(r_1,1).
\end{eqnarray*}
Note that this inequality holds for any dependence structure between $X_1$ and $X_2$.
\end{proof}

%


\section{Inequality Dominance based on the Inverse Lorenz Function}

We can also define an increasing inequality order based on the Inverse Lorenz Functions. Consider two allocations~$X$ and~$X^\prime$, with respective Inverse Lorenz Functions~$l_X$ and~$l_{X^\prime}$.
If~$l_X(z)\leq l_{X^\prime}(z)$ for some vector of shares~$z$, a larger proportion of the population commands the same share of resources in allocation~$X^\prime$ than in allocation~$X$. If this is true for any vector~$z$ of resource shares in~$[0,1]^d$, then, we say that allocation~$X^\prime$ is more unequal than allocation~$X$.

\begin{definition}
\label{def:order}
An allocation~$X^\prime$ is said to be more unequal in the weak Lorenz order than an allocation~$X$ if~$l_{X^\prime}(z )\geq l_X(z)$ for all~$z\in[0,1]^d$. We denote this~$X\succcurlyeq _lX^\prime$.
\end{definition}

The relation~$X\succcurlyeq_l X^\prime$ is equivalent to lower orthant dominance of the random vector~$\Lmap_X(U)$, with~$U\sim U[0,1]^d$, over~$\Lmap_{X^\prime}(U)$ (see Section~3.8 of \cite{MS:2002}). 
In the scalar case, the orderings of definitions~\ref{def:s-order} and~\ref{def:order} both coincide with the traditional Lorenz ordering. In higher dimensions, however, the equivalence may not hold\footnote{We have not been able to either prove the equivalence or find a counterexample.}. Nonetheless,
as the name indicates, the weak Lorenz inequality order of definition~\ref{def:order} is weaker than the Lorenz order of definition~\ref{def:s-order}, as we show in proposition~\protect\ref{prop:order}.

\begin{proposition}
\label{prop:order}
An allocation~$X^\prime$ is more unequal in the weak Lorenz order than an allocation~$X$, i.e., $X\succcurlyeq _lX^\prime$ (definition~\ref{def:order}) if~$X^\prime$ is more unequal in the Lorenz order, i.e., $X\preccurlyeq _{\Lmap}X^\prime$ (definition~\ref{def:s-order}).
\end{proposition}

\begin{proof}[Proof of proposition~\protect\ref{prop:order}]
$\tilde X\succcurlyeq _{\Lmap}X$ is equivalent to first order stochastic dominance of~$%
\Lmap_X(U)$ over~$\Lmap_{\tilde X}(U)$, where~$U\sim U[0,1]^d$ (see Section~6.B page~266
of \cite{SS:2007}). Hence,~$\tilde X\succcurlyeq _{\Lmap}X$ implies~$\mathbb{P}(\Lmap_X(U)\in S)\leq \mathbb{P}%
(\Lmap_{\tilde X}(U)\in S)$ for any lower set~$S$, so that~$\tilde X\succcurlyeq _{\Lmap}X$ implies~$\tilde X\succcurlyeq_lX$, given that the sets~$[0,z]$ are lower sets.
\end{proof}

\section{Proofs of the main results}

\label{sec:proofs}

Recall that in the proofs, we shall use the notation~$Q_X=\nabla\psi_X$ for the vector quantile of a random vector~$X$ and call convex function~$\psi_X$ 
the transport potential associated with the distribution of~$X$. See section~\ref{app:VQ} for details. In this section, we omit the~$X$ subscript of~$\psi_X$ for notational compactness.


\begin{proof}[Proof of proposition~\protect\ref{prop:char}]
In case~$d=2$, the off diagonal elements of the Jacobian of~$\Lmap(r)$ are~$\psi(r_1,r_2)-%
\psi(r_1,0)$ and~$\psi(r_1,r_2)-\psi(0,r_2)$. From the latter, by
differentiation, we can recover~$\nabla\psi(r_1,r_2)$. The result then follows from the fact that~$\nabla\psi$ characterizes~$P_X$, see for instance \cite{CGHH:2017}. The result extends straightforwardly to~$d>2$.
\end{proof}


\begin{proof}[Proof of proposition~\ref{prop:+regdep}]
We only need to show the result for one component of the Lorenz map and the others
follow with similar reasoning. We have for the first component 
\begin{eqnarray*}
\Lmap_1(r)&=& \int_0^{r_d}\cdots\int_0^{r_2}\int_0^{r_1} \frac{\partial \psi}{\partial u_1}%
(u_1,u_2,...,u_d)du_1du_2 ...du_d\\
&=&\int_0^{r_d}\cdots\int_0^{r_2} [\psi (r_1,u_2,...,u_d) - \psi (0,u_2,...,u_d)]du_2...du_d \\
&\leq & \int_0^{r_d}\cdots\int_0^{r_2} r_1[\psi (1,u_2,...,u_d) - \psi (0,u_2,...,u_d)]du_2...du_d \text{, by
convexity } \\
&=&r_1\int_0^{r_d}\cdots\int_0^{r_2}\int_0^{1} \frac{\partial \psi}{\partial u_1}(u_1,u_2,...,u_d)du_1
du_2...du_d
\end{eqnarray*}

Now define $H(r_2,...,r_d) = \int_0^{r_d}\cdots\int_0^{r_2}\int_0^{1} \frac{\partial \psi}{\partial u_1%
}(u_1,u_2,...,u_d)du_1 du_2...du_d$. Then $$
\frac{\partial H}{\partial r_2}(r_2,...,r_d)=\int_0^{r_d}\cdots\int_0^{r_3}\int_0^1%
\frac{\partial \psi}{\partial u_1}(u_1,r_2,u_3,...,u_d)du_1du_3...du_d
$$
is monotone increasing in $r_2$, since it is the integral of the functions $r_2 \mapsto \int_0^1\frac{\partial \psi}{\partial u_1}(u_1,r_2,u_3,...,u_d)du_1$, which are monotonically increasing by assumption. 
Therefore,  $%
r_2 \mapsto H(r_2,...r_d)$ is convex and so $H(r_2,r_3,...,r_d) \leq H(0,r_3,...,r_d) +r_2(H(1,r_3,...,r_d)-H(0,r_3,...,r_d))$. Note that $H(0,r_3,...,r_d)
= 0$ as an integral over a degenerate interval, so $H(r_2,r_3,...,r_d)\leq r_2H(1,r_3,...,r_d)$.  A very similar argument yields $H(1,r_3,...,r_d) \leq r_3H(1,1,r_4,...,r_d)$, so that $H(r_2,r_3,...,r_d)\leq r_2r_3H(1,1,r_4,...,r_d)$, and, iterating in this way, we eventually obtain,
$$
H(r_2,r_3,...,r_d)\leq r_2r_3\cdots r_dH(1,1,...,1).
$$
We then conclude

$$
\Lmap_1(r) \leq
r_1r_2r_3\cdots r_d\int_0^{1}\cdots\int_0^1%
\frac{\partial \psi}{\partial u_1}(u_1,u_2,...,u_d)du_1du_2du_3...du_d.
$$
The integral $\int_0^{1}\cdots\int_0^1\frac{\partial \psi}{\partial u_1}(u_1,u_2,...,u_d)du_1du_2du_3...du_d
$ is $1$, as the expected value of the normalized $X_1$, and so we obtain the desired result.
\end{proof}


\begin{proof}[Proof of proposition~\ref{prop:eval}]
    
We need to show that a social evaluation functional~$S$ of the form~(\ref{eq:rank}) satisfies
\begin{eqnarray*}
   X\succcurlyeq_{\mathcal L}X^\prime \Rightarrow S(X)\geq S(X^\prime)
\end{eqnarray*} 
if and only if~$\phi$ is of the form~(\ref{eq:weights}). To show this, we note that 
\begin{eqnarray*}
    S(X)&=&\int_{[0,1]^d}\phi_m(u)^\top \nabla\psi_{\tilde X}(u) \, du\\
     &=&\sum_{j=1}^d \int_{[0,1]^d}\phi_{m,j}(u) \nabla\psi_{\tilde X,j}(u) \, du\\
     &=&\sum_{j=1}^d \int_{[0,1]^d}\left[ \int_{[0,1]^d}\mathds 1\{u\leq r\}\, dm_j(r)\right]\nabla\psi_{\tilde X,j}(u) \, du\\
     &=&\sum_{j=1}^d \int_{[0,1]^d}\left[ \int_{[0,1]^d}\mathds 1\{u\leq r\}\nabla\psi_{\tilde X,j}(u) \, du\right]\, dm_j(r)\\
     &=&\sum_{j=1}^d \int_{[0,1]^d}\mathcal{L}_j(r)\, dm_j(r).
\end{eqnarray*} 
So sufficiency holds for all non-negative measures. Necessity follows by taking one of the measures as a degenerate measure and the other measures as trivial measures. 
\end{proof}


\begin{proof}[Proof of proposition~\ref{prop:trans}]
    It suffices to see that~$\mathcal L_{X}(r)\geq\mathcal L_{X^\prime}(r)$ if and only if
    \begin{eqnarray*}
        \int_{[0,1]^d}\;\mathds 1\{u\leq r\}\left(\nabla\psi_{\tilde X}(u)-\nabla\psi_{\tilde X^\prime}(u)\right)\;du\geq0.
    \end{eqnarray*}
\end{proof}


\begin{proof}[Proof of proposition~\ref{prop:MRT}]
Suppose~$X^\prime$ is obtained from~$X$ through an MRT, so that there is a supermodular and component-wise convex function~$\psi$, such that $\nabla\psi_{\tilde X^\prime}(u)=\nabla\psi_{\tilde X}(u)+\nabla\psi(u)$ holds for all~$u\in[0,1]^d$.
We want to show ~$\mathcal L_{X'} \leq \mathcal L_X$, i.e., $\int\!\!\!\int^r\nabla\psi(u)du \leq 0$.
Consider the first component:
\begin{eqnarray*}
\int_0^{r_1}\int_0^{r_2}...\int_0^{r_d}\frac{\partial \psi(u)}{\partial u_1}du & = & \int_0^{r_2}...\int_0^{r_d}[\psi(r_1,u_2,...,u_d)-\psi(0,u_2,...,u_d)]du_2...du_d \\
& \leq & r_1 \int_0^{r_2}...\int_0^{r_d}[\psi(1,u_2,...,u_d)-\psi(0,u_2,...,u_d)]du_2...du_d\\
&=& r_1\int_0^{1}\int_0^{r_2}...\int_0^{r_d}\frac{\partial \psi(u)}{\partial u_1}du_1u_2....du_d 
\end{eqnarray*}
by convexity.

Now, note that supermodularity implies the function $\int_0^{r_2}...\int_0^{r_d}\frac{\partial \psi(u)}{\partial u_1}du_2....du_d $ is convex with respect to $r_2$, and it is clearly $0$ when $r_2=0$.  Therefore,
$$
\int_0^{r_2}...\int_0^{r_d}\frac{\partial \psi(u)}{\partial u_1}du_2....du_d \leq r_2\int_0^{1}\int_0^{r_3}...\int_0^{r_d}\frac{\partial \psi(u)}{\partial u_1}du_2....du_d.
$$
Similarly, 
$$
\int_0^{r_3}...\int_0^{r_d}\frac{\partial \psi(u)}{\partial u_1}du_3....du_d \leq r_3 \int_0^{1}\int_0^{r_4}...\int_0^{r_d}\frac{\partial \psi(u)}{\partial u_1}du_3....du_d.
$$
Continuing iteratively in this way, we eventually obtain:
\begin{eqnarray*}
\int_0^{r_1}\int_0^{r_2}...\int_0^{r_d}\frac{\partial \psi(u)}{\partial u_1}du &\leq& r_1r_2...r_d\int_0^{1}\int_0^{1}...\int_0^{1}\frac{\partial \psi(u)}{\partial u_1}du\\
&=&r_1r_2...r_d[\mathcal (L_{X'})_1(1,1,...1) - \mathcal (L_X)_1(1,1,...1)]=0  
\end{eqnarray*}

Similar reasoning applies to the other components, and the result follows.
\end{proof}


\begin{proof}[Proof of proposition~\ref{prop:curves}]
(1) Let~$l_X(z_1,z_2)=\alpha$. Since~$l_X$ is a cdf, hence non decreasing in both arguments, then~$z_2:=l_X^{-1}(z_1;\alpha)=\inf\{\zeta; \alpha\leq l_X(z_1,\zeta)\}$ is non increasing in~$z_1$ and non decreasing in~$\alpha$. (2) See Claim~1 in \cite{BT:1966}.
\end{proof}


%

\begin{proposition}(Rawlesian limits of social welfare functions)
    \label{prop:S-Gini}
For a random variable $X$ in $\mathbb{R}$, define, as in Section \ref{subsect: S-Gini}, the social welfare function:
$$
S_\delta(X):=\delta(\delta-1) \int_{[0,1]}(1-r)^{\delta-2}\mathcal{L}_X(r)\, dr.
$$
Then $\lim_{\delta \rightarrow \infty}S_\delta(X) =Q_X(0+) :=\lim_{u \rightarrow 0^+}Q_X(u)$.
\end{proposition}

\begin{proof}[Proof of proposition~\ref{prop:S-Gini}]
  Integrating by parts implies 
  $$
  S_\delta(X)=\delta\int_0^1(1-r)^{\delta-1} Q_X(r)dr=\delta\sum_{i=1}^N\int_{(i-1)/N}^{i/N}(1-r)^{\delta-1} Q_X(r)dr
  $$
  for any positive integer $N$.  As the quantile $Q_X(\cdot)$ is increasing, we then have
  \begin{eqnarray*}
      S_\delta(X) &\leq& \delta\sum_{i=1}^NQ_X(i/N)\int_{(i-1)/N}^{i/N}(1-r)^{\delta-1} dr\\
      &=&\sum_{i=1}^NQ_X(i/N)[-(1-r)^\delta]|^{i/N}_{(i-1)/N}\\
      &=&Q_X(1/N)(1-(1-1/N)^\delta) +\sum_{i=2}^NQ_X(i/N)[(1-\frac{i-1}{N})^\delta-(1-\frac{i}{N})^\delta ]\\
      &=&Q_X(1/N) +M(\delta),
  \end{eqnarray*}
  where the function $M(\delta):=-Q_X(1/N)(1-1/N)^\delta +\sum_{i=2}^NQ_X(i/N)[(1-\frac{i-1}{N})^\delta-(1-\frac{i}{N})^\delta ]$ tends to $0$ as $\delta \rightarrow \infty$.  

  Similarly, we have
      \begin{eqnarray*}
      S_\delta(X) &\geq& \delta\sum_{i=1}^NQ_X((i-1)/N)\int_{(i-1)/N}^{i/N}(1-r)^{\delta-1} dr\\
      &=&Q_X(0+) +m(\delta)
      \end{eqnarray*}
      where $m(\delta) :=-Q_X(0+)(1-1/N)^\delta +\sum_{i=2}^NQ_X((i-1)/N)[(1-\frac{i-1}{N})^\delta-(1-\frac{i}{N})^\delta ]$ tends to $0$ as $\delta \rightarrow \infty$.  
      We therefore have
      $$
      Q_X(0+) \leq \liminf_{\delta \rightarrow \infty}S_\delta(X) \leq \limsup_{\delta \rightarrow \infty}S_\delta(X) \leq Q_X(1/N).
      $$
      As this holds for every integer $N$, and $\lim_{N \rightarrow \infty} Q_X(1/N)= Q_X(0+)$, the result follows.
\end{proof}
The following is a multivariate extension:
\begin{proposition}\label{prop: multivariate s-gini}
Let $X$ be a $d$ dimensional random variable and $Q_X$ its multivariate quantile.  Consider the $j$th term in the sum defining the multivariate S-Gini in Section \ref{subsect: S-Gini}:

$$
S^j_\delta(X):=\delta_j(\delta_j-1) \int_{[0,1]^d}(1-r_j)^{\delta_j-2}\mathcal{L}_j(r)\, dr.
$$
Then
$$
\lim_{\delta_j \rightarrow \infty} S_\delta^j(X) = \int_{[0,1]
^{d-1} }\Pi_{i=1, i\neq j}^d(1-u_i)Q_j(u_1,u_2,...u_{j-1},0+,u_{j+1},...u_d)du_1du_2...du_{j-1}du_{j+1}...du_d,
$$
where $Q_j(u_1,u_2,...u_{j-1},0+,u_{j+1},...u_d) = \lim_{u_j \rightarrow 0^+} Q_j(u)$.
\end{proposition}
Note that the expression is a measure of inequality among agents with lowest rank in the $j$th component, with weighting functions in the other rank variables as in the standard Gini index.


\begin{proof}
    The proof is similar to the argument above.  Without loss of generality, we assume $j=1$. 
 Integrating by parts, we have, for any positive integer $N$:
    \begin{eqnarray*}
S^1_\delta(X)&=&\delta_1\int_{[0,1]^d}(1-r_1)^{\delta_{1}-1}\Big[\int_0^{r_2}\cdots \int_0^{r_d} Q_1(r_1,u_2,...,u_d)du_2...du_d\Big]dr\\
&=& \delta_1\int_{[0,1]^{d-1}}\Big( \sum_{i=1}^N\int_{(i-1)/N}^{i/N}(1-r_1)^{\delta_{1}-1}\Big[\int_0^{r_2}\cdots \int_0^{r_d} Q_1(r_1,u_2,...,u_d)du_2...du_d\Big]dr_1\Big)dr_2...dr_d\\
    \end{eqnarray*}
    Using monotonicity of $Q$ in the first argument, we then have:
   \begin{eqnarray}\label{eqn: upper bound on S}
     &\delta_1\sum_{i=1}^N\int_{(i-1)/N}^{i/N}(1-r_1)^{\delta_{1}-1}\Big[\int_0^{r_2}\cdots \int_0^{r_d} Q_1(r_1,u_2,...,u_d)du_2...du_d\Big]dr_1\nonumber\\
     &\leq \delta_1\sum_{i=1}^N\Big[\int_0^{r_2}\cdots \int_0^{r_d} Q_1(i/N,u_2,...,u_d)du_2...du_d\Big]\int_{(i-1)/N}^{i/N}(1-r_1)^{\delta_{1}-1}dr_1\nonumber\\
&=\sum_{i=1}^N\Big[\int_0^{r_2}\cdots \int_0^{r_d} Q_1(i/N,u_2,...,u_d)du_2...du_d\Big]\Big[(1-\frac{i-1}{N})^{\delta_1} - (1-\frac{i}{N})^{\delta_1}\Big]\nonumber\\
&=\int_0^{r_2}\cdots \int_0^{r_d} Q_1(1/N,u_2,...,u_d)du_2...du_d +M(\delta_1, r_2,...,r_d)
    \end{eqnarray}
where $M(\delta_1,r_2,...,r_d)$ converges monotonically to $0$ as $\delta_1 \rightarrow \infty$.

Therefore, using the monotone convergence theorem, we have 
\begin{eqnarray*}
\limsup_{\delta_1 \rightarrow \infty}S_\delta^1(X) &\leq& \int_{[0,1]^{d-1}} \Big(\int_0^{r_2}\cdots \int_0^{r_d} Q_1(1/N,u_2,...,u_d)du_2...du_d \Big)dr_2...dr_d\\
&=&\int_{[0,1]^{d-1}} \Pi_{i=2}^d(1-u_j)Q_1(1/N,u_2,...,u_d)du_2...du_d 
\end{eqnarray*}
As this holds for all $N$, and $Q_1$ is monotone in the first argument, we can take the limit as $N \rightarrow \infty$ and apply the monotone convergence theorem again to obtain
$$
\limsup_{\delta_1 \rightarrow \infty}S_\delta^1(X) \leq \int_{[0,1]^{d-1}} \Pi_{i=2}^d(1-u_j)Q_1(0+,u_2,...,u_d)du_2...du_d 
$$
A very similar argument to the one used to derive \eqref{eqn: upper bound on S} then implies:

\begin{eqnarray*}
&\delta_1\sum_{i=1}^N\int_{(i-1)/N}^{i/N}(1-r_1)^{\delta_{1}-1}\Big[\int_0^{r_2}\cdots \int_0^{r_d} Q_1(r_1,u_2,...,u_d)du_2...du_d\Big]dr_1\nonumber\\
&\geq \int_0^{r_2}\cdots \int_0^{r_d} Q_1(0+,u_2,...,u_d)du_2...du_d +m(\delta_1, r_2,...,r_d)
\end{eqnarray*}
where $m(\delta)$ converges monotonically to $0$ as $\delta_1 \rightarrow \infty$.  Proceeding as above gives
$$
\liminf_{\delta_1 \rightarrow \infty}S_\delta^1(X) \geq \int_{[0,1]^{d-1}} \Pi_{i=2}^d(1-u_j)Q_1(0+,u_2,...,u_d)du_2...du_d 
$$
which implies the desired conclusion.
\end{proof}

}


\bibliographystyle{abbrvnat}
\bibliography{Lorenz}

\end{document}